\newtheorem{definition}{Definition}
\newtheorem{theorem}{Theorem}
\newtheorem{proposition}{Proposition}
\newtheorem{corollary}{Corollary}
\newtheorem{remark}{Remark}
\newtheorem{example}{Example}
\newtheorem{claim}{Claim}
\newcommand{\lfteqn}{\begin{eqnarray} \begin{array}{lllllll}}
\newcommand{\ndeqn}{\end{array} \nonumber \end{eqnarray}}
\newcommand{\Lfteqn}{\begin{eqnarray} \begin{array}{lllllll}}
\newcommand{\Ndeqn}{\end{array}  \end{eqnarray}}
\begin{document}

\title{\LARGE \bf Online State Estimation for Supervisor Synthesis in Discrete-Event Systems with Communication Delays and Losses
}

\author{Yunfeng~Hou,  ~Yunfeng~Ji,~\IEEEmembership{Member,~IEEE},  ~Gang~Wang, ~\IEEEmembership{Member,~IEEE}, ~Ching-Yen~Weng,  and  ~Qingdu~Li

\thanks{Yunfeng Hou, Yunfeng Ji, Gang Wang, and Qingdu Li are with the Institute of Machine Intelligence, University of Shanghai for
Science and Technology, Shanghai 200093, China. E-mail: yunfenghou@usst.edu.cn, ji$\_$yunfeng@usst.edu.cn, 2010wanggang@gmail.com, and liqd@usst.edu.cn.

Ching-Yen Weng ({weng0025@e.ntu.edu.sg}) is with the Robotics Research Centre, Nanyang Technological University, Singapore.
}}


\maketitle


\begin{abstract}
In the context of networked discrete-event systems (DESs), communication delays and losses exist between the plant and the supervisor for
observation and between the supervisor and the actuator for control.
In this paper, we first introduce a new framework for supervisory control of networked DESs.
Under the introduced framework, we address the state estimation problem for supervisor synthesis of networked DESs with both communication delays and losses.
The estimation algorithm considers the effect of the controls imposed on the system.
Additionally, the estimation algorithm is based on the control decisions available up to the moment, and all the future control decisions are assumed to be unknowable.
Two notions, called ``observation channel configuration'' for tracking observation delays and losses and ``control channel configuration'' for tracking control delays and losses, are defined.
Then, we introduce an online approach for state estimation of the controlled system.
Compared with the existing approach, the proposed approach under the introduced framework can estimate the state of the controlled system more accurately.
As an application of  the proposed approach, we finally show that the existing methods can be easily applied to synthesize maximally permissible and safe networked supervisors.
\end{abstract}

\begin{IEEEkeywords}
Networked DESs,\ Supervisor Synthesis,\ State Estimation,\ Communication Delays and Losses.
\end{IEEEkeywords}

\section{Introduction}

\IEEEPARstart{T}{he} \textcolor{blue}{dynamics of DESs is driven by asynchronous event sequences. 
A state estimate of DESs is defined as the set of  (discrete) states such that the controlled system or the closed-loop system may be in after observing a sequence of observable events.
{Supervisor synthesis} is an important problem in the supervisory control of DESs  and has drawn much attention in the DES community  over the past decades \cite{ramadge1987supervisory,lin88is,heymann94deds,1996Centralized,2002Effective,2015Relative,yin16tac1,yin16tac2, yin17tac, yin18tac,alves17tac, ji21tac}.
In {supervisor synthesis}, a supervisor or controller is desired to dynamically enable or disable event occurrences so that state estimates of the controlled system  always satisfy the property of interest \cite{heymann94deds, yin16tac1,yin16tac2, yin17tac, yin18tac}.
Thus, state estimation is crucial in supervisor synthesis for determining a valid control action after each new event observation.}


Recent advances in WLAN-based and cellular-based communication systems have enabled us to connect a supervisor and a plant via a shared communication network (networked DESs).
Such a networked structure provides more flexible and agile ways to control a DES.
For example, it allows the supervisor to be an edge computing node, which means it can share the computing resource with other edge computing nodes.
However, communication delays and losses existing in networks pose significant challenges to attaining accurate state estimations  when trying to solve the {supervisor synthesis problem}.
Most of the current works implement state estimations of networked DESs based on the open-loop system without using the information of the control actions executed by the actuator \cite{wang15cdc, shu15tac,shu17tac1,alves19acc, hou20tac,lin20tcns,zhou21tase,shu17tac2,lin2014control}.
When the state estimates calculated in \cite{wang15cdc, shu15tac,shu17tac1,alves19acc, hou20tac,lin20tcns, zhou21tase,shu17tac2,lin2014control} are used for supervisor synthesis, they may contain some states that have been prevented from being reached by the control.
One exception is the work of \cite{liu21tac}, where a novel online state estimation algorithm  was proposed by taking the information of the control decision's history into consideration.
{Nevertheless, as will be shown in Section V, the state estimate calculated in \cite{liu21tac} is an overapproximation of the actual state estimate of the controlled system and may contain states that the controlled system  never reaches.
Additionally, this approach considers only control delays.}

In this paper, we generalize the assumption of control delays in \cite{liu21tac} into a more general case of communication delays and losses and study how to obtain an accurate state estimate of the controlled system under the observation delays and losses and the control delays and losses.
Specifically, this paper assumes that:  (1) both the control channel and the observation channel satisfy first-in-first-out (FIFO);
(2) the observation delays and the control delays are upper bounded by $N_o$ and $N_c$ event occurrences (observable or not), respectively;
(3) the numbers of consecutive observation losses  and  consecutive control losses are no larger than $N_{l,o}$ and $N_{l,c}$, respectively.
\textcolor{blue}{It is worth noting that when there exists only control delays and losses, the observation of the supervisor to a string is deterministic, and we can immediately determine which control decision  has been made after the occurrence of this string.
However, when observation delays and losses exist, the observation of the supervisor to a string is nondeterministic and varies with the different observation delays and losses.
For different observations, the supervisor may make different control decisions.
An event after this string may be allowed to occur for some of these control decisions, but not be allowed to occur for the other control decisions.
We must consider all the possibilities, which complicate the state estimation problem.}

To obtain an accurate state estimate of the controlled system, we must first specify an accurate ``dynamics'' of the controlled system.
To this end, we introduce a new framework for supervisory control of networked DESs.
In this framework, a communication automaton is constructed to model the interaction process between the supervisor and the plant over the control channel and the observation channel under communication delays and losses.
Each state of the communication automaton records: (i)
the state that the plant is in, (ii) the state that the networked supervisor is in, (iii) the sequence of observable events that have occurred but still need to be delivered to the supervisor, (iv) the number of consecutive observation losses, (v) the control action in use,
(vi) the sequence of control actions that have been issued but are still delayed at the control channel, and (vii) the number of consecutive control losses.
States of the communication automaton are updated when one of the following behaviors occurs (represented as a special event occurrence): (a) a new event occurs, (b) a new observable event is communicated, (c) a new control action is executed, (d) an observation loss occurs, and (e) a control loss occurs.
The dynamics of the controlled system can be simply ``decoded'' from sequences that can be generated by the communication automaton.

Next, we discuss how to produce online state estimates of the controlled system under communication delays and losses.
Specifically, for tracking states of the observation channel, we introduce the  ``observation channel configuration'', which consists of two parts: (i) a sequence of event-integer pairs that is used to track the delayed observable event occurrences and the number of observation delays, and (ii) an integer that is used to track the number of consecutive observation losses.
On the other hand, for tracking the states of the control channel, we introduce the ``control channel configuration'', which consists of three parts: (a) an admissible control action that is taking effect, (b) a sequence of command-integer pairs that is used to track the delayed control actions and the number of control delays, and (c) an integer that is used to track the number of consecutive control losses.
By incorporating the control channel configurations and the observation channel configurations into the states of the plant, we can obtain a triplet.
We call such a triplet an augmented state (the plant state is augmented with the observation channel configuration and the control channel configuration).
An online approach is proposed for updating the augmented state estimates upon each new observation, which can be used to estimate the states of the controlled system.
Compared with \cite{liu21tac}, we show that the proposed approach can estimate states of the controlled system more accurately, even if there are only control delays.


\textcolor{blue}{Based on a game structure called \textsl{Bipartite Transition System} (BTS) \cite{yin16tac1}, a general approach for solving a set of important supervisor synthesis problems was proposed by Yin and Lafortune in \cite{yin16tac1,yin16tac2}.
Benefiting from the state estimation algorithm developed in this paper, it becomes a reality that a BTS can be extended to its networked counterpart \textsl{Networked Bipartite Transition System} (NBTS) when communication delays and losses exist.
Using the NBTS,  techniques developed in \cite{yin16tac1,yin16tac2} can be easily extended to solve the corresponding problems with the communication delays and losses, which can effectively simplify the research.
As an example, we finally show how to construct an NBTS using the proposed state estimation approach and extend the techniques developed in \cite{yin16tac1,yin16tac2} to synthesize a maximally permissible networked supervisor while ensuring that the safety of the controlled system is satisfied.}

The rest of this paper is organized as follows.
Section II presents some preliminary concepts and the required assumptions in this paper. Section III specifies the language that may be generated by the controlled system. 
An online procedure for estimating states of the controlled system is presented in Section IV.
Section V compares the proposed approach with the previous approach.
Section VI shows the application of the proposed approach. Section VII concludes this paper.

Due to space limitations, some proofs are omitted in this paper and can be found in the Appendix.

\section{Preliminaries}

\subsection{Preliminaries}
A DES is modeled by a deterministic finite-state automaton 
$
G=(Q, \Sigma, \delta, q_{0}),
$
where $Q$ is the finite set of states; $\Sigma$ is the finite set of events; $\delta: Q\times \Sigma \rightarrow Q$ is the transition function; and $q_{0}$ is the initial state.
$\delta$ is extended to $Q \times \Sigma^{\ast}$ in the usual way.
``$!$'' means ``is defined'', and ``$\not !$'' means ``is not defined''.
$\mathscr{L}(G)$ is the language generated by $G$. 
$\Sigma$ is partitioned into the set of controllable events
$\Sigma_c$ and the set of uncontrollable events $\Sigma_{uc}$.
$\Sigma$ is also partitioned into the set of observable events
$\Sigma_{o}$ and the set of unobservable events $\Sigma_{uo}$. 
The natural projection $P: \mathscr{L}(G) \rightarrow \Sigma_{o}^{\ast}$ is defined as $P(\varepsilon)=\varepsilon$ and, for all $s, s\sigma \in\mathscr{L}(G)$, $P(s\sigma)=P(s)\sigma$ if $\sigma \in \Sigma_{o}$, and  $P(s\sigma)=P(s)$ otherwise.
Given automata $G_{1}$ and $G_{2}$, we say that $G_1$ is a subautomaton of $G_2$, denoted by $G_1 \sqsubseteq G_2$, if $G_1$ and $G_2$ have the same initial state and $G_1$ is a subgraph of $G_2$. 


Given a $s$, let $\overline{\{s\}} = \{s': (\exists s'') s = s's''\}$ be the set of all prefixes of $s$. 
The length of a string $s$ is denoted by $|s|$.
The prefix closure of a language $L\subseteq \Sigma^{\ast}$ is denoted as $\overline{L}$. $L$ is prefix-closed if $L=\overline{L}$. 
We only consider prefix-closed languages in this paper. 
$\varepsilon$ denotes the empty string. 
Given a $s=\sigma_1\sigma_2\cdots\sigma_k$, we write $s^i = \sigma_1 \cdots \sigma_i$ for $i = 1,\ldots,k$, and $s^0 = \varepsilon$.
$\mathbb{N}$ is the set of natural numbers.
Given $a,b\in \mathbb{N}$, let $[a,b]$ be the set of natural numbers between $a$ and $b$.
The cardinality of a set  $Z$ is denoted by $|Z|$. 
Given a $n \in \mathbb{N}$, let $Z^{\le n}$ be the set of sequences (consisting of elements in $Z$) with a length no larger than $n$.


{We consider a networked DES in this paper.}  
Due to the network characteristics, communication delays and losses exist for both
control and observation. We make the following assumptions on the networked DESs:
(i) Both the control channel and the observation channel satisfy the FIFO property, i.e., the observable event occurrences are delivered to the supervisor in the same order as they were generated, and the control actions are executed by the actuator of the plant in the same order as they were issued.
(ii) The communication delays in the observation (regarded as observation delays) are nondeterministic but are upper bounded by $N_o$ events, i.e., when an event occurs, the system can generate no more than $N_o$ event occurrences before this event is communicated to the supervisor. The communication delays in the control (regarded as control delays) are also nondeterministic but are upper bounded by $N_c$ events, i.e., before an issued control action is executed, the system can generate no more than $N_c$ event occurrences;
(iii) The consecutive losses of the observable event occurrences are assumed to be no larger than $N_{l,o}$, i.e., before a new observable event is communicated (observed), there are at most $N_{l,o}$ consecutive observation losses, and the consecutive losses of the control actions are assumed to be no larger than $N_{l,c}$, i.e., before a new control action is executed, there are at most $N_{l,c}$ consecutive control losses.
(iv) The actuators always implement the most recently received action, and the initial control action can be executed without any delays and losses.

\begin{remark}
We assume that both the control channel and the observation channel satisfy FIFO, since it is often the case that there is only one communication channel from the plant to the supervisor and from the supervisor to the plant.
\textcolor{blue}{This is slightly different from \cite{rashidinejad18wodes,alves20tac,tai22ac,lin22ac,zhu22icca}, where FIFO is not required for the communication channels.}
In addition, since the communication losses are usually small, we assume that both the consecutive observation losses and the consecutive control losses have upper bounds.
The same assumption can be found in \cite{zhu19cdc,liu21cdc}.
Meanwhile, the delays are measured by the number of event occurrences (observable or not), which is different from \cite{rashidinejad18wodes,alves20tac,zhao17tcns,park08ac,miao19cdc,tai22ac} where time is explicitly considered. 
\textcolor{blue}{We also assume that the initial control action has been deployed in the execution module of the plant before it starts to work.
Thus, when the plant is initialized, the initial control action can be executed without any delays and losses.}
\end{remark}

As in \cite{alves20tac}, the networked supervisor is defined as a pair $S=(A,\gamma)$ such that $A=(X,\Sigma_o,\xi, x_0)$\footnote{Here, we assume that $X$ is a finite set because all the networked supervisors considered in this paper can always be represented by an automaton with finite state spaces.} is a deterministic automaton with $\mathcal{L}(A)=\Sigma_o^*$, and $\gamma: X \rightarrow 2^\Sigma$ is a function that specifies the set of events to be enabled.
Specifically, for any $t \in \Sigma_o^*$, we denote $\gamma(\xi(x_0,t))$ by the set of events to be enabled after observing $t$.
With a slight abuse of notation, we also write $\gamma(\xi(x_0,t))=S(t)$.
Let $\Pi=\{\pi\in 2^{\Sigma}:\Sigma_{uc}\subseteq \pi\}$ be the set of all admissible control actions.
Since we cannot disable an uncontrollable event, $S(t)$ should be  admissible, i.e., $S(t)\in \Pi$.
\begin{remark}
\textcolor{blue}{Note that when communication losses exist, strings that may be observed by the networked supervisor are no longer $P(\mathcal{L}(G))$ because some observable event occurrences may be lost during the transmission.
To implement supervisory control under observation delays and losses, techniques were developed in \cite{alves20tac} to construct an untimed automaton that models all the possible system observations in the presence of the observation delays and losses. 
And a networked supervisor that maps each possible system observation to an admissible control action was proposed in \cite{alves20tac}.
In contrast to \cite{alves20tac}, the networked supervisor $S$ is defined over the entire $\Sigma_o^*$ in this paper.
For those $t \in \Sigma^*_o$ that can never be observed by  $S$, we can define a special state $x_{spe}$ in $A$ such that $\xi(x_0,t)=x_{spe}$ and $\gamma(x_{spe})=\Sigma_{uc}$.}
\end{remark}

\begin{figure}
\centering \subfigure[$G$]{\label{Fig21}\includegraphics[width=3.4cm]{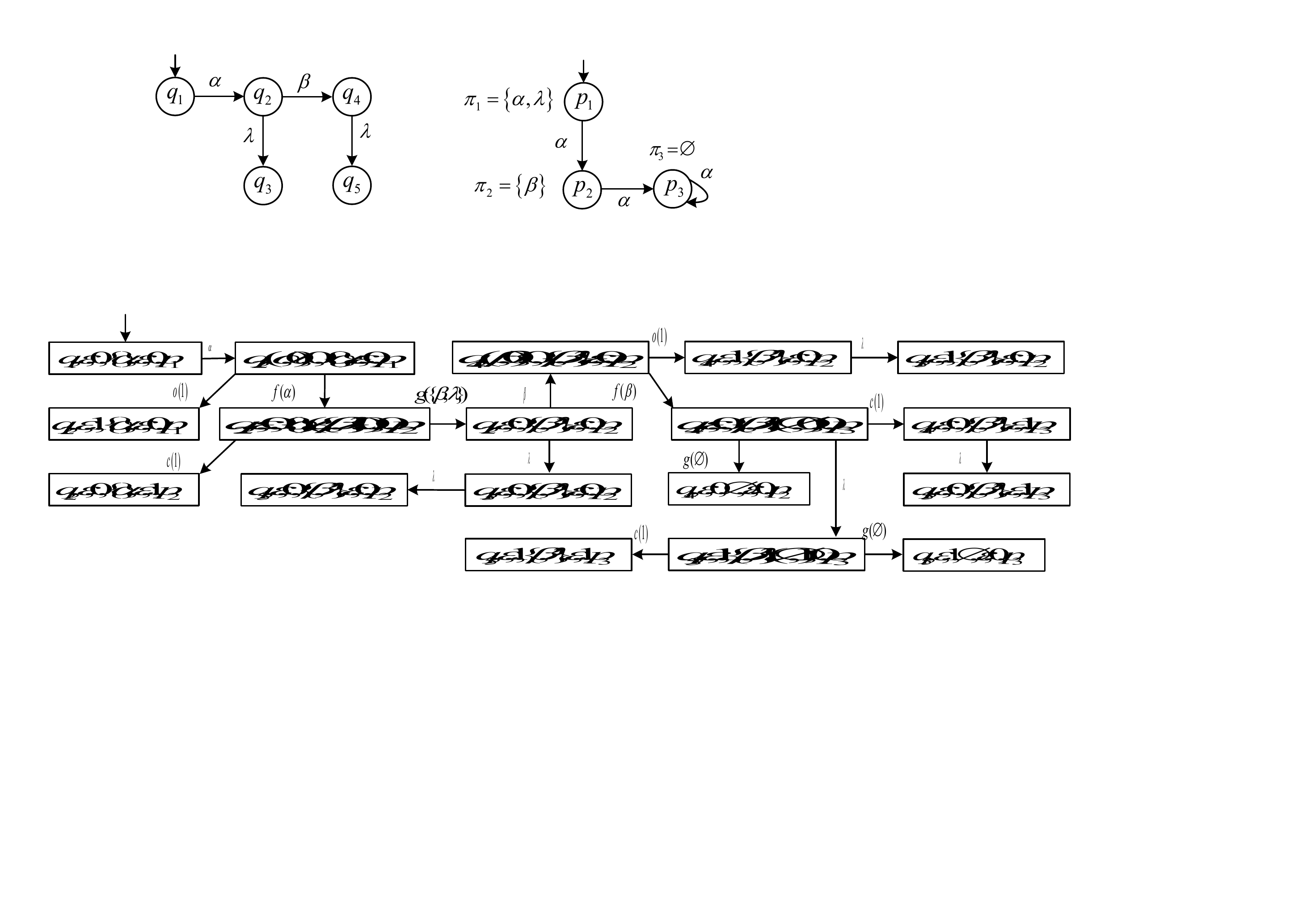}}
\subfigure[$S=(A,\gamma)$]{\label{Fig22}\includegraphics[width=4.0cm]{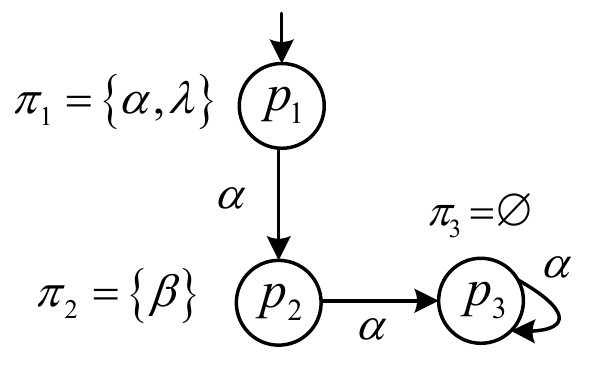}}
\caption{System $G$ and networked supervisor $S=(A,\gamma)$  in Example \ref{Ex1}.} \label{Fig2}
\end{figure}
\begin{example}\label{Ex1}
Consider system $G$ depicted in Fig. \ref{Fig21} with $\Sigma_c=\Sigma=\{\alpha,\beta,\lambda\}$ and  $\Sigma_o=\{\alpha\}$.
The networked supervisor $S=(A,\gamma)$ is depicted in  Fig.\ref{Fig22}.
The function $\gamma$ is specified by the set of events associated with each
state in Fig. \ref{Fig22}.
As shown in Fig. \ref{Fig22}, when $S$ is in $p_1$, $S(\varepsilon)=\gamma(p_1)=\pi_1=\{\alpha,\lambda\}$.
Once $\alpha$ is observed, $S$ moves to state $p_2$, and $S(\alpha)=\gamma(p_2)=\pi_2=\{\beta\}$.
For all $t \in \Sigma^*_o \setminus \{\varepsilon,\alpha\}$,  $S(t)=\gamma(p_3)=\pi_3=\emptyset$.
\end{example}

Given two networked supervisors $S_1$ and $S_2$, we say that $S_1$ is smaller than $S_2$, denoted by $S_1 \subseteq S_2$, if $S_1(t)\subseteq S_2(t)$ for all $t\in \Sigma_o^*$; we say $S_1$ is strictly smaller than $S_2$, denoted by $S_1 \subset S_2$, if $S_1 \subseteq S_2$ and there exists $t \in \Sigma_o^*$ such that $S_1(t)\subset S_2(t)$.

\section{Dynamics of the controlled system}

To accurately estimate the states of a controlled system, we must first specify an accurate language that can be generated by the controlled system.
To this end, we consider a framework of supervisory control of networked DESs in this section.
In this framework, we  construct a communication automaton that explicitly models the interaction process between the plant and the supervisor over the observation channel and the control channel, where communication delays and losses exist.
We show that the ``dynamics'' of the controlled system can be inferred from the constructed communication automaton.



We first define four special types of events to characterize the behaviors of the communication delays and losses.
\begin{enumerate} 

\item To describe the loss of an observable event occurrence, define bijection $o:[1,N_o+1] \rightarrow \Sigma^o$ such that  $\Sigma^o=\{o(i):i\in [1,N_o+1]\}$, where $o(i)$ indicates the loss of the $i$th observable event in the observation channel.

\item To describe the loss of  a control action, define bijection $c:[1,N+1] \rightarrow \Sigma^c$ such that $\Sigma^{c}=\{c(i):i\in [1,N+1]\}$, where  $N=N_c+N_o$ and $c(i)$ indicates the loss of the $i$th control action in the control channel.

\item To keep track of what observable event has been communicated, define bijection $f:\Sigma_o \rightarrow \Sigma^f$ such that $\Sigma^f=\{f(\sigma):\sigma \in \Sigma_o\}$, where $f(\sigma)$ indicates that the occurrence of $\sigma$ has been communicated to the supervisor.


\item To model which control decision is taken, define bijection $g:\Pi \rightarrow \Sigma^g$ such that $\Sigma^g=\{g(\pi): \pi \in \Pi\}$, where $g(\pi)$ indicates that the control action $\pi$ has been executed.
 
\end{enumerate}
Note that  $\Sigma$, $\Sigma^o$, $\Sigma^c$, $\Sigma^f$, and $\Sigma^g$ are mutually disjoint.

Let $N=N_c+N_o$.
Given a system $G$ and a networked supervisor $S=(A,\gamma)$ such that $A=(X,\Sigma_o,\xi,x_0)$, we denote each state of the communication automaton by a seven-tuple $\tilde{q}=(q,x,n,\phi,y,m,p) \in  Q \times (\Sigma_o \times [0,N_o])^{\le N_o+1}\times [0,N_{l,o}] \times \Pi \times (\Pi \times [0,N_c])^{\le N+1}  \times [0,N_{l,c}] \times X$: where (i) $q \in Q$ tracks the state that the plant is in; (ii) $x$ is a sequence of pairs $(\sigma_1,n_1) \cdots (\sigma_k, n_k) \in (\Sigma_o \times [0,N_o])^{\le N_o+1}$ such that $\sigma_1\cdots\sigma_k \in \Sigma_o^*$ tracks a sequence of observable events that have occurred but still need to be communicated (delivered) to the supervisor, and the integer $n_i$ tracks the number of event occurrences while $\sigma_i$ is waiting to be communicated;
(iii) $n \in [0,N_{l,o}]$  counts the number of consecutive observation losses;
(iv) $\phi \in \Pi$ is the control action in use;
(v) $y$ is a sequence of pairs $(\pi_1,m_1) \cdots (\pi_h,m_h) \in (\Pi \times [0,N_c])^{\le N+1}$ such that  $\pi_1 \cdots \pi_h \in \Pi^{\le N+1}$ tracks a sequence of admissible control actions that have been issued but have  not been executed due to control delays, and the integer $m_i$ tracks the number of event occurrences while the control action $\pi_i$ is delayed at the control channel;
(vi) $m \in [0,N_{l,c}]$  counts the number of consecutive control losses;
(vii) $p \in X$ tracks the state that networked supervisor $S$ is in.

\begin{remark}
Note that the lengths of $x$ and $y$ are both finite.
Since the observation delays are assumed to be upper bounded by $N_o$, there could be $N_o$ additional event occurrences at most before an observable event is communicated.
Thus, the number of events delayed at the observation channel is $N_o+1$ at most, and the length of $x$ is no larger than $N_o+1$.
On the other hand, due to control delays and observation delays, the control action in use could be anyone issued in the past $N$ steps. 
When a new event occurs, at least one control action issued in the past $N+1$ steps is executed.
Therefore, the length of $y$ is no longer than $N+1$.
\end{remark}

Given a $x \in (\Sigma_o \times [0,N_o])^{\le N_o+1}$, if $x=(\sigma_1,n_1) \cdots (\sigma_k, n_k) \neq \varepsilon$, we define $\textbf{NUM}(x)=n_1$ as the integer in the first pair of $x$, and if $x=\varepsilon$, we define $\textbf{NUM}(x)=0$.
Since $\sigma_1$ is the first event queued at the observation channel, $\textbf{NUM}(x)$ records the maximum observation delays at the moment.
To update the observation delays after a new event occurrence, we define $x^+=(\sigma_1,n_1+1) \cdots (\sigma_k,n_k+1)$ if $x=(\sigma_1,n_1) \cdots (\sigma_k, n_k) \neq \varepsilon$, and $x^+=\varepsilon$ if $x=\varepsilon$.
Similarly, for any $y \in (\Pi \times [0,N_c])^{\le N+1}$, if $y=(\pi_1,m_1) \cdots (\pi_h, m_h) \neq \varepsilon$, we define $\textbf{NUM}(y)=m_1$ and  $y^+=(\pi_1,m_1+1) \cdots (\pi_h, m_h+1)$, and if $y=\varepsilon$, we define $\textbf{NUM}(y)=0$ and  $y^+=\varepsilon$.






With the above preparations, we formally construct the communication automaton $G_S=(\tilde{Q},\tilde{\Sigma}, \tilde{\delta}, \tilde{q}_{0})$, where $\tilde{Q} \subseteq  Q \times (\Sigma_o \times [0,N_o])^{\le N_o+1}\times [0,N_{l,o}] \times \Pi \times (\Pi \times [0,N_c])^{\le N+1}  \times [0,N_{l,c}] \times X$ is the state space; $\tilde{\Sigma}\subseteq \Sigma \cup \Sigma^o \cup\Sigma^c \cup \Sigma^f \cup \Sigma^g$ is the event set; $\tilde{q}_{0}=(q_0, \varepsilon, 0, S(\varepsilon), \varepsilon, 0, x_0)$ is the initial state; and the transition function $\tilde{\delta}: \tilde{Q} \times \tilde{\Sigma} \rightarrow \tilde{Q}$ is defined as follows:

\begin{itemize} 

  \item For all $\tilde{q}=(q,x,n,\phi,y,m,p)\in \tilde{Q}$ and all $\sigma \in \Sigma$,
\Lfteqn\label{Eq1}
  \tilde{\delta}(\tilde{q}, \sigma)= \begin{cases}
    \tilde{q}' & \text{if}\ \delta(q,\sigma)! \wedge \sigma \in \phi \\
&\wedge  \textbf{NUM}(x^+) \le N_o\\
& \wedge \textbf{NUM}(y^+) \le N_c\\
    \mbox{undefined} & \mbox{otherwise,}
  \end{cases}
  \Ndeqn  
with $\tilde{q}' =(q',x',n',\phi',y',m',p')$, where (i) $q'=\delta(q,\sigma)$, (ii) if $\sigma \in \Sigma_o$, $x'=x^+(\sigma,0)$, and if $\sigma \in \Sigma_{uo}$, $x'=x^+$, (iii) $n'=n$, (iv) $\phi'=\phi$, (v) $y'=y^+$, (vi) $m'=m$, and (vii)  $p'=p$.

  \item For all $\tilde{q}=(q,x,n,\phi,y,m,p)\in \tilde{Q}$ and all $o(i) \in \Sigma^o$,  if $x=\varepsilon$, $\tilde{\delta}(\tilde{q},o(i))$ is not defined, and if $x \neq \varepsilon$, we write $x=(\sigma_1,,n_1) \cdots (\sigma_k,n_k)$ for $\sigma_j \in \Sigma_o$ and $n_j \in [0,N_o]$, and then,
\Lfteqn\label{Eq2}
  \tilde{\delta} (\tilde{q}, o(i))= \begin{cases}
    \tilde{q}' & \text{if}\ i \in [1,k] \wedge n+1 \le N_{l,o}\\
    \mbox{undefined} & \mbox{otherwise,}
  \end{cases}
  \Ndeqn  
with $\tilde{q}' =(q',x',n',\phi',y',m',p')$, where  (i) $q'=q$, (ii) $$x'=(\sigma_1,n_1) \cdots (\sigma_{i-1},n_{i-1})(\sigma_{i+1},n_{i+1})\cdots (\sigma_k,n_k),$$
(iii) $n'=n+1$, (iv) $\phi'=\phi$, (v) $y'=y$,  (vi) $m'=m$, and (vii) $p'=p$.

\item For all $\tilde{q}=(q,x,n,\phi,y,m,p)\in \tilde{Q}$ and all $f(\sigma) \in \Sigma^f$, if $x=\varepsilon$, then $\tilde{\delta}(\tilde{q},f(\sigma))$ is not defined, 
and if $x \neq \varepsilon$, we write $x=(\sigma_1,n_1) \cdots (\sigma_k,n_k)$ for $\sigma_j \in \Sigma_o$ and $n_j \le N_o$, and then,
\Lfteqn\label{Eq3}
  \tilde{\delta} (\tilde{q}, f(\sigma))= \begin{cases}
    \tilde{q}' & \text{if}\ \sigma=\sigma_1\\
    \mbox{undefined} & \mbox{otherwise,}
  \end{cases}
  \Ndeqn  
with $\tilde{q}' =(q',x',n',\phi',y',m',p')$, where  (i) $q'=q$, (ii) $x'=(\sigma_2,n_2) \cdots (\sigma_k,n_k)$, (iii) $n'=0$, (iv) $\phi'=\phi$, (v) $y'=y(\gamma(\xi(p,\sigma)),0)$, (vi) $m'=m$, and (vii) $p'=\xi(p,\sigma)$.  

\item For all $\tilde{q}=(q,x,n,\phi,y,m,p)\in \tilde{Q}$ and all $c(i) \in \Sigma^c$, if $y=\varepsilon$, $\tilde{\delta}(\tilde{q},c(i))$ is not defined, 
and if $y \neq \varepsilon$, we write $y=(\pi_1,m_1) \cdots (\pi_h,m_h)$ for $\pi_j \in \Pi$ and $m_j \le N_c$, and then,
\Lfteqn\label{Eq4}
  \tilde{\delta}(\tilde{q}, c(i))= \begin{cases}
    \tilde{q}' & \text{if}\ i \in [1,h] \wedge m+1 \le N_{l,c}\\
    \mbox{undefined} & \mbox{otherwise,}
  \end{cases}
  \Ndeqn  
with $\tilde{q}' =(q',x',n',\phi',y',m',p')$, where  (i) $q'=q$, (ii) $x'=x$, (iii) $n'=n$, (iv) $\phi'=\phi$,  (v) $$y'=(\pi_1,m_1) \cdots (\pi_{i-1},m_{i-1})(\pi_{i+1},m_{i+1})\cdots (\pi_h,m_h),$$  (vi) $m'=m+1$, and (vii) $p'=p$.

  \item For all $\tilde{q}=(q,x,n,\phi,y,m,p)\in \tilde{Q}$ and all $g(\pi) \in \Sigma^g$, if $y=\varepsilon$, $\tilde{\delta}(\tilde{q},g(\pi))$ is not defined, 
and if $y \neq \varepsilon$, we write $y=(\pi_1,m_1) \cdots (\pi_h,m_h)$ for $\pi_j \in \Pi$ and $m_j \le N_c$, and then,
\Lfteqn\label{Eq55}
  \tilde{\delta} (\tilde{q}, g(\pi))= \begin{cases}
    \tilde{q}' & \text{if}\ \pi=\pi_1\\
    \mbox{undefined} & \mbox{otherwise,}
  \end{cases}
  \Ndeqn  
with $\tilde{q}' =(q',x',n',\phi',y',m',p')$, where  (i) $q'=q$, (ii) $x'=x$, (iii) $n'=n$, (iv) $\phi'=\pi$, (v) $y'=(\pi_2,m_2) \cdots (\pi_h,m_h)$, (vi) $m'=0$, and (vii) $p'=p$.

\end{itemize}

We interpret the construction of $G_S$ as follows. 
In (\ref{Eq1}), $\sigma$ can occur at $\tilde{q}=(q,x,n,\phi,y,m,p)\in \tilde{Q}$ iff $\sigma$ is defined at $q$ in $G$, i.e., $\delta(q,\sigma)!$, the occurrence of $\sigma$ is allowed by the control action in use, i.e., $\sigma \in \phi$, and the observation delays and control delays after the occurrence of $\sigma$ are no larger  than $N_o$ and $N_c$, respectively, i.e.,  {$\textbf{NUM}(x^+)\le N_o  \wedge \textbf{NUM}(y^+) \le N_c$}.
If $\sigma$ occurs at $\tilde{q}$,   since $q$ is used to track the state that the plant is in, we have $q'=\delta(q,\sigma)$. 
Furthermore, if an unobservable $\sigma \in \Sigma_{uo}$ occurs at $\tilde{q}$,  the sequence delayed at the observation channel still is $x$ but all the numbers in $x$ (if $x \neq \varepsilon$) should add 1 for counting the observation delays.
Therefore, we set $x=x^+$ if $\sigma \in \Sigma_{uo}$ in (\ref{Eq1}).
However, if $\sigma \in \Sigma_o$, by FIFO, $(\sigma,0)$ should be added to the end of $x$ for tracking the new observable event occurrence, which is illustrated by $x'=x^+(\sigma,0)$ if $\sigma \in \Sigma_{o}$ in (\ref{Eq1}).
Meanwhile, after the occurrence of $\sigma$, the numbers in $y$ should add 1 for recording the control delays. Hence, we set $y'=y^+$ in (\ref{Eq1}).
For the remaining components in $\tilde{q}$,  the state of the supervisor can be updated only when a new event is communicated, and the control action in use can be updated only when a new control action is executed.
Therefore, $p$ and $\phi$ have no change after the occurrence of $\sigma$, i.e., $p'=p$ and $\phi'=\phi$.
Since there are no observation losses and control losses,  $n'=n$ and $m'=m$.

For any $\tilde{q}=(q,x,n,\phi,y,m,p)\in \tilde{Q}$, if $x=\varepsilon$, the observation channel is empty.
Thus, no observable events can be lost or communicated.
As we can see, if $x=\varepsilon$, $o(i)\in \Sigma^o$ is not defined at $\tilde{q}$ in (\ref{Eq2}), and $f(\sigma) \in \Sigma^f$ is not defined at $\tilde{q}$ in (\ref{Eq3}).
Otherwise, if $x=(\sigma_1,n_1)\cdots(\sigma_k,n_k) \neq \varepsilon$,  by assumption, all the observable events queued at the observation channel may be lost if the consecutive observation losses are not larger than $N_{l,o}$ after the observation loss.
Therefore, if $n+1 \le N_{l,o}$,  $o(i)$ is defined at $\tilde{q}$ for all $i\in [1,k]$  in (\ref{Eq2}).
When $o(i)$ occurs at $\tilde{q}$, we remove $(\sigma_i,n_i)$ from $x$ and update $n$ to $n+1$ to record the observation loss.
On the other hand, since $x=(\sigma_1,n_1)\cdots(\sigma_k,n_k)$, by FIFO,  $f(\sigma)$ is defined at $\tilde{q}$ iff  $\sigma=\sigma_1$  in (\ref{Eq3}).
When the occurrence of $\sigma_1$ is communicated, the remaining sequence to be communicated is $\sigma_2 \cdots \sigma_k$.
Therefore, we have $x'=(\sigma_2,n_2)\cdots(\sigma_k,n_k)$ in (\ref{Eq3}).
Following the communication of $\sigma_1$, the state of the supervisor is updated to $\xi(p,\sigma)$, and the supervisor
sends $\gamma(\xi(p,\sigma))$ to the actuator of the plant, which are illustrated, respectively by $p'=\xi(p,\sigma)$ and $y'=y(\gamma(\xi(p,\sigma)),0)$ in (\ref{Eq3}).
Furthermore, since $n$ is used to count the number of consecutive observation losses, we reset $n$ to $0$ after a new event communication, as $n'=0$ in (\ref{Eq3}).



For any $\tilde{q}=(q,x,n,\phi,y,m,p)\in \tilde{Q}$, if $y=\varepsilon$, the control channel is empty, and no control actions can be lost or  executed. 
Thus, if $y=\varepsilon$, then $c(i) \in \Sigma^c$ is not defined at $\tilde{q}$ in (\ref{Eq4}), and $g(\pi) \in \Sigma^g$  is not defined at $\tilde{q}$ in (\ref{Eq55}).
Otherwise, if $y=(\pi_1,m_1) \cdots (\pi_h,m_h) \neq \varepsilon$, by assumption, all the control actions queued at the control channel may be lost if the consecutive control losses are no larger than $N_{l,c}$ after the control loss.
Thus, if $m+1\le N_{l,c}$,  $c(i)$ is defined at $\tilde{q}$ for all $i\in [1,h]$ in (\ref{Eq4}).
When $c(i)$ occurs at $\tilde{q}$, we remove $(\pi_i,m_i)$ from $y$ and update $m$ to $m+1$  to record the control loss.
On the other hand, since  $y=(\pi_1,m_1) \cdots (\pi_h,m_h) \neq \varepsilon$, by FIFO, $g(\pi)$ is defined at $\tilde{q}$ iff $\pi=\pi_1$.
When $g(\pi)$ occurs at $\tilde{q}$, the control action that is taking effect becomes $\pi$, and the control actions queued at the control channel are $\pi_2\cdots\pi_h$, which are illustrated, respectively by  $\phi'=\pi$ and $y'=(\pi_2,m_2) \cdots (\pi_h,m_h)$ in (\ref{Eq55}). 
Since $m$ is used to count the number of consecutive control losses, we reset $m$ to $0$ in (\ref{Eq55}) when a new control action is executed.
Example \ref{Ex2} further illustrates how to construct  $G_S$.




\begin{figure}
	\begin{center}
		\includegraphics[width=8cm]{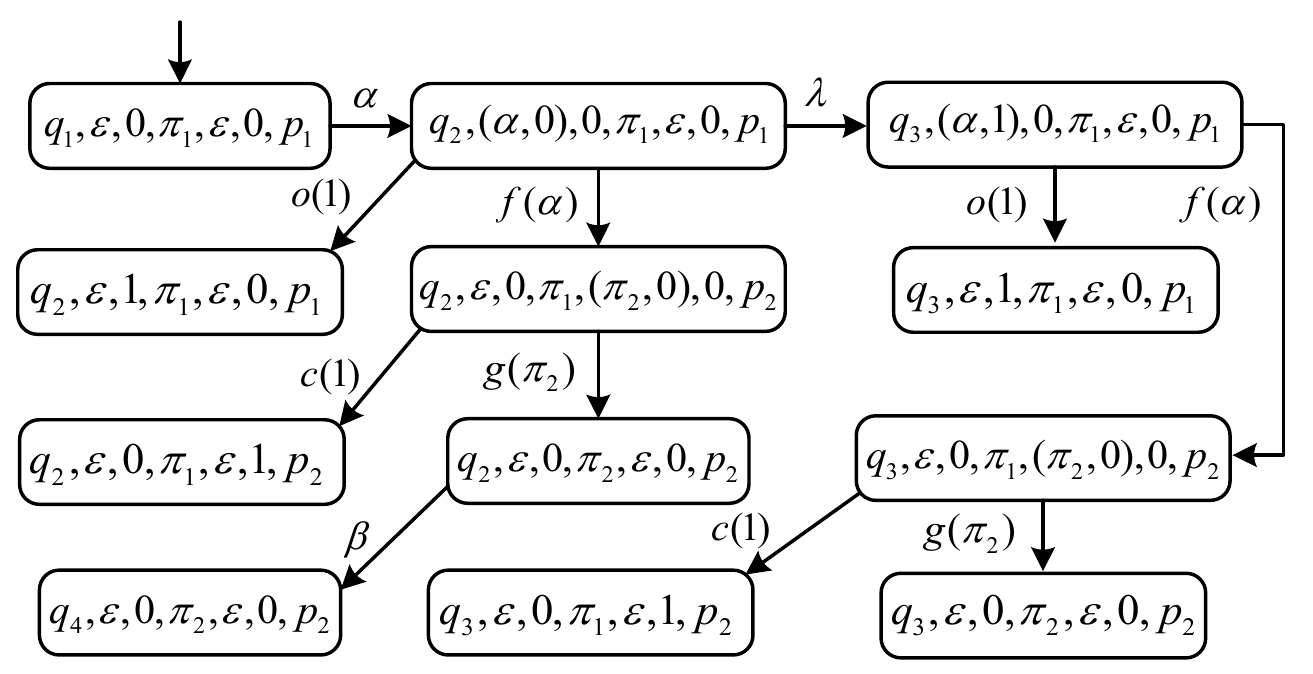}    
		\caption{Communication automaton $G_S$ in Example \ref{Ex2}} 
		\label{Fig3}
	\end{center}
\end{figure}

\begin{example}\label{Ex2}
Again, we consider the $G$ and the $S$ depicted in Fig. \ref{Fig21} and  Fig. \ref{Fig22}, respectively.
As shown in Example \ref{Ex1}, $\Sigma_c=\Sigma$ and $\Sigma_o=\{\alpha\}$.
Let $N_c=N_o=N_{l,c}=N_{l,o}=1$.
The communication automaton $G_S$ is constructed in Fig. \ref{Fig3}.

Let us consider the initial state $\tilde{q}_0=(q_1,\varepsilon, 0,\pi_1,\varepsilon,0,p_1)$.
Since the control channel is empty, no control actions can be executed or lost.
Hence, all the events in $\Sigma^c$ and $\Sigma^g$ are not defined at $\tilde{q}_0$.
Similarly, all the events in $\Sigma^o$ and $\Sigma^f$ are also not defined at $\tilde{q}_0$ since the observation channel is empty.
By Fig.\ref{Fig21}, $\delta(q_1,\alpha)=q_2$.
Moreover, since $\alpha \in \Sigma_o$, $\alpha \in \pi_1$, and $\textbf{NUM}(\varepsilon^+)=0 \le N_c, N_o$, by (\ref{Eq1}),  we have $\tilde{\delta}(\tilde{q}_0,\alpha)=\tilde{q}_1=(q_2,(\alpha,0),0,\pi_1,\varepsilon,0,p_1)$.

Next, we consider state $\tilde{q}_1=(q_2,(\alpha,0),0,\pi_1,\varepsilon,0,p_1)$.
By Fig. \ref{Fig21}, $\delta(q_2,\beta)=q_4$ and  $\delta(q_2,\lambda)=q_3$. 
Since $\beta \notin \pi_1$, by (\ref{Eq1}), $\beta$ cannot occur at $\tilde{q}_1$.
However, since $\lambda \in \Sigma_{uo}$, $\lambda \in \pi_1$, $\textbf{NUM}((\sigma,0)^+)=1 \le N_o$, and $\textbf{NUM}(\varepsilon^+)=0 \le N_c$, by (\ref{Eq1}), $\tilde{\delta}(\tilde{q}_1,\lambda)=\tilde{q}_1=(q_3,(\alpha,1),0,\pi_1,\varepsilon,0,p_1)$.
Moreover, since $(\alpha,0) \neq \varepsilon$, by (\ref{Eq3}), $f(\alpha)$ is defined at $\tilde{q}_1$.
When $\sigma$ is communicated, $S$ moves to state $\xi(p_1,\alpha)=p_2$, and a new control action $\gamma(p_2)=\pi_2=\{\beta\}$ is issued.
By (\ref{Eq3}),  $\tilde{\delta}(\tilde{q}_1,f(\alpha))=(q_2,\varepsilon,0,\pi_1,(\pi_2,0),0,p_2)$.
In addition, since $(\alpha,0) \neq \varepsilon$ and $0+1=1\le N_{l,o}$, the event occurrence of $\alpha$ may be lost from the observation channel at $\tilde{q}_1$.
By (\ref{Eq2}), $\tilde{\delta}(\tilde{q}_1,o(1))=(q_2,\varepsilon,1,\pi_1,\varepsilon, 0, p_1)$.
In this way, we can construct $G_S$.

\end{example}

\begin{figure}
	\begin{center}
		\includegraphics[width=8cm]{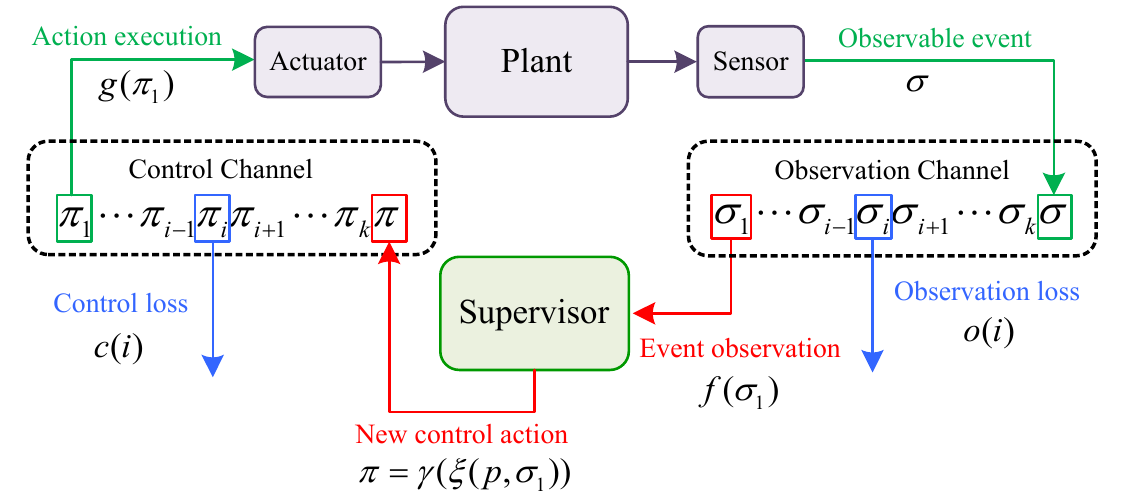}    
		\caption{The interaction process between the plant and the supervisor.} 
		\label{process}
	\end{center}
\end{figure}

\begin{remark}
\textcolor{blue}{The interaction process between the plant and the supervisor is illustrated in Fig. \ref{process}. 
When a new observable event $\sigma$ occurs in the plant, it is immediately pushed into the observation channel.
When the first event $\sigma_1$ queued at the observation channel is delivered to the supervisor, a new control action $\pi=\gamma(\xi(p,\sigma_1))$ can be immediately issued and inserted into the control channel.
The first control action $\pi_1$ queued at the control channel cannot be executed  until it is delivered to the actuator of the plant.
Both the control actions delayed at the control channel and the observable events delayed at the observation channel may be lost.
The supervisor has no idea what observable events are now queued at the observation channel and what control actions are now queued at the control channel. 
As will be shown in the next section, the supervisor makes state estimation based on only the observable events that have been communicated to it.}
\end{remark}

Given a string $\mu \in \tilde{\Sigma}^*$, let $\psi(\mu)$ and $\psi^f(\mu)$ be the string obtained by removing all the events in $\tilde{\Sigma} \setminus \Sigma$ and $\tilde{\Sigma} \setminus \Sigma_f$  from $\mu$, respectively, without changing the order of the remaining events.
Define $f^{-1}$ as, for all $f(\sigma) \in \Sigma_f$, $f^{-1}(f(\sigma))=\sigma$.
We extend $\psi$, $\psi^f$, and $f^{-1}$ to a set of strings in the usual way.
We consider $\mu=\alpha f(\alpha) c(1) \in \mathcal{L}(G_S)$ in Fig. \ref{Fig3}.
By definitions, $\psi(\mu)=\alpha \in \mathcal{L}(G)$, $\psi^f(\mu)=f(\alpha)$, and $f^{-1}(\psi^f(\mu))=\alpha$.

Intuitively, 1) $\psi(\mathcal{L}(G_S))$  specifies all the languages that can be generated under  $S$, and 2) $f^{-1}(\psi^f(\mathcal{L}(G_S)))$ specifies all the behaviors that can be observed by the networked supervisor.
We formally prove them in the following proposition.

\begin{proposition}\label{Prop1}
Given a $\mu \in \mathcal{L}(G_S)$, let us write $\tilde{\delta}(\tilde{q}_0,\mu)=(q,x,n,\phi,y,m,p)$.
Then, we have (i) $q=\delta(q_0,\psi(\mu))$ and (ii) $p=\xi(x_0,f^{-1}(\psi^f(\mu)))$.
\end{proposition}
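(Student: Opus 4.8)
The plan is to prove both statements simultaneously by induction on the length of $\mu$, exploiting the fact that the maps $\psi$ and $\psi^f$ each react to a single, mutually disjoint class of events, and that the component $q$ (resp.\ $p$) of the seven-tuple is updated by exactly that class of events and by no other. The right invariant to carry is precisely the conjunction of (i) and (ii) for every prefix.

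First I would settle the base case $\mu=\varepsilon$. Since $\tilde{\delta}(\tilde{q}_0,\varepsilon)=\tilde{q}_0=(q_0,\varepsilon,0,S(\varepsilon),\varepsilon,0,x_0)$, we read off $q=q_0$ and $p=x_0$; and since $\psi(\varepsilon)=\psi^f(\varepsilon)=\varepsilon$, both $\delta(q_0,\varepsilon)=q_0$ and $\xi(x_0,f^{-1}(\varepsilon))=x_0$ match. For the inductive step, I assume the claim for some $\mu\in\mathcal{L}(G_S)$ with $\tilde{\delta}(\tilde{q}_0,\mu)=(q,x,n,\phi,y,m,p)$, so that $q=\delta(q_0,\psi(\mu))$ and $p=\xi(x_0,f^{-1}(\psi^f(\mu)))$. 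I then take any $\sigma$ with $\mu\sigma\in\mathcal{L}(G_S)$ and write the successor state as $(q',x',n',\phi',y',m',p')$. The argument splits into five cases according to which of the disjoint sets $\Sigma,\Sigma^o,\Sigma^f,\Sigma^c,\Sigma^g$ contains $\sigma$, i.e.\ which of the rules (\ref{Eq1})--(\ref{Eq55}) applies. The key bookkeeping facts are: (a) the $q$-component changes only under rule (\ref{Eq1}), where $q'=\delta(q,\sigma)$, and these are exactly the symbols that $\psi$ retains; (b) the $p$-component changes only under rule (\ref{Eq3}) with $\sigma=f(\tau)$, where $p'=\xi(p,\tau)$, and these are exactly the symbols that $\psi^f$ retains.

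Concretely, when $\sigma\in\Sigma$ I would use $\psi(\mu\sigma)=\psi(\mu)\sigma$ and $\psi^f(\mu\sigma)=\psi^f(\mu)$, so that $\delta(q_0,\psi(\mu\sigma))=\delta(\delta(q_0,\psi(\mu)),\sigma)=\delta(q,\sigma)=q'$ (the inner transition being defined because rule (\ref{Eq1}) requires $\delta(q,\sigma)!$), while $p'=p$ matches the unchanged $\xi(x_0,f^{-1}(\psi^f(\mu)))$. When $\sigma=f(\tau)\in\Sigma^f$ I would use $\psi(\mu\sigma)=\psi(\mu)$ and $\psi^f(\mu\sigma)=\psi^f(\mu)f(\tau)$, so that $q'=q$ is preserved and, peeling off the last symbol via the string extension of $\xi$, $\xi(x_0,f^{-1}(\psi^f(\mu\sigma)))=\xi(\xi(x_0,f^{-1}(\psi^f(\mu))),\tau)=\xi(p,\tau)=p'$. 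For the three remaining classes $\Sigma^o,\Sigma^c,\Sigma^g$, both projections are left unchanged and rules (\ref{Eq2}), (\ref{Eq4}), (\ref{Eq55}) leave both $q$ and $p$ fixed, so the two invariants pass through verbatim.

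I do not anticipate a genuine obstacle: once the invariant is isolated, the proof is routine. The one thing to be careful about is the disjointness of the five event alphabets, which is exactly what guarantees that each symbol triggers precisely one transition rule and is seen by at most one of $\psi,\psi^f$; this is the hinge that lets cases (a) and (b) above be mutually exclusive. The step most prone to slips is the $f(\tau)$ case, where one must correctly commute $f^{-1}$ past concatenation, $f^{-1}(\psi^f(\mu)f(\tau))=f^{-1}(\psi^f(\mu))\,\tau$, and then invoke the extension of $\xi$ to $X\times\Sigma_o^\ast$ to expose the last transition.
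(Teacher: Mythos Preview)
Your proposal is correct and follows essentially the same approach as the paper's proof: both argue by induction on the length of $\mu$, handle the base case via the initial state, and in the inductive step do a case split on the event class, using that only $\Sigma$-events move the $q$-component and only $\Sigma^f$-events move the $p$-component. The only cosmetic difference is that the paper collapses your three ``remaining'' cases $\Sigma^o,\Sigma^c,\Sigma^g$ into a single case $e\in\tilde{\Sigma}\setminus(\Sigma\cup\Sigma^f)$, which is exactly how you treat them anyway.
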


By Proposition \ref{Prop1}, the ``dynamics'' of the controlled system can be simply obtained by removing all the events in $\tilde{\Sigma}\setminus \Sigma$ from the sequences generated by $G_S$.

\begin{definition}\label{Def1}
Given a system $G$ and a networked supervisor $S=(A,\gamma)$ with $A=(X,\Sigma_o,\xi,x_0)$, we construct $G_S$ as described above.
The language that may be generated by the controlled system under the communication delays and losses, denoted by $\mathcal{L}(S/G)$, is defined as $\mathcal{L}(S/G)=\psi(\mathcal{L}(G_S))$.
\end{definition}

\begin{proposition}\label{Prop2}
Given two networked supervisors $S_i=(A_i,\gamma_i)$ with $A_i=(X_i,\Sigma_o,\xi_i,x_{0,i})$, $i \in \{1,2\}$,  we construct $G_{S_i}$ as described above.
Then, if $S_1 \subseteq S_2$,  $\mathcal{L}(S_1/G) \subseteq \mathcal{L}(S_2/G)$.
\end{proposition}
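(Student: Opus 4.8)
The plan is to reduce the statement, via Definition~\ref{Def1}, to the language inclusion $\psi(\mathcal{L}(G_{S_1})) \subseteq \psi(\mathcal{L}(G_{S_2}))$, and to prove this by exhibiting, for every $\mu_1 \in \mathcal{L}(G_{S_1})$, a companion string $\mu_2 \in \mathcal{L}(G_{S_2})$ with $\psi(\mu_2) = \psi(\mu_1)$. I would build $\mu_2$ from $\mu_1$ by leaving every event of $\Sigma \cup \Sigma^o \cup \Sigma^f \cup \Sigma^c$ untouched and by relabelling each control-execution event $g(\pi)$ to $g(\pi')$, where $\pi'$ is the action that $S_2$ holds at the matching position of its control channel. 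Since $\psi$ erases everything outside $\Sigma$ and the $\Sigma$-letters are unchanged, $\psi(\mu_2) = \psi(\mu_1)$ is immediate; the entire difficulty is to show that this relabelled string is genuinely generated by $G_{S_2}$.

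I would carry this out by induction on $|\mu_1|$, maintaining a correspondence between the two runs $\tilde{\delta}_1(\tilde{q}_{0,1}, \mu_1^k) = (q_1, x_1, n_1, \phi_1, y_1, m_1, p_1)$ and $\tilde{\delta}_2(\tilde{q}_{0,2}, \mu_2^k) = (q_2, x_2, n_2, \phi_2, y_2, m_2, p_2)$. The invariant is: $q_1 = q_2$, $x_1 = x_2$, $n_1 = n_2$, $m_1 = m_2$; the channels $y_1, y_2$ have equal length and identical delay integers in each pair, while their action components obey the componentwise inclusion $\pi_{1,j} \subseteq \pi_{2,j}$; the action in use satisfies $\phi_1 \subseteq \phi_2$; and $p_1, p_2$ correspond to the common observation $f^{-1}(\psi^f(\mu_1^k)) = f^{-1}(\psi^f(\mu_2^k))$, so by Proposition~\ref{Prop1} they equal $\xi_1(x_{0,1}, t)$ and $\xi_2(x_{0,2}, t)$ for the same $t$. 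The guiding idea is that the only fields sensitive to the supervisor are $\phi$ and the action entries of $y$; all structural data ($q$, the lengths and delay counters of $x$ and $y$, and the loss counters $n, m$) are driven purely by the event type, hence kept equal in both runs. The base case holds at $\tilde{q}_{0,i}$ since both channels are empty and $\phi_i = S_i(\varepsilon)$ with $S_1(\varepsilon) \subseteq S_2(\varepsilon)$.

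The inductive step checks the five rules. For observation-loss~(\ref{Eq2}), communication~(\ref{Eq3}), and control-loss~(\ref{Eq4}), the guards involve only $x_1 = x_2$, $n_1 = n_2$, $m_1 = m_2$, and channel lengths, so they fire in $G_{S_2}$ exactly when they fire in $G_{S_1}$; the structural updates are identical, and when~(\ref{Eq3}) appends a new action, the two freshly issued actions are $S_1(t\sigma)$ and $S_2(t\sigma)$, so $S_1 \subseteq S_2$ restores the channel inclusion. The decisive rules are~(\ref{Eq1}) and~(\ref{Eq55}). For a plant event $\sigma$ in~(\ref{Eq1}), enabledness in $G_{S_1}$ needs $\sigma \in \phi_1$; since $\phi_1 \subseteq \phi_2$ we get $\sigma \in \phi_2$, and because $q_1 = q_2$ and the guards $\textbf{NUM}(x^+) \le N_o$, $\textbf{NUM}(y^+) \le N_c$ depend only on shared structural data, $\sigma$ is also enabled in $G_{S_2}$. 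For a control-execution event in~(\ref{Eq55}), $\mu_1$ carries $g(\pi_{1,1})$, the head of $y_1$; in $\mu_2$ I place $g(\pi_{2,1})$, the head of $y_2$, so its guard $\pi = \pi_{2,1}$ holds, and afterwards $\phi_2 = \pi_{2,1} \supseteq \pi_{1,1} = \phi_1$, re-establishing $\phi_1 \subseteq \phi_2$.

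The hard part will be bookkeeping rather than conceptual: the invariant must simultaneously assert exact equality of every structural field and componentwise inclusion of every action field, and one must verify that the relabelling of $g$-events is globally consistent, i.e.\ that the action popped by the relabelled $g$-event in $G_{S_2}$ is always the $S_2$-counterpart of the action popped in $G_{S_1}$. This follows because the two channels are loaded and unloaded in lockstep (the positions of all $f$-, $c$-, and $g$-events coincide), but it is the step demanding the most care. Once the induction closes, every $\psi(\mu_1)$ lies in $\psi(\mathcal{L}(G_{S_2}))$, yielding $\mathcal{L}(S_1/G) \subseteq \mathcal{L}(S_2/G)$.
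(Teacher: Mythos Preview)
Your proposal is correct and mirrors the paper's own proof almost exactly: the paper also proceeds by induction on $|\mu_1|$, maintains the same invariant (equality of $q,x,n,m$, componentwise inclusion on the control-channel actions, and $\phi_1\subseteq\phi_2$; the paper packages the channel condition into a relation $y_1\preceq y_2$), and relabels each $g(\pi)$ in $\mu_1$ to $g(\pi')$ with $\pi'$ the head of $y_2$, while leaving all other event types unchanged. The five-case analysis you outline matches the paper's Cases~1--5 in both the guards checked and the updates verified.
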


Proposition \ref{Prop2} shows that the more events a supervisor enables, the larger language the controlled system generates.
We now introduce the definition of the networked state estimate (NSE) of the controlled system for a communicated string.
\begin{definition}\label{Def2}
Given a system $G$ and a networked supervisor $S$ defined over $\Sigma_o^*$, for any $t \in f^{-1}(\psi^f(\mathcal{L}(G_S)))$, define
\begin{align}\label{Eq6}
\mathcal{E}_S(t)=&\{q\in Q:(\exists \mu \in \mathcal{L}(G_S))\notag\\
&q=\delta(q_0,\psi(\mu))\wedge t=f^{-1}(\psi^f(\mu))\}, 
\end{align}
as the NSE of $t$ under $S$, which is the set of all the possible states that the plant $G$ may be in after observing $t$ (subject to communication delays and losses) under $S$.
\end{definition}

If $S$ is given beforehand, we can calculate $\mathcal{E}_S(t)$ by constructing an observer of $G_S$ with the set of observable events $\Sigma^f$ \cite{lafortune07book}.
However, when we solve the {supervisor synthesis problem}, $S$ is unknowable. 
The state estimate should be calculated online immediately after each new observation without using the future observations and controls  \cite{yin16tac1,yin16tac2,yin17tac,yin18tac}.
This is exactly the problem we want to solve in this paper.

\section{Online networked state estimation}

{In this section, we discuss how to produce online estimates of the states of a controlled system under communication delays and losses.
\textcolor{blue}{To determine which state the controlled system is in, we should estimate  not only the states of the plant, but also the observable event occurrences delayed at the observation channel  as well as  the control actions delayed at the control channel.
This is because all of them can affect the behaviors of the controlled system.}
To this end, we introduce the notions of \textsl{observation channel configuration} and \textsl{control channel configuration} as follows.}

\begin{definition}\label{DefI}
The {observation channel configuration} is defined as: $\theta_o=((\sigma_1,n_1)\cdots (\sigma_k,n_k), n),$
where $(\sigma_1,n_1)\cdots (\sigma_k,n_k)\in (\Sigma_o \times [0,N_o])^{\le N_o+1}$ is sequence of pairs such that $\sigma_1\cdots\sigma_k$ is a sequence of observable events currently delayed at the observation channel (in the same order as they were generated) and $n_i$ is the number of event occurrences since $\sigma_i$ has occurred, and $n\in [0,N_{l,o}]$ tracks the number of consecutive observation losses.
\end{definition}
We denote by $\Theta_o \subseteq (\Sigma_o \times [0,N_o])^{\le N_o+1} \times [0,N_{l,o}]$ the set of all the possible observation channel configurations.
By Definition \ref{DefI}, $\theta_o$ can be updated if one of the following three behaviors happens: an event occurs, an observable event is communicated, or an observable event is lost.
To update $\theta_o$, we  define three operations as follows.
Given a $\theta_o=(x,n)\in \Theta_o$,
\begin{enumerate} 
\item  if an event $\sigma  \in \Sigma$ occurs,  $x$ should be updated to $x^+$ immediately to count the observation delays.
Meanwhile, if $\sigma \in \Sigma_o$, by FIFO, we still need to add $(\sigma,0)$ to the end of $x^+$ to record the new event occurrence. 
Formally, for any $\theta_o=(x,n)\in \Theta_o$ and any  $\sigma \in \Sigma$, we define $\textbf{IN}^{obs}(\theta_o,\sigma)=(x',n')$, where  if $\sigma \in \Sigma_o$, $x'=x^+(\sigma,0)$ and $n'=n$, and  if $\sigma \in \Sigma_{uo}$, $x'=x^+$ and $n'=n$;

\item if a new  $\sigma \in \Sigma_o$ is communicated, by FIFO, $\sigma$ is the first event queued at the observation channel.
If we write $x=(\sigma_1,n_1)\cdots (\sigma_k,n_k)$, then $\sigma=\sigma_1$ and the remaining events delayed at the observation channel are $\sigma_2\cdots\sigma_k$.
Additionally, since $n$ is used to track the number of consecutive  observation losses, we reset $n$ to $0$ after a new event communication. 
Therefore, for any $\theta_o=(x,n)\in \Theta_o$, if $x=(\sigma_1,n_1)\cdots (\sigma_k,n_k)\neq \varepsilon$, we define $\textbf{OUT}^{obs}(\theta_o)=(x',n')$, where $x'=(\sigma_2,n_2)\cdots (\sigma_k,n_k)$ and $n'=0$;

\item if the $i$th event in the observation channel is lost, we should remove it from $x$.
Meanwhile, since a new observation loss occurs, the number of consecutive observation losses should be updated to $n+1$.
Thus, for any $\theta_o=(x,n) \in \Theta_o$ with $x=(\sigma_1,n_1)\cdots (\sigma_k,n_k) \neq \varepsilon$ and any $i \in [1,k]$, we define $\textbf{LOSS}^{obs}(\theta_o,i)=(x',n'),$ where $x'=(\sigma_1,n_1)\cdots (\sigma_{i-1},n_{i-1}) (\sigma_{i+1},n_{i+1})\cdots(\sigma_k,n_k)$ and $n'=n+1$.
\end{enumerate} 

\begin{definition}\label{DefII}
The control channel configuration is defined as: $\theta_c=(\phi, y=(\pi_1,m_1)\cdots (\pi_h,m_h), m),$
where $\phi\in \Pi$ is the control action in use, $(\pi_1,m_1)\cdots (\pi_h,m_h)\in (\Pi \times [0,N_c])^{\le N+1}$ is a sequence of pairs such that $\pi_1\cdots\pi_h$ are control actions currently queued at the control channel, and $m_i$ is the number of event occurrences since control action $\pi_i$ has been issued, and $m \in [0,N_{l,c}]$ counts the number of consecutive control losses.
\end{definition}
We denote by $\Theta_c \subseteq \Pi \times (\Pi \times [0,N_c])^{\le N+1} \times [0,N_{l,c}]$ the set of all the possible control channel configurations.
By Definition \ref{DefII}, $\theta_c$ can be changed if one of the following four behaviors happens: (i) a new control action is issued, (ii) a new control action is executed, (iii) a control action is lost, and (iv) a new event occurs.
To update $\theta_c$, we next define four operations as follows.
Given a  $\theta_c=(\phi,y,m)\in \Theta_c$,
\begin{enumerate}

\item if a new control action $\pi \in \Pi$ is issued, by FIFO, we need to add $(\pi,0)$ to the end of $y$.
Formally, for any $\theta_c=(\phi,y,n)\in \Theta_c$ and any $\pi \in \Pi$, we define $\textbf{IN}^{ctr}(\theta_c,\pi)=(\phi',y',m'),$ where $\phi'=\phi$, $y'=y(\pi,0)$, and $m'=m$;

\item if a new control action $\pi \in \Pi$ is executed, by FIFO,  $\pi$ is the first control action queued at the control channel.
After execution, the control action that is taking effect would be $\pi$.
Meanwhile, since $m$ is used to track the number of consecutive control losses, we need to reset $m$ to $0$ after a new control action execution. 
Formally, for any $\theta_c=(\phi,y,m)\in \Theta_c$, if $y=(\pi_1,m_1)\cdots (\pi_h,m_h)\neq \varepsilon$, define $\textbf{OUT}^{ctr}(\theta_c)=(\phi',y',m')$, where $\phi'=\pi_1$ and $y'=(\pi_2,m_2)\cdots (\pi_h,m_h)$ and $m'=0$;

\item if the $i$th control action in the control channel is lost, by definition, we need to remove it from $y$.
Meanwhile, since a new control loss occurs, the number of consecutive control losses becomes $m+1$.
Formally, for any $\theta_c=(\phi,y,m)\in \Theta_c$ with $y=(\pi_1,m_1)\cdots (\pi_h,m_h)\neq \varepsilon$ and any $i \in [1,h]$, define $\textbf{LOSS}^{ctr}(\theta_c,i)=(\phi', y',m'),$  where $y'=(\pi_1,m_1)\cdots (\pi_{i-1},m_{i-1})(\pi_{i+1},m_{i+1})\cdots (\pi_h,m_h)$ and $\phi'=\phi$ and $m'=m+1$;

\item if a new event occurs in $G$, all the natural numbers in $y$ (if $y \neq \varepsilon$) should increment for tracking the control delays.
Hence, for any $\theta_c=(\phi,y,m)\in \Theta_c$, define $\textbf{PLUS}(\theta_c)=(\phi', y',m')$, where $\phi'=\phi$ and $y'=y^+$ and $m'=m$.
\end{enumerate} 

Given a $\theta_o=(x,n)\in \Theta_o$, let $[\theta_o]_1=x$ and $[\theta_o]_2=n$ be the first and second components of $\theta_o$, respectively.
Similarly, given a $\theta_c=(\phi,y,m)\in \Theta_c$, let $[\theta_c]_1=\phi$, $[\theta_c]_2=y$, and $[\theta_c]_3=m$  be the first,  second, and third components of $\theta_c$, respectively.

As mentioned above, in addition to  $q\in Q$, we also need to estimate $\theta_o\in \Theta_o$ and $\theta_c\in \Theta_c$ since they can affect the future behaviors of the controlled system.
Thus, we denote each state of the controlled system by a triplet $(q,\theta_o, \theta_c) \in Q  \times \Theta_o \times \Theta_c$.
We call such a state an augmented state.
Next, we  show how to update the augmented state estimate upon each new communication.
The procedure can be briefly summarized as repeatedly executing the following two steps.

\emph{Step 1:}
Let $Z \subseteq Q \times \Theta_o \times \Theta_c$ be a set of augmented states calculated immediately after a new observation or the initial $Z=\emptyset$\footnote{\textcolor{blue}{By assumption, the plant does not work until it is initialized. Thus, before the initial control action is executed (the plant starts to work), we let $Z=\emptyset$.}}. 
The delayed unobservable reach of $Z$ under an admissible control action $\pi\in \Pi$, denoted by $\text{DUR}(Z,\pi)$, is defined as follows.

\begin{enumerate} 
  \item Initially, if $Z=\emptyset$, we have  
 \begin{align}\label{Eq7-1}
  (q_0, (\varepsilon,0), (\pi,\varepsilon,0)) \in  \text{DUR}(Z,\pi).
 \end{align}
Otherwise, if $Z \neq \emptyset$,  for all $(q,\theta_o,\theta_c) \in Z$,  
  \begin{align}\label{Eq7}
  (q, \theta_o, \textbf{IN}^{ctr}(\theta_c,\pi)) \in  \text{DUR}(Z,\pi);
 \end{align}
 \item Then, we repeatedly apply the following operations until convergence is achieved.
\begin{itemize}
  \item For all $(q,\theta_o,\theta_c) \in \text{DUR}(Z,\pi)$, if $\delta(q,\sigma)!$ and $\sigma \in [\theta_c]_1$ and $\textbf{NUM}([\theta_o]_1^+)\le N_o$ and $\textbf{NUM}([\theta_c]_2^+)\le N_c$, 

\begin{align}\label{Eq8}
     (\delta(q,\sigma), \textbf{IN}^{obs}(\theta_o,\sigma),  \textbf{PLUS}(\theta_c)) \in  {\text{DUR}}(Z,\pi);
 \end{align}

\item For all $(q,\theta_o,\theta_c) \in \text{DUR}(Z,\pi)$, if $[\theta_c]_2 \neq \varepsilon$, then 
 \begin{align}\label{Eq9}
  (q, \theta_o, \textbf{OUT}^{ctr}(\theta_c)) \in  \text{DUR}(Z,\pi);
\end{align}
 \item For all $(q,\theta_o,\theta_c) \in \text{DUR}(Z,\pi)$, if  $[\theta_o]_1 \neq \varepsilon$ and $[\theta_o]_2+1 \le N_{l,o}$, then for all $i \in [1, |[\theta_o]_1|]$
 \begin{align}\label{Eq10}
 (q,\textbf{LOSS}^{obs}(\theta_o,i), \theta_c) \in  \text{DUR}(Z,\pi);
\end{align}
\item For all $(q,\theta_o,\theta_c) \in \text{DUR}(Z,\pi)$, if $[\theta_c]_2 \neq \varepsilon$ and $[\theta_c]_3+1 \le N_{l,c}$, then for all $i \in [1, |[\theta_c]_2|]$
 \begin{align}\label{Eq11}
(q,\theta_o, \textbf{LOSS}^{ctr}(\theta_c,i)) \in  \text{DUR}(Z,\pi).
\end{align}
\end{itemize}
\end{enumerate} 

\begin{remark}
In the context of networked DESs, only when ``an observable event is communicated'' is it observable. 
The behaviors of ``an event (observable or not) occurs'', ``a control action is executed'', ``an observable event is lost'' and ``a control action is lost'' are all unobservable.
They are considered by (\ref{Eq8}), (\ref{Eq9}), (\ref{Eq10}), and (\ref{Eq11}), respectively.
Operation (\ref{Eq7-1}) is used to set the initial control action to be $\pi$, and operation (\ref{Eq7}) is used to insert the newly issued $\pi$ into the control channel.
\end{remark}

Intuitively, ${\text{DUR}}(Z,\pi)$ consists of all the augmented states that can be reached from augmented states  in $Z$ in an ``unobservable'' way.
\textcolor{blue}{Specifically, if $Z=\emptyset$, the plant has not been initialized, and $\pi$ is the initial control action. By assumption, $\pi$ can be executed without any delays and losses since it has been deployed in the actuator of the plant beforehand. Thus, we have $(q_0, (\varepsilon,0), (\pi,\varepsilon,0)) \in  \text{DUR}(Z,\pi)$ in (\ref{Eq7-1}). Otherwise, if $Z \neq \emptyset$, by FIFO, we should add $\pi$ to the end of the control channel. Thus, for all $(q,\theta_o,\theta_c) \in Z$,  we have $ (q, \theta_o, \textbf{IN}^{ctr}(\theta_c,\pi)) \in  \text{DUR}(Z,\pi)$ in (\ref{Eq7}).}
For any $(q,\theta_o,\theta_c) \in \text{DUR}(Z,\pi)$,   an event $\sigma$ can occur at $q$ iff (i) $\sigma$ is active at $q$, i.e., $\delta(q,\sigma)!$, (ii) $\sigma$ is allowed to occur by the control action that is taking effect, i.e., $\sigma \in [\theta_c]_1$, (iii) after the occurrence of $\sigma$, the control delays and the observation delays are no larger than $N_c$ and $N_o$, i.e., $\textbf{NUM}([\theta_c]_2^+) \le N_c  \wedge \textbf{NUM}([\theta_o]_1^+) \le N_o$. 
If  $\sigma$ occurs at $q$, then (i) the plant moves to state $\delta(q,\sigma)$, (ii) $\theta_o$ is updated to $\textbf{IN}^{obs}(\theta_o,\sigma)$ to record the occurrence of $\sigma$, and (iii) $\theta_c$ is updated to $\textbf{PLUS}(\theta_c)$  to track the control delays. 
This is illustrated by  $ (\delta(q,\sigma), \textbf{IN}^{obs}(\theta_o,\sigma),  \textbf{PLUS}(\theta_c)) \in  \text{DUR}(Z,\pi)$ in (\ref{Eq8}).
Furthermore, for any $(q,\theta_o,\theta_c) \in \text{DUR}(Z,\pi)$, a control action can be executed if the queue of control actions delayed  at the control channel is not empty, i.e., $[\theta_c]_2 \neq \varepsilon$.
We write $[\theta_c]_2=(\pi_1,m_1) \cdots (\pi_h,m_h)$ for $\pi_i \in \Pi$ and $m_i \in [0,N_c]$.
When a new control action $\pi$ is executed, by FIFO,  the control actions delayed at the control channel are $\pi_2\cdots\pi_h$, and the control action that is taking effect becomes $\pi=\pi_1$.
This is illustrated by $ (q, \theta_o, \textbf{OUT}^{ctr}(\theta_c)) \in  \text{DUR}(Z,\pi)$ in (\ref{Eq9}).
For any $(q,\theta_o,\theta_c) \in \text{DUR}(Z,\pi)$, an observable event occurrence can be lost if the observation channel is not empty, and the consecutive observation losses are no larger than $N_{l,o}$ after the observation loss, i.e., $[\theta_o]_1 \neq \varepsilon$ and $[\theta_o]_2 +1 \le N_{l,o}$. 
When the $i$th, $i \in [1,|[\theta_o]_1|]$ observable event is lost from the observation channel, we have $(q,\textbf{LOSS}^{obs}(\theta_o,i), \theta_c) \in  \text{DUR}(Z,\pi)$ in (\ref{Eq10}).
Similarly, if $[\theta_c]_2 \neq \varepsilon$ and $[\theta_c]_3+1 \le N_{l,c}$, we know the $i$th, $i \in [1,|[\theta_c]_2|]$  control action may be lost from the control channel.
Thus, $(q,\theta_o, \textbf{LOSS}^{ctr}(\theta_c,i)) \in  \text{DUR}(Z,\pi)$ in (\ref{Eq11}).



\emph{Step 2:} Let $Z \subseteq Q \times \Theta_o \times \Theta_c$ be a given augmented state estimate. 
The delayed observable reach of  $Z$ under an observable event $\sigma \in \Sigma_o$, denoted by $\text{DOR}(Z,\sigma)$,  is defined as:
\begin{align}\label{Eq12}
\text{DOR}(Z,\sigma)=&\{(q,\textbf{OUT}^{obs}(\theta_o),\theta_c):(\exists (q,\theta_o,\theta_c) \in Z)\notag\\
&[\theta_o]_1=(\sigma_1,n_1)\cdots(\sigma_k,n_k) \neq \varepsilon \wedge \sigma_1=\sigma\}.
\end{align}

 $\text{DOR}(Z,\sigma)$ includes all the augmented states that can be reached from the augmented states in  $Z$ following a new communication of $\sigma$. 
By FIFO, an observable event can be communicated iff it is the first event queued at the observation channel.
Hence, we only consider all the $\sigma \in \Sigma_o$ such that there exists $(q,\theta_o,\theta_c) \in Z$ with $[\theta_o]_1=(\sigma_1,n_1)\cdots(\sigma_k,n_k) \neq \varepsilon \wedge \sigma_1=\sigma$. 
When $\sigma$ is communicated, we remove $(\sigma_1,n_1)$ from $\theta_o$.
Thus, after communication of $\sigma$, $\theta_o$ is updated to $\textbf{OUT}^{obs}(\theta_o)$.
We assume that $\text{DOR}(Z,\sigma)$ is updated immediately after a new observation $\sigma$ but before the next control action is issued.
Therefore, we keep $\theta_c$ unchanged in (\ref{Eq12}).

For a communicated  string $t\in \Sigma_o^*$, let the set of augmented states calculated  by alternatively applying \emph{Step 1} and \emph{Step 2} be the augmented state estimate for $t$.
Formally, 
\begin{definition}\label{Def5}
Given a system $G$ and a networked supervisor $S$ defined over $\Sigma_o^*$, for any $t \in f^{-1}(\psi^f(\mathcal{L}(G_S)))$, let $\tilde{\mathcal{E}}_S(t)$ be the augmented state estimate calculated by alternatively applying $\textup{DUR}(\cdot)$ and $\textup{DOR}(\cdot)$ as follows:

\begin{itemize}
  \item Initially,  $\tilde{\mathcal{E}}_S(\varepsilon)=\textup{DUR}(\emptyset, S(\varepsilon))$;
  \item For all $t^i,t^i\sigma_{i+1} \in \overline{\{t\}}$, $i=0,1,\ldots,|t|-1$,  $$\tilde{\mathcal{E}}_S(t^i\sigma_{i+1})=\textup{DUR}(\textup{DOR}(\tilde{\mathcal{E}}_S(t^i), \sigma_{i+1}),S(t^i\sigma_{i+1})).$$
\end{itemize}

\end{definition}

An example to illustrate the state estimation process will be provided in the next section.
We next discuss the relationship between  $\tilde{\mathcal{E}}_S(t)$ and ${\mathcal{E}}_S(t)$.

\begin{proposition}\label{Prop3}
Given a system $G$ and a networked supervisor $S$ defined over $\Sigma_o^*$, for all $t \in f^{-1}(\psi^f(\mathcal{L}(G_S)))$, we have
\begin{align}\label{Eq13}
&(q,\theta_o,\theta_c)\in \tilde{\mathcal{E}}_S(t) \Rightarrow (\exists \mu \in \mathcal{L}(G_S))\notag \\
& f^{-1}(\psi^f(\mu))=t \wedge \tilde{\delta}(\tilde{q}_0,\mu)=(a,x,n,\phi,y,m,p)\wedge \notag \\
& q=a \wedge \theta_o=(x,n) \wedge \theta_c=(\phi,y,m).
\end{align}
\end{proposition}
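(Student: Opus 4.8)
The plan is to prove this soundness direction by induction on $|t|$, following the recursive structure of Definition \ref{Def5}, with an inner induction on the number of closure rules applied inside each $\textup{DUR}(\cdot)$ computation. The guiding principle is that every operation used by the online procedure corresponds to appending exactly one event to a run $\mu$ of $G_S$: an application of (\ref{Eq8}) matches a transition on some $\sigma\in\Sigma$ via (\ref{Eq1}); (\ref{Eq9}) matches a transition on $g(\pi_1)\in\Sigma^g$ via (\ref{Eq55}); (\ref{Eq10}) matches a transition on $o(i)\in\Sigma^o$ via (\ref{Eq2}); (\ref{Eq11}) matches a transition on $c(i)\in\Sigma^c$ via (\ref{Eq4}); and the Step~2 operation $\textbf{OUT}^{obs}$ inside $\textup{DOR}(\cdot)$, composed with the initial insertion (\ref{Eq7}), matches a transition on $f(\sigma)\in\Sigma^f$ via (\ref{Eq3}). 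Since none of the events in $\Sigma\cup\Sigma^g\cup\Sigma^o\cup\Sigma^c$ lie in $\Sigma^f$, each closure rule leaves $f^{-1}(\psi^f(\mu))$ unchanged, whereas each $\textup{DOR}(\cdot)$ step appends one symbol to it; this is precisely why the communicated string is tracked correctly by the construction.

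For the base case $t=\varepsilon$, I start from $\textup{DUR}(\emptyset,S(\varepsilon))$. The initial element $(q_0,(\varepsilon,0),(S(\varepsilon),\varepsilon,0))$ from (\ref{Eq7-1}) is witnessed by $\mu=\varepsilon$, since $\tilde\delta(\tilde q_0,\varepsilon)=\tilde q_0=(q_0,\varepsilon,0,S(\varepsilon),\varepsilon,0,x_0)$ and $f^{-1}(\psi^f(\varepsilon))=\varepsilon$. I then run the inner induction on derivation length: assuming $(q,\theta_o,\theta_c)$ is witnessed by some $\mu$ with $\tilde\delta(\tilde q_0,\mu)=(q,[\theta_o]_1,[\theta_o]_2,[\theta_c]_1,[\theta_c]_2,[\theta_c]_3,p)$ and $f^{-1}(\psi^f(\mu))=\varepsilon$, I verify that each rule adding a new augmented state does so by appending an event whose guard in $\tilde\delta$ is literally the guard of the rule; for instance the condition $\delta(q,\sigma)!\wedge\sigma\in[\theta_c]_1\wedge\textbf{NUM}([\theta_o]_1^+)\le N_o\wedge\textbf{NUM}([\theta_c]_2^+)\le N_c$ of (\ref{Eq8}) is exactly the guard of (\ref{Eq1}), and the component updates $\textbf{IN}^{obs}$ and $\textbf{PLUS}$ coincide with the updates of (\ref{Eq1}). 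This matching is mechanical but must be checked componentwise for all seven coordinates.

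For the inductive step, suppose the claim holds for $t^i$ and let $t^{i+1}=t^i\sigma_{i+1}$, so that $\tilde{\mathcal E}_S(t^{i+1})=\textup{DUR}(\textup{DOR}(\tilde{\mathcal E}_S(t^i),\sigma_{i+1}),S(t^{i+1}))$. I again run an inner induction on the $\textup{DUR}$ derivation, but the initial elements now need care. Such an element has the form $(q,\theta_o,\textbf{IN}^{ctr}(\theta_c,S(t^{i+1})))$ with $(q,\theta_o,\theta_c)\in\textup{DOR}(\tilde{\mathcal E}_S(t^i),\sigma_{i+1})$, which by (\ref{Eq12}) arises from some $(q,\bar\theta_o,\theta_c)\in\tilde{\mathcal E}_S(t^i)$ whose observation channel has $\sigma_{i+1}$ as its first queued event. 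Applying the outer induction hypothesis yields a witness $\mu'\in\mathcal L(G_S)$ for $(q,\bar\theta_o,\theta_c)$ with $f^{-1}(\psi^f(\mu'))=t^i$; I then extend it to $\mu=\mu' f(\sigma_{i+1})$. The guard $\sigma=\sigma_1$ of (\ref{Eq3}) holds because $\sigma_{i+1}$ is first in the queue, and the transition performs $\textbf{OUT}^{obs}$ on the observation configuration (matching $\textup{DOR}$) while appending the newly issued action to the control queue (matching $\textbf{IN}^{ctr}$).

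The crux of this step, and the place I expect the main difficulty, is showing that the action issued by $G_S$ along $f(\sigma_{i+1})$, namely $\gamma(\xi(p,\sigma_{i+1}))$ in (\ref{Eq3}), equals the action $S(t^{i+1})$ that the algorithm feeds into $\textbf{IN}^{ctr}$. This is where Proposition \ref{Prop1} enters: since $\mu'$ witnesses a state of $\tilde{\mathcal E}_S(t^i)$ with $f^{-1}(\psi^f(\mu'))=t^i$, its supervisor coordinate satisfies $p=\xi(x_0,t^i)$, hence $\gamma(\xi(p,\sigma_{i+1}))=\gamma(\xi(x_0,t^i\sigma_{i+1}))=S(t^{i+1})$, so the two constructions agree and $f^{-1}(\psi^f(\mu))=t^i\sigma_{i+1}=t^{i+1}$. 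Once the initial elements are handled, the remaining closure rules are dispatched exactly as in the base case, which completes the inner induction and hence the outer induction on $|t|$.
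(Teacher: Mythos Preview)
Your proposal is correct and follows essentially the same approach as the paper's proof: an outer induction on $|t|$ with an inner induction on the derivation length inside $\textup{DUR}(\cdot)$, matching each closure rule (\ref{Eq8})--(\ref{Eq11}) to the corresponding transition rule (\ref{Eq1}), (\ref{Eq55}), (\ref{Eq2}), (\ref{Eq4}) of $G_S$, and handling the $\textup{DOR}$/$\textbf{IN}^{ctr}$ step via an $f(\sigma)$-transition using Proposition~\ref{Prop1} to synchronize the supervisor coordinate. The paper packages the inner induction as a separate Claim (your rules-to-transitions correspondence) and introduces the notation $\mathcal{T}(t)$ for the witnessed augmented states, but the substance is identical.
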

\begin{proposition}\label{Prop4}
For any $\mu \in \mathcal{L}(G_S)$, we write $\tilde{\delta}(\tilde{q}_0,\mu)=\tilde{q}=(a,x,n,\phi,y,m,p)$.
Then,  $(q,\theta_c,\theta_o) \in \tilde{\mathcal{E}}_S(f^{-1}(\psi^f(\mu)))$, where $q=a$, $\theta_o=(x,n)$, and $\theta_c=(\phi,y,m)$.
\end{proposition}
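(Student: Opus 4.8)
The plan is to prove the statement by induction on the length $|\mu|$, case-splitting on the last symbol $\nu$ of $\mu$; note that this is precisely the converse inclusion to Proposition \ref{Prop3}, so together they will show that $\tilde{\mathcal{E}}_S(t)$ captures exactly the decoded reachable states of $G_S$. Throughout, for a state $\tilde{\delta}(\tilde{q}_0,\mu)=(a,x,n,\phi,y,m,p)$ I write its \emph{decoding} as $(a,(x,n),(\phi,y,m))$ and abbreviate $t=f^{-1}(\psi^f(\mu))$; since $\mathcal{L}(G_S)$ is prefix-closed, every prefix of $\mu$ is again in $\mathcal{L}(G_S)$, so the induction is well-founded. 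For the base case $\mu=\varepsilon$ we have $\tilde{\delta}(\tilde{q}_0,\varepsilon)=\tilde{q}_0=(q_0,\varepsilon,0,S(\varepsilon),\varepsilon,0,x_0)$ and $t=\varepsilon$, so its decoding $(q_0,(\varepsilon,0),(S(\varepsilon),\varepsilon,0))$ lies in $\tilde{\mathcal{E}}_S(\varepsilon)=\textup{DUR}(\emptyset,S(\varepsilon))$ directly by (\ref{Eq7-1}).

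For the inductive step, write $\mu=\mu'\nu$ with $\tilde{\delta}(\tilde{q}_0,\mu')=(a',x',n',\phi',y',m',p')$ and $t'=f^{-1}(\psi^f(\mu'))$; by the induction hypothesis the decoding $(a',(x',n'),(\phi',y',m'))$ belongs to $\tilde{\mathcal{E}}_S(t')$. I split into five cases according to which of $\Sigma,\Sigma^o,\Sigma^c,\Sigma^g,\Sigma^f$ contains $\nu$. When $\nu\in\Sigma\cup\Sigma^o\cup\Sigma^c\cup\Sigma^g$, the symbol is erased by $\psi^f$, so $t=t'$ and we must remain inside the \emph{same} estimate $\tilde{\mathcal{E}}_S(t')$. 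The key observation in each such case is that the side conditions and the component updates of the communication automaton coincide, entry by entry, with those of exactly one closure operation of \emph{Step 1}: the event transition (\ref{Eq1}) matches (\ref{Eq8}) via $\textbf{IN}^{obs}$ and $\textbf{PLUS}$; the observation loss (\ref{Eq2}) matches (\ref{Eq10}) via $\textbf{LOSS}^{obs}$; the control loss (\ref{Eq4}) matches (\ref{Eq11}) via $\textbf{LOSS}^{ctr}$; and the control execution (\ref{Eq55}) matches (\ref{Eq9}) via $\textbf{OUT}^{ctr}$. Because $\tilde{\mathcal{E}}_S(t')$ is by construction closed under these operations, and the relevant side condition holds exactly because $\tilde{\delta}(\tilde{q}',\nu)!$, applying the matching operation to the decoding of $\tilde{q}'$ produces the decoding of $\tilde{\delta}(\tilde{q}_0,\mu)$ inside $\tilde{\mathcal{E}}_S(t')=\tilde{\mathcal{E}}_S(t)$.

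The remaining, and main, case is the communication $\nu=f(\sigma)$, where the observed string grows: $t=t'\sigma$ and $\tilde{\mathcal{E}}_S(t'\sigma)=\textup{DUR}(\textup{DOR}(\tilde{\mathcal{E}}_S(t'),\sigma),S(t'\sigma))$. Here (\ref{Eq3}) requires $x'=(\sigma_1,n_1)\cdots(\sigma_k,n_k)\neq\varepsilon$ with $\sigma_1=\sigma$ and updates the observation channel to $(\sigma_2,n_2)\cdots(\sigma_k,n_k)$ with the loss counter reset to $0$, which is exactly $\textbf{OUT}^{obs}(x',n')$; hence the intermediate decoding lies in $\textup{DOR}(\tilde{\mathcal{E}}_S(t'),\sigma)$ by (\ref{Eq12}). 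The delicate point is the control action appended in (\ref{Eq3}), namely $y=y'(\gamma(\xi(p',\sigma)),0)$: I must check it coincides with the action $S(t'\sigma)$ that \emph{Step 1} inserts through $\textbf{IN}^{ctr}$ in (\ref{Eq7}). This is where Proposition \ref{Prop1} is essential, since it gives $p'=\xi(x_0,t')$, whence $\gamma(\xi(p',\sigma))=\gamma(\xi(x_0,t'\sigma))=S(t'\sigma)$. With this identification, (\ref{Eq7}) (applicable because the DOR set is nonempty) sends the intermediate decoding to the decoding of $\tilde{\delta}(\tilde{q}_0,\mu)$, placing it in $\tilde{\mathcal{E}}_S(t'\sigma)$ and closing the induction.

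I expect the bookkeeping that the newly issued action equals $S(t'\sigma)$, together with the routine but careful component-wise verification that each automaton update agrees with its configuration operation, to be the only non-mechanical part of the argument.
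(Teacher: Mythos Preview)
Your proposal is correct and follows essentially the same approach as the paper's own proof: induction on $|\mu|$ with the same base case via (\ref{Eq7-1}), the same five-way case split on the last symbol, the same one-to-one matching of the automaton transitions (\ref{Eq1})--(\ref{Eq55}) with the closure rules (\ref{Eq8})--(\ref{Eq11}) and the DOR/$\textbf{IN}^{ctr}$ step (\ref{Eq12}),(\ref{Eq7}), and the same use of Proposition~\ref{Prop1} to identify $\gamma(\xi(p',\sigma))=S(t'\sigma)$ in the communication case. The paper's appendix merely spells out the component-wise verifications that you describe as routine.
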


Given a set of augmented states $Z \in  2^{Q \times \Theta_o \times \Theta_c}$, let $\textup{FC}(Z)=\{q \in Q:(\exists (q,\theta_o,\theta_c) \in Z)\}$ be the set of first components of augmented states in $Z$.
The following theorem shows that $\textup{FC}(\tilde{\mathcal{E}}_S(t))$ indeed estimates the states of the controlled system.

\begin{theorem}\label{Theo1}
Given automaton $G$ and a networked supervisor $S$, for all $t \in f^{-1}(\psi^f(\mathcal{L}(G_S)))$,  $\textup{FC}(\tilde{\mathcal{E}}_S(t))={\mathcal{E}}_S(t)$.
\end{theorem}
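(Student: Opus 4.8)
The plan is to establish the set equality $\textup{FC}(\tilde{\mathcal{E}}_S(t))={\mathcal{E}}_S(t)$ by proving the two inclusions separately, and the key observation is that both inclusions follow almost immediately from Propositions \ref{Prop3} and \ref{Prop4}, which together already provide a tight correspondence between the augmented state estimate $\tilde{\mathcal{E}}_S(t)$ and the reachable configurations of the communication automaton $G_S$. The strategy is therefore to ``project'' that correspondence onto the first component $q$ and match it against the definition of $\mathcal{E}_S(t)$ in (\ref{Eq6}). First I would fix an arbitrary $t \in f^{-1}(\psi^f(\mathcal{L}(G_S)))$ and unfold the definitions of both sides, noting that $q \in \textup{FC}(\tilde{\mathcal{E}}_S(t))$ means there exist $\theta_o,\theta_c$ with $(q,\theta_o,\theta_c)\in \tilde{\mathcal{E}}_S(t)$, while $q \in \mathcal{E}_S(t)$ means there exists $\mu \in \mathcal{L}(G_S)$ with $q=\delta(q_0,\psi(\mu))$ and $t=f^{-1}(\psi^f(\mu))$.

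For the inclusion $\textup{FC}(\tilde{\mathcal{E}}_S(t)) \subseteq \mathcal{E}_S(t)$, I would take any $q \in \textup{FC}(\tilde{\mathcal{E}}_S(t))$, pick witnesses $\theta_o,\theta_c$ so that $(q,\theta_o,\theta_c)\in \tilde{\mathcal{E}}_S(t)$, and apply Proposition \ref{Prop3}. This yields a string $\mu \in \mathcal{L}(G_S)$ with $f^{-1}(\psi^f(\mu))=t$ and $\tilde{\delta}(\tilde{q}_0,\mu)=(a,x,n,\phi,y,m,p)$ where in particular $q=a$. It then remains to identify $a$ with $\delta(q_0,\psi(\mu))$; this is exactly part (i) of Proposition \ref{Prop1}. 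Hence $q=\delta(q_0,\psi(\mu))$ with $t=f^{-1}(\psi^f(\mu))$, so $q \in \mathcal{E}_S(t)$ by Definition \ref{Def2}. For the reverse inclusion $\mathcal{E}_S(t) \subseteq \textup{FC}(\tilde{\mathcal{E}}_S(t))$, I would take any $q \in \mathcal{E}_S(t)$, obtain from (\ref{Eq6}) a string $\mu \in \mathcal{L}(G_S)$ with $q=\delta(q_0,\psi(\mu))$ and $t=f^{-1}(\psi^f(\mu))$, and write $\tilde{\delta}(\tilde{q}_0,\mu)=(a,x,n,\phi,y,m,p)$. Proposition \ref{Prop1}(i) again gives $a=\delta(q_0,\psi(\mu))=q$, and Proposition \ref{Prop4} tells us that $(a,(x,n),(\phi,y,m)) \in \tilde{\mathcal{E}}_S(f^{-1}(\psi^f(\mu)))=\tilde{\mathcal{E}}_S(t)$. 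Taking the first component, $q=a \in \textup{FC}(\tilde{\mathcal{E}}_S(t))$, which closes the argument.

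I expect the proof itself to be short once Propositions \ref{Prop1}, \ref{Prop3}, and \ref{Prop4} are in hand, since the theorem is essentially the ``first-component shadow'' of the bidirectional equivalence those propositions establish. The genuine difficulty is not in this theorem but is deferred into Propositions \ref{Prop3} and \ref{Prop4}, whose proofs carry the full inductive burden of showing that the online reachability operators $\textup{DUR}(\cdot)$ and $\textup{DOR}(\cdot)$ enumerate precisely the configurations reachable in $G_S$ consistent with the observed string $t$. The main obstacle, should one wish to make this theorem self-contained, would be verifying that the existential witnesses $\theta_o,\theta_c$ and $\mu$ are correctly matched across the two representations; but since the propositions are quoted as available, the only care needed here is to ensure the $t$ ranges correctly over $f^{-1}(\psi^f(\mathcal{L}(G_S)))$ and that the equality in the augmented estimate really is at the observed string $t$ and not some prefix, which Proposition \ref{Prop4} supplies directly by evaluating $\tilde{\mathcal{E}}_S$ at $f^{-1}(\psi^f(\mu))$.
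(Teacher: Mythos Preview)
Your proposal is correct and follows essentially the same approach as the paper: both directions are obtained by combining Proposition~\ref{Prop3} (for $\textup{FC}(\tilde{\mathcal{E}}_S(t)) \subseteq \mathcal{E}_S(t)$) and Proposition~\ref{Prop4} (for the reverse inclusion), with Proposition~\ref{Prop1}(i) supplying the identification $a=\delta(q_0,\psi(\mu))$. If anything, you are slightly more explicit than the paper in invoking Proposition~\ref{Prop1} during the first inclusion, which the paper's proof leaves implicit.
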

\begin{proof}
We first prove $\textup{FC}(\tilde{\mathcal{E}}_S(t)) \subseteq {\mathcal{E}}_S(t)$.
For any $(q,\theta_o,\theta_c) \in \tilde{\mathcal{E}}_S(t)$, by Proposition \ref{Prop3} and Definition \ref{Def2}, $q \in {\mathcal{E}}_S(t)$.
Therefore,  $\textup{FC}(\tilde{\mathcal{E}}_S(t)) \subseteq {\mathcal{E}}_S(t)$.
Next, we prove that  ${\mathcal{E}}_S(t) \subseteq \textup{FC}(\tilde{\mathcal{E}}_S(t))$.
For any $q \in {\mathcal{E}}_S(t)$, by Definition \ref{Def2}, $\exists \mu \in \mathcal{L}(G_S)$ such that $f^{-1}(\psi^f(\mu))=t$ and $q=\delta(q_0,\psi(\mu))$.
We write $\tilde{\delta}(\tilde{q}_0,\mu)=(a,x,n,\phi,y,m,p)$.
By Proposition \ref{Prop1}, $q=a$.
By Proposition \ref{Prop4}, $(q',\theta_c,\theta_o) \in \tilde{\mathcal{E}}_S(f^{-1}(\psi^f(\mu)))=\tilde{\mathcal{E}}_S(t)$, where $q'=a$, $\theta_o=(x,n)$, and $\theta_c=(\phi,y,m)$.
Therefore, $q=a=q' \in \textup{FC}(\tilde{\mathcal{E}}_S(t))$.
Since $q$ is arbitrarily given,  ${\mathcal{E}}_S(t) \subseteq \textup{FC}(\tilde{\mathcal{E}}_S(t))$.
\end{proof}

\begin{figure}
	\begin{center}
		\includegraphics[width=7.5cm]{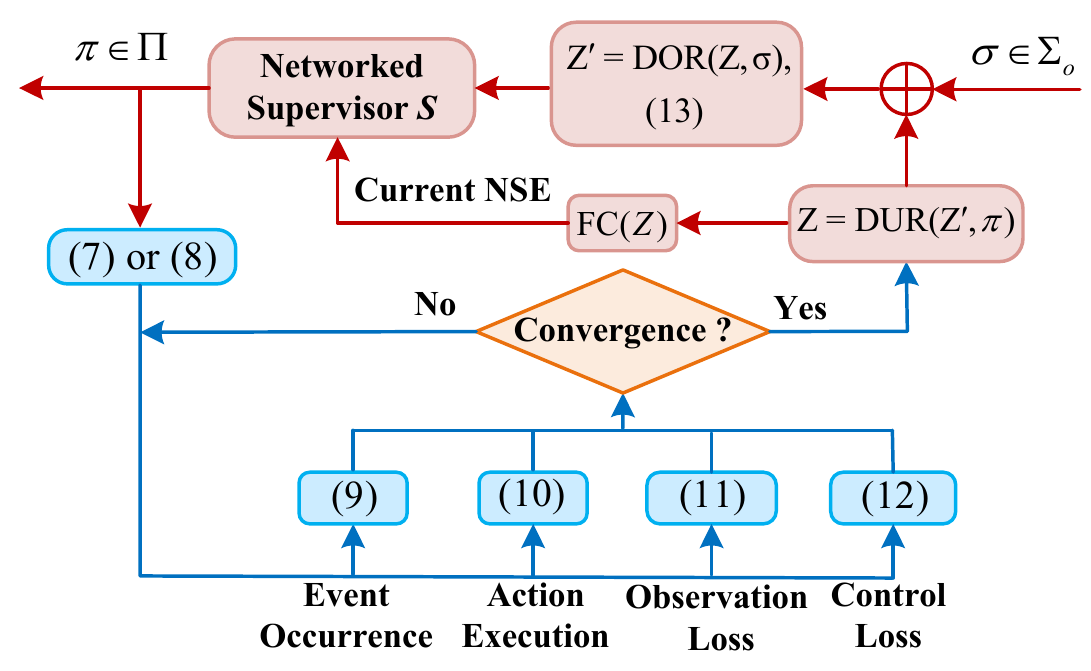}    
		\caption{Online state estimation under communication delays and losses.} 
		\label{Fig4}
	\end{center}
\end{figure}

\begin{remark}
The online  process for estimating the states of the controlled system under communication delays and losses is depicted in Fig.\ref{Fig4}, which is briefly summarized as repeatedly executing: (i) an
observable event occurrence $\sigma \in \Sigma_o$ is communicated to the networked supervisor, and $Z'$ is updated to $Z'=\textup{DOR}(Z,\sigma)$; 
(ii)
a newly issued control action $\pi \in \Pi$ is sent to the actuator of the plant, and the augmented state estimate $Z=\textup{DUR}(Z',\pi)$ is then calculated using (\ref{Eq7-1})$\sim$(\ref{Eq11}).
By Theorem \ref{Theo1}, the current NSE can be obtained by taking all the first components of augmented states in $Z$, i.e., $\textup{FC}(Z)$.
\end{remark}

\begin{remark}
By Fig. \ref{Fig4}, the augmented state estimate is updated only when  a new control action $\pi \in \Pi$ is issued (following a new observation of $\sigma\in \Sigma_o$).
For any $Z \in  2^{Q \times \Theta_o \times \Theta_c}$, the complexities for computing $\textup{DOR}(Z,\sigma)$ and  $\textup{DUR}(\textup{DOR}(Z,\sigma),\pi)$ are linear in the size of $\Theta_o$ and  $Q \times \Theta_o\times \Theta_c$, respectively.
Therefore, the computational complexity for the augmented state estimation is a stepwise order of $\mathcal{O}(|Q| \times |\Theta_o| \times |\Theta_c|)$.
Since $\Theta_o \subseteq (\Sigma_o\times [0,N_o])^{\le N_o+1}\times [0,N_{l,o}]$ and  $\Theta_c \subseteq \Pi \times (\Pi \times [0,N_c])^{\le N+1} \times [0,N_{l,c}]$, we have
\begin{align*}
\mathcal{O}(|Q| \times |\Theta_o| \times |\Theta_c|)=&\mathcal{O}(|Q|\times |\Sigma|^{N_o}\times 2^{N \times |\Sigma|}\times (N_o+1)^{N_o}\\
&  \times (N_c+1)^N \times N_{l,o} \times N_{l,c}).
\end{align*}
\textcolor{blue}{The complexity of the proposed approach after each new observation is polynomial with respect to (w.r.t.) $|Q|$, $N_{l,c}$ and $N_{l,o}$ but is exponential w.r.t. $|\Sigma|$, $N_c$, and $N_o$.
The complexity of the proposed approach grows rapidly with the cardinality of the event set and the delay bounds $N_c$ and $N_o$. 
Thus, the proposed approach is more suitable for estimating the states of a networked DES with relatively small $|\Sigma|$, $N_c$, and $N_o$.
}
\end{remark}

\section{\textcolor{blue}{Comparison with the existing work}}

In this section, we show the difference between the proposed state estimation approach with that proposed in \cite{liu21tac}.
To be consistent with \cite{liu21tac},  we assume in this section that there are only control delays with an upper bound of $N_c$, and there are no control losses, observation delays, and observation losses, i.e., $N_{l,c}=N_o=N_{l,o}=0$.

We first recall the framework adopted in \cite{liu21tac} for specifying the language of the controlled system under control delays.
The framework adopted by \cite{liu21tac} was first proposed in \cite{lin2014control}, where an event $\sigma$ can occur after a string $s \in \mathcal{L}(G)$ if $\sigma$ is allowed to occur by one of the control commands issued in the past $N_{c}$ steps, i.e., $S(P(s_{-i}))$, $i=0,1,\ldots,N_c$.
Formally, the dynamics of the controlled system were specified in \cite{liu21tac} as follows.
\begin{definition}
For a networked DES $G$ with a supervisor $S$, the  language $\mathcal{L}_a(S/G)$ that may be generated by the controlled system is defined recursively as follows:
\begin{align*}
&\varepsilon \in \mathcal{L}_a(S/G);\\
& s\sigma \in \mathcal{L}_a(S/G) \Leftrightarrow s \in \mathcal{L}_a(S/G)\\
& \wedge s\sigma \in \mathcal{L}(G)\wedge(\exists i \in [0,N_c])\sigma \in S(P(s_{-i})).
\end{align*}
\end{definition} 

For all $t \in P(\mathcal{L}_a(S/G))$,  techniques were developed in \cite{liu21tac} for estimating states $q\in Q$ such that there exists $s\in \mathcal{L}_a(S/G)$ with $q=\delta(q_0,s)$ and $t=P(s)$.
Specifically, in \cite{liu21tac}, the state of the control channel is modeled as a set of control actions that have been issued in the past $N_c$ steps, and the state of the controlled system (named as the extended state in \cite{liu21tac}) consists of both the state of the plant and the state of the control channel.
An event can occur at an extended state, if and only if, it is active at the plant state and is allowed to occur by one of the control actions issued in the past $N_c$ steps.
Then, the state estimate of the controlled system can be calculated by estimating all the possible extended states that the controlled system may be in.
Next, we show  that the state estimate calculated in \cite{liu21tac} is actually  an overapproximation of the exact state estimate of the controlled system and may contain states that the controlled system never reaches.

\begin{example}\label{Ex4}
Again, we consider $G$ depicted in Fig. \ref{Fig21} with $\Sigma_c=\Sigma=\{\alpha,\beta,\lambda\}$ and  $\Sigma_o=\{\alpha\}$.
The networked supervisor $S=(A,\gamma)$ is depicted in  Fig. \ref{Fig22}.
We have $S(\varepsilon)=\pi_1=\{\alpha,\lambda\}$, $S(\alpha)=\pi_2=\{\beta\}$, and $S(t)=\pi_3=\emptyset$ for all $t \in \Sigma_o^* \setminus \{\varepsilon,\alpha\}$.
The control delays are upper bounded by 2, i.e., $N_c=2$, and $N_{l,c}=N_o=N_{l,o}=0$.
We first show that $s=\alpha\beta\lambda \in \mathcal{L}_a(S/G)$.

\begin{enumerate}
\item $\varepsilon \in \mathcal{L}_a({S}/G)$;
\item Since $\alpha \in S(\varepsilon)$, we have $\alpha \in \mathcal{L}_a({S}/G)$;
\item Since $\beta \in S(P(\alpha))=S(\alpha)$, $\alpha \in \mathcal{L}_a({S}/G)$, and $\alpha\beta\in \mathcal{L}(G)$, we have $\alpha\beta \in \mathcal{L}_a({S}/G)$;
\item Since $\lambda \in S(P((\alpha\beta)_{-2}))=S(\varepsilon)$, $\alpha\beta \in \mathcal{L}_a({S}/G)$, and  $\alpha\beta\lambda \in \mathcal{L}(G)$, we have $\alpha\beta\lambda \in \mathcal{L}_a({S}/G)$.
\end{enumerate}

Since $s=\alpha\beta\lambda \in \mathcal{L}_a({S}/G)$ and $P(s)=\alpha$ and $q_5=\delta(q_1,s)$, $q_5$ is included in the state estimate calculated in \cite{liu21tac} for $\alpha \in \mathcal{L}_a({S}/G)$.
We next show that $\alpha\beta\lambda$ never occurs in practice.
Since $\lambda \in S(\varepsilon)$ and $\lambda \notin S(\alpha)$, $\lambda$ can occur after $\alpha\beta$ only if the control action that is taking effect after $\alpha\beta$ is  $S(\varepsilon)$.
However, since $\beta \in S(\alpha)$ and $\beta \notin S(\varepsilon)$,  $S(\alpha)$ must have been executed at the time  $\beta$ occurs after $\alpha$.
That is, $S(\varepsilon)$ has been replaced by $S(\alpha)$ after the occurrence of $\alpha\beta$.
Thus, $\alpha\beta\lambda$ never occurs in reality, and the controlled system never reaches $q_5$.
\end{example}

By Example \ref{Ex4},  not all the control actions issued in the past $N_c$ steps can take effect at the moment.
Thus, an event that is active at the plant state and is allowed to occur by one of the control actions issued in the past $N_c$ steps could never occur in practice.
As a result, the state estimate calculated in \cite{liu21tac} may contain states that the controlled system never is in.
In contrast to \cite{liu21tac}, we introduce a new modeling framework for supervisory control of networked DESs (in Section III). 
The state estimation approach proposed under the framework differs crucially from that presented in \cite{liu21tac} in the sense that the proposed approach explicitly models the control action (in the \textsl{control channel configuration}) that can really take effect at each instant. 
As shown in the following example, the proposed approach improves the previous approach because it excludes those states that the controlled system never reaches.

\begin{example}\label{Ex5}
We continue with Example \ref{Ex4}.
We calculate $\tilde{\mathcal{E}}_S(\alpha)$ and ${\mathcal{E}}_S(\alpha)$ using approaches proposed in this paper.

Initially, by Definition \ref{Def5} and  (\ref{Eq7-1}), $\tilde{q}_1=(q_1,(\varepsilon,0),(\pi_1,\varepsilon,0)) \in \tilde{\mathcal{E}}_S(\varepsilon)$.
Since $\delta(q_1,\alpha)!$, $\alpha \in \pi_1$, and $\textbf{NUM}(\varepsilon^+)=0 \le N_c, N_o$, by (\ref{Eq8}), $\tilde{q}_2=(q_2,((\alpha,0),0),(\pi_1,\varepsilon,0)) \in \tilde{\mathcal{E}}_S(\varepsilon)$.
Thus, $\tilde{\mathcal{E}}_S(\varepsilon)=\{\tilde{q}_1,\tilde{q}_2\}$. 

Next, if event $\alpha$ is observed, by $(\ref{Eq12})$, $\textup{DOR}(\tilde{\mathcal{E}}_S(\varepsilon),\alpha)=\{(q_2,(\varepsilon,0),(\pi_1,\varepsilon,0))\}.$
Upon the observation of $\alpha$, $S$ issues $\pi_2$.
By Definition \ref{Def5}, $\tilde{\mathcal{E}}_S(\alpha)=\textup{DUR}(\textup{DOR}(\tilde{\mathcal{E}}_S(\varepsilon),\alpha),\pi_2)$.
By (\ref{Eq7}), $\tilde{q}_3=(q_2,(\varepsilon,0),(\pi_1,(\pi_2,0),0))\in \tilde{\mathcal{E}}_S(\alpha)$.
Since $\delta(q_2,\lambda)=q_3$, $\lambda \in \pi_1$, $\textbf{NUM}(\varepsilon^+)=0 \le N_o$, and $\textbf{NUM}((\pi_2,0)^+)=1 \le N_c$, only event $\lambda$ can occur at $\tilde{q}_3$ ($\beta$ is disabled by $\pi_1$).
By (\ref{Eq9}), $\tilde{q}_4=(q_3,(\varepsilon,0),(\pi_1,(\pi_2,1),0))\in \tilde{\mathcal{E}}_S(\alpha)$.
When control action $\pi_2$ is executed, by (\ref{Eq9}), $\tilde{q}_5=(q_2,(\varepsilon,0),(\pi_2,(\varepsilon,0),0))\in \tilde{\mathcal{E}}_S(\alpha)$ and  $\tilde{q}_6=(q_3,(\varepsilon,0),(\pi_2,(\varepsilon,0),0))\in \tilde{\mathcal{E}}_S(\alpha)$.
Since $\delta(q_3,\beta)=q_4$, $\beta \in \pi_2$, and $\textbf{NUM}(\varepsilon^+)=0 \le N_c, N_o$, by (\ref{Eq8}), we have that $\tilde{q}_7=(q_4,(\varepsilon,0),(\pi_2,\varepsilon,0)) \in \tilde{\mathcal{E}}_S(\alpha)$.
Since $\lambda \notin \pi_2$ and only $\lambda$ is active at $q_4$ in $G$,  $\lambda$ cannot occur at $\tilde{q}_7$. 
Thus, $\alpha\beta\lambda$ will never occur under $S$.
Overall, $\tilde{\mathcal{E}}_S(\alpha)=\{\tilde{q}_3,\tilde{q}_4,\tilde{q}_5,\tilde{q}_6,\tilde{q}_7\}$.
By Theorem \ref{Theo1}, $\mathcal{E}_S(\alpha)=\{q_2,q_3,q_4\}$.
\end{example}

By Example \ref{Ex5}, $\mathcal{E}_S(\alpha)$ does not contain $q_5$. We have shown in Example \ref{Ex4} that the controlled system never reaches $q_5$  under $S$.
The above example justifies the difference and advantage of the proposed approach compared with that proposed in \cite{liu21tac}.

\section{Application}

In this section, we consider the application of the proposed approach.
We first introduce the definition of networked safety.
We then show how to apply the proposed approach to construct an NBTS.
Finally, we discuss how to synthesize a maximally permissible and  networked  safe  supervisor from an NBTS.

\subsection{Networked safety}

We start by defining the networked safety of the DESs under communication delays and losses.
Let $H=(Q_H,\Sigma,\delta_H, q_0) \sqsubseteq G$ be a subautomaton of $G$ that characterizes the specification language (safe behaviors).
That is, $Q_H$ captures all the safe behaviors in the sense that all the strings generated by $G$ are safe if they are ended in states in $Q_H$ and unsafe if they are ended in states in $Q\setminus Q_H$.
Then, the networked safety property of the DESs can be defined as follows.
\begin{definition}\label{Def6}
Given a system automaton $G$, a specification automaton $H$, and a networked supervisor $S$ defined over $\Sigma_o^*$,  we say that $\mathcal{L}(S/G)$ is networked safe w.r.t. $Q_H\subseteq Q$ and $G$ if $(\forall s \in  \mathcal{L}(S/G))\delta(q_0,s)\in Q_H$.
\end{definition}

We next show that networked safety can be formulated as a state-estimate-based (SE-based)  property (or information-state-based property in \cite{yin16tac1,yin16tac2}).
We define the SE-based property $\varphi$ w.r.t. $G$ as a function $\varphi: 2^{Q} \rightarrow \{0,1\}$, where  for all $Z \in 2^{Q}$, $\varphi(Z)=1$ means that $Z$ satisfies property  $\varphi$.

\textcolor{blue}{To ensure that $\mathcal{L}(S/G)$ is networked safe, by Definition \ref{Def6}, we must ensure all the states that the controlled system may reach are within $Q_H$. 
In this regard, a state estimate  $Z\in 2^Q$  is safe if and only if $Z \subseteq Q_H$.
Thus, the definition of the SE-based property $\varphi_{safe}$ is defined as follows. }

\begin{definition}\label{Def7}
The SE-based property $\varphi_{safe}:2^{Q} \rightarrow \{0,1\}$ is defined as follows: for any $Z \in 2^{Q }$,
\begin{align}\label{Eq14}
&\varphi_{safe}(Z)=1 \Leftrightarrow  Z \subseteq Q_H.
\end{align}
\end{definition}


\begin{proposition}\label{Prop5}
\textcolor{blue}{Given automata $G$ and $H$ and a networked supervisor $S$ defined over $\Sigma_o^*$,
$\mathcal{L}(S/G)$ is networked safe  w.r.t. $Q_H\subseteq Q$ and $G$ if and only if all the state estimates that may be generated by the controlled system   satisfy  $\varphi_{safe}$, i.e., $(\forall t \in f^{-1}(\psi^f(\mathcal{L}(G_S))))\varphi_{safe}({\mathcal{E}}_S(t))=1$.}
\end{proposition}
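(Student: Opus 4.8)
The plan is to reduce both sides of the claimed equivalence to one common intermediate condition, namely
\begin{equation*}
(\forall \mu \in \mathcal{L}(G_S))\ \delta(q_0,\psi(\mu)) \in Q_H. \tag{$\star$}
\end{equation*}
Everything then follows by two short definition-chases together with the identity $\mathcal{L}(S/G)=\psi(\mathcal{L}(G_S))$ from Definition \ref{Def1}.

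First I would show that networked safety is equivalent to ($\star$). By Definition \ref{Def6}, $\mathcal{L}(S/G)$ is networked safe iff $\delta(q_0,s)\in Q_H$ for every $s\in\mathcal{L}(S/G)$. Since $\mathcal{L}(S/G)=\psi(\mathcal{L}(G_S))$, the strings $s$ ranging over $\mathcal{L}(S/G)$ are exactly the strings $\psi(\mu)$ as $\mu$ ranges over $\mathcal{L}(G_S)$; substituting $s=\psi(\mu)$ turns the safety condition verbatim into ($\star$).

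Next I would show that the SE-based condition is equivalent to ($\star$). Unfolding $\varphi_{safe}$ via Definition \ref{Def7}, the right-hand side reads $\mathcal{E}_S(t)\subseteq Q_H$ for every $t\in f^{-1}(\psi^f(\mathcal{L}(G_S)))$. By Definition \ref{Def2}, $\mathcal{E}_S(t)$ is precisely the set of states $\delta(q_0,\psi(\mu))$ obtained from those $\mu\in\mathcal{L}(G_S)$ with $f^{-1}(\psi^f(\mu))=t$. Hence the family $\{\mathcal{E}_S(t)\}$, as $t$ ranges over $f^{-1}(\psi^f(\mathcal{L}(G_S)))$, collectively covers exactly the states $\delta(q_0,\psi(\mu))$ for all $\mu\in\mathcal{L}(G_S)$: every $\mu$ contributes its plant state to the single $\mathcal{E}_S(t)$ indexed by its own $t=f^{-1}(\psi^f(\mu))$, and no other states can appear. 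Requiring $\mathcal{E}_S(t)\subseteq Q_H$ for all $t$ is therefore equivalent to requiring $\delta(q_0,\psi(\mu))\in Q_H$ for all $\mu$, which is ($\star$). Chaining the two equivalences yields the proposition.

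The argument is essentially a definition-chase, so I do not expect a genuine obstacle; the only point demanding care is the quantifier bookkeeping in the second step, namely verifying that ranging $t$ over the observable projections and then collecting the states inside each $\mathcal{E}_S(t)$ recovers every $\mu\in\mathcal{L}(G_S)$ at the level of contributed plant states. This is a surjectivity/cover argument for the map $\mu\mapsto f^{-1}(\psi^f(\mu))$ from $\mathcal{L}(G_S)$ onto $f^{-1}(\psi^f(\mathcal{L}(G_S)))$. Notably, no induction and no appeal to the online estimation machinery (Step 1/Step 2, DUR, DOR, or Theorem \ref{Theo1}) is needed, since the statement is phrased directly in terms of $\mathcal{E}_S(\cdot)$ rather than $\tilde{\mathcal{E}}_S(\cdot)$.
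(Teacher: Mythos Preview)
Your proposal is correct and follows essentially the same definition-chase as the paper's proof: both unpack Definitions~\ref{Def1}, \ref{Def2}, \ref{Def6}, and \ref{Def7} to link safety of $\mathcal{L}(S/G)$ to membership of $\delta(q_0,\psi(\mu))$ in $Q_H$ for all $\mu\in\mathcal{L}(G_S)$, and then to the inclusion $\mathcal{E}_S(t)\subseteq Q_H$. Your direct organization via the intermediate condition $(\star)$ is in fact slightly cleaner than the paper's two contradiction arguments, and you correctly observe that Proposition~\ref{Prop1} (which the paper invokes in its $(\Leftarrow)$ direction) is not actually needed, since Definition~\ref{Def2} already phrases $\mathcal{E}_S(t)$ directly in terms of $\delta(q_0,\psi(\mu))$.
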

\begin{proof}
($\Rightarrow$) By contradiction. 
Suppose $\exists t\in f^{-1}(\psi^f(\mathcal{L}(G_S)))$ such that $\varphi_{safe}({\mathcal{E}}_S(t))=0$.
By (\ref{Eq14}), $\exists q \in {\mathcal{E}}_S(t)$ such that $q \in Q\setminus Q_H$.
By the definition of ${\mathcal{E}}_S(\cdot)$, $\exists \mu \in \mathcal{L}(G_S)$ such that $f^{-1}(\psi^f(\mu))=t$  and $q=\delta(q_{0},\psi(\mu))$.
Therefore, $\exists \psi(\mu) \in \psi(\mathcal{L}(G_S))$ such that  $\delta(q_0,\psi(\mu))=q \in Q \setminus Q_H$.
By Definition \ref{Def1}, $\exists \psi(\mu) \in \mathcal{L}(S/G)$ such that  $\delta(q_0,\psi(\mu))=q \in Q \setminus Q_H$.
By Definition \ref{Def6}, $\mathcal{L}(S/G)$ is not networked safe.
($\Leftarrow$)
Also by contradiction.
Suppose that $\mathcal{L}(S/G)$ is not networked safe.
By Definitions \ref{Def1} and \ref{Def6}, $\exists \mu \in \mathcal{L}(G_S)$ such that $\delta(q_{0},\psi(\mu)) \in Q\setminus Q_H$.
We write $\tilde{\delta}(\tilde{q}_0,\mu)=(q,x,n,\phi,y,m,p)$ and $f^{-1}(\psi^f(\mu))=t$.
By Proposition \ref{Prop1}, $q=\delta(q_0,\psi(\mu))\in Q\setminus Q_H$.
By the definition of ${\mathcal{E}}_S(\cdot)$,  $q \in {\mathcal{E}}_S(t)$.
Since $q \in Q\setminus Q_H$, $\varphi_{safe}({\mathcal{E}}_S(t))=0$, which contradicts $\varphi_{safe}({\mathcal{E}}_S(t))=1$.
\end{proof}

\textcolor{blue}{Proposition \ref{Prop5} shows that the networked safety enforcement problem can be reduced to a SE-based property $\varphi_{safe}$ enforcement problem.
To ensure that the language of the controlled system is networked safe, it is only required that all the {state estimates} that may be generated by the controlled system satisfy $\varphi_{safe}$.
Using the proposed state estimation algorithm, we next show how to extend a BTS to its network counterpart NBTS, which exhaustively searches all the admissible control actions and state estimates that may be generated  under these control actions.
Benefitting from such a ``global view'', we can always synthesize a supervisor (if it exists) from the NBTS such that all the state estimates that may be generated under it  satisfy $\varphi_{safe}$.
}


\subsection{{Networked supervisor synthesis}}


We first generalize a BTS to an NBTS using the introduced techniques.
Formally, an NBTS $T$ w.r.t. $G$ is a seven-tuple
\begin{align}
T=(Q_Y^T,Q_Z^T,h^T_{YZ},h^T_{ZY},\Sigma_o,\Pi,y_0),
\end{align}
where $Q_Y^T \subseteq Q \times \Theta_o \times \Theta_c \times \Pi$ is the set of $Y$-states; $Q_Z^T \subseteq (Q \times \Theta_o \times \Theta_c \times \Pi)\times \Pi$ is the set of $Z$-states, and each $Z$-state $z=(I(z),\Pi(z))$ consists of two parts  such that $I(z)$ and $\Pi(z)$ denote the information state and the control command parts of $z$, respectively; $h^T_{YZ}:Q_Y^T\times \Pi \rightarrow Q_Z^T$ is a transition function from $Y$-states to $Z$-states, which is defined as follows: for any $y\in Q_Y^T$ and any $\pi \in \Pi$, $$h^T_{YZ}(y,\pi)=(\textup{DUR}(y,\pi),\pi);$$
$h^T_{ZY}:Q_Z^T \times \Sigma_o \rightarrow Q_Y^T$ is a transition function from $Z$-states to $Y$-states, which is defined as follows: for any $z \in Q_Z^T$  and any $\sigma \in \Sigma_o$, $$h^T_{ZY}(z,\sigma)=\textup{DOR}(I(z),\sigma);$$
$\Sigma_o$ is the set of observable events;
$\Pi$ is the set of admissible control commands;
$y_0=\emptyset$ is the initial $Y$-state.

\begin{remark}
The NBTS is an extension of the BTS in the case of communication delays and losses. 
An NBTS also consists of two types of states, named Y-states and Z-states.
A Y-state estimates all the augmented states that the system can  reach immediately after a new observable event communication (by applying ``delayed observable reach'' on its predecessor). 
From a Y-state, all the admissible control decisions are considered. 
A Z-state collects all the augmented states that are reachable from its
predecessor Y-state under a given control action (by applying  ``delayed observable reach'' on its predecessor).
\end{remark}

We say that an NBTS $T$ satisfies the SE-based property $\varphi_{safe}$ if for any  $Z$-state $z \in Q_Z^T$,  all the first components of its information state part satisfy $\varphi_{safe}$, i.e., $\varphi_{safe}(\textup{FC}(I(z)))=1$.
\textcolor{blue}{An NBTS traverses the entire reachable space of the $Y$- and $Z$-states. 
Specifically, in each $Y$-state $y$, the NBTS considers all the admissible control actions and $Z$-states that can be reached from $y$ following the execution of these control actions.
Some of these control actions may be a ``bad'' decision since they may cause the $Z$-states to violate the property of $\varphi_{safe}$ now or in the future.
To exclude all these ``bad'' control actions, we next compute the largest subgraph of an NBTS, called \emph{All Inclusive Networked Controller} (AINC), which  searches only  the ``good'' control decisions.}

We first introduce several notions.
For each $Y$-state $y \in Q_Y^T$, we denote by $C_T(y)=\{\pi \in \Pi:h_{YZ}^T(y,\pi)!\}$ the set of control actions that are defined at $y$.
\begin{definition}\label{Def8}
We say that an NBTS $T$ is \emph{complete}, if 
\begin{enumerate}
\item For all $y \in Q_Y^T$, $C_T(y)\neq \emptyset$;
\item For all $z \in Q_Z^T$ and all $\sigma \in \Sigma_o$, $(\exists (q,\theta_o,\theta_c) \in I(z))[\theta_o]_1=(\sigma_1,n_1)\cdots (\sigma_k,n_k)\neq \varepsilon \wedge \sigma=\sigma_1$ implies $ h^T_{ZY}(z,\sigma)!$.
\end{enumerate}
\end{definition}
The first property says that for any reachable $Y$-state, there exists at least one control action that is defined at this state,
and the second property says that if an observable event is active at a $Z$-state, we cannot disable its occurrence.

Given an NBTS $T$, we denote $Ac(T)$ the accessible part of $T$.
$Ac(T)$ can be obtained by deleting all the $Y$- and $Z$- states in $T$ that are not reachable from the initial $Y$-state $y_0$, and all the transitions that are attached to these states.

With the above preparations, we next construct the AINC.

\begin{definition}\label{Def9}
The AINC is the largest subgraph of an NBTS $T$, denoted by $T'=(Q_Y^{T'},Q_Z^{T'},h^{T'}_{YZ},h^{T'}_{ZY},\Sigma_o,\Pi,y_0),$
such that (i) $T'$ is complete, and (ii) $T'$ satisfies
the defined SE-based property $\varphi_{safe}$, i.e., $(\forall z \in Q_Z^{T'})\varphi_{safe}(\textup{FC}(I(z)))=1$.
\end{definition}

\textcolor{blue}{The AINC is an extension of \textsl{All Inclusive Controller} (AIC) \cite{yin16tac1} in the networked DESs.
It is the largest subgraph of an NBTS $T$ that satisfies completeness and the SE-based property $\varphi_{safe}$.
 We can build an AINC $T'$ in a similar way as the authors  in \cite{yin16tac1} constructed the AIC.
Roughly speaking, the procedure consists of the following two steps: (i) we construct the NBTS $T$ and  prune all its $Z$-states $z\in Q_Z^T$ that violate $\varphi_{safe}$, i.e., $\varphi_{safe}(\textup{FC}(I(z)))=0$; (ii) we repeatedly prune all the states violating completeness from the remaining part of $T$ until convergence is achieved.
Algorithm 1 formally constructs $T'$.}

\begin{algorithm}[htb]
	\LinesNumbered
		\KwIn{Automaton ${G}$ and SE-based property $\varphi_{safe}$}
		\KwOut{An AINC $T'$} 
         We \label{lin1} first construct an NBTS $T$ using $G$ as described above;\\
        We remove \label{lin2} all the $Z$-states $z\in Q_Z^T$ in $T$ such that $\varphi_{safe}(\textup{FC}(I(z)))=0$ from $T$, and set ${T}\leftarrow Ac(T)$;\\
        \Repeat {${Q}_Y^T=\tilde{Q}_Y^T$ and ${Q}_Z^T=\tilde{Q}_Z^T$}
{
Set $\tilde{Q}_Y^T \leftarrow {Q}_Y^T$ and $\tilde{Q}_Z^T \leftarrow {Q}_Z^T$;\\
\For{$y \in {Q}_Y^T$, one by one \label{lin10}}
{
\If{$C_T(y)=\emptyset$\label{lin12}}
{Set \label{lin13}${Q}_Y^T \leftarrow {Q}_Y^T \setminus \{y\}$;}
Set \label{lin13} ${T}\leftarrow Ac(T)$;
}
\For{$z \in {Q}_Z^T$, one by one \label{lin10}}
{
\If{$(\exists (q,\theta_o,\theta_c) \in I(z))[\theta_o]_1=(\sigma_1,n_1)\cdots (\sigma_k,n_k)\neq \varepsilon \wedge h^T_{ZY}(z,\sigma_1)\not !$\label{lin12}}
{Set \label{lin13}${Q}_Z^T \leftarrow {Q}_Z^T \setminus \{z\}$;\\
}
Set \label{lin13} ${T}\leftarrow Ac(T)$;\\
}
}
\Return{$T' \leftarrow T$.}
\caption{\textcolor{blue}{\textsc{Calculating $T'$}}}
	\end{algorithm}

\begin{proposition}\label{Prop6}
Algorithm 1 correctly constructs the AINC $T'$.
\end{proposition}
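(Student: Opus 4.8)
The plan is to prove Proposition~\ref{Prop6} in the standard two-direction style used for largest-fixpoint controller constructions, mirroring the correctness argument for the AIC in \cite{yin16tac1}: first show that the returned structure $T'$ is a legal candidate (accessible, complete, and satisfying $\varphi_{safe}$), and then show that it is the largest such candidate, i.e. every complete, $\varphi_{safe}$-satisfying accessible subgraph of the NBTS is contained in it. Before either direction I would record termination: the sets $Q_Y^T$ and $Q_Z^T$ are finite and every pass of the \textbf{repeat} loop only deletes states, so the loop halts after finitely many iterations, at which point $Q_Y^T=\tilde{Q}_Y^T$ and $Q_Z^T=\tilde{Q}_Z^T$.

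For soundness, I would observe that line~\ref{lin2} deletes exactly the $Z$-states violating $\varphi_{safe}$ and that no later step ever re-inserts a state; since the final $T'$ is a subgraph of the graph produced at line~\ref{lin2}, every surviving $Z$-state still satisfies $\varphi_{safe}(\textup{FC}(I(z)))=1$, giving condition~(ii) of Definition~\ref{Def9}. Completeness follows from the loop's stopping condition: if some $Y$-state had $C_T(y)=\emptyset$, or some $Z$-state had an active observable event $\sigma_1$ with $h^T_{ZY}(z,\sigma_1)\not!$, a further deletion would occur and the loop would not have stopped; hence at the fixpoint both conditions of Definition~\ref{Def8} hold, while the interleaved $Ac(\cdot)$ calls guarantee accessibility. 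Here I would stress that safety is automatically preserved throughout the completeness pruning precisely because $\varphi_{safe}(\textup{FC}(I(z)))$ depends only on the single state $z$ and is therefore insensitive to the deletion of \emph{other} states.

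For maximality, I would let $T''$ be any accessible subgraph of the NBTS that is complete and satisfies $\varphi_{safe}$, and prove by induction on the order in which the algorithm deletes states that no deleted state belongs to $T''$; consequently $Q_Y^{T''}\subseteq Q_Y^{T'}$ and $Q_Z^{T''}\subseteq Q_Z^{T'}$. The base case covers line~\ref{lin2}: a $Z$-state removed there violates $\varphi_{safe}$ and so cannot lie in $T''$. For the inductive step I would use that in the full NBTS $h_{YZ}(y,\pi)$ is defined for every $\pi\in\Pi$ and $h_{ZY}(z,\sigma)$ is defined for every active $\sigma$, so a transition is undefined in the current partially-pruned graph only because its target has already been deleted. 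Thus whenever the algorithm deletes a $Y$-state $y$ with $C_T(y)=\emptyset$, every control action out of $y$ must point to a previously deleted $Z$-state, which by the induction hypothesis is absent from $T''$; then $y$ would have no enabled control action in $T''$, contradicting completeness condition~(1), so $y\notin T''$. The $Z$-state deletions are handled identically using completeness condition~(2), and the accessibility deletions follow because a state all of whose incoming paths pass through already-deleted states is inaccessible in $T''$ as well.

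The step I expect to be the main obstacle is this maximality induction, since it must simultaneously track the two distinct pruning rules and the $Ac(\cdot)$ operations while respecting the \emph{local} (single-state) nature of the completeness tests: the predicates ``$C_T(y)=\emptyset$'' and ``$h^T_{ZY}(z,\sigma_1)\not!$'' are evaluated in the current graph, so the argument only closes once the induction hypothesis has transferred each local failure to $T''$. A minor but necessary preliminary is to verify that the family of complete, $\varphi_{safe}$-satisfying accessible subgraphs is closed under union (safety, both completeness conditions, and accessibility each pass to unions); this guarantees a \emph{unique} largest element, so that ``the AINC'' of Definition~\ref{Def9} is well defined, and maximality of $T'$ then identifies it with that element.
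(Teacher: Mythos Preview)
Your proposal is correct and follows essentially the same two-part strategy as the paper: first argue that the output $T'$ is complete and satisfies $\varphi_{safe}$, then show maximality by arguing that any complete $\varphi_{safe}$-satisfying subgraph $T''$ survives every deletion step and hence is contained in $T'$. The paper's own proof is considerably terser---it does not spell out termination, does not discuss closure under union, and compresses your induction-on-deletion-order into the single sentence ``all the $Y$- and $Z$-states in $T''$ will not be removed from $T$''---but the underlying idea is identical, and your more detailed treatment (especially the observation that a transition is missing in the pruned graph only because its target was already deleted) is exactly what makes that sentence rigorous.
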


Let $T'$ be the returned AINC of Algorithm 1.
We write $y \xrightarrow{\pi} z$ if $h^{T'}_{YZ}(y,\pi)=z$ and $z \xrightarrow{\sigma} y$ if $h^{T'}_{ZY}(z,\sigma)=y$.
Let $S$ be a networked supervisor included in $T'$ and $t=\sigma_1\cdots\sigma_n \in f^{-1}(\psi^f(\mathcal{L}(G_S)))$ be an observed string.
The execution of $t$ leads to  an alternating sequence of $Y$-states and $Z$-states
$$y_0 \xrightarrow{S(\varepsilon)} z_0 \xrightarrow{\sigma_1} y_1 \xrightarrow{S(\sigma_1)}\cdots  \xrightarrow{\sigma_n} y_n \xrightarrow{S(\sigma_1\cdots\sigma_n)}z_n.$$
We denote $IS_{S}^Y(t)$ and $IS_{S}^Z(t)$ as the last $Y$- and $Z$-states of $y_0z_0\cdots y_nz_n$, respectively, i.e., $IS_{S}^Y(t)=y_n$ and $IS_{S}^Z(t)=z_n$.
We now show how to ``decode'' a networked supervisor from $T'$.

\begin{definition}\label{Def10}
A networked supervisor $S=(A,\gamma)$ with $A=(X,\Sigma_o,\xi, x_0)$ is said to be included in an AINC $T'$, if for all $t \in f^{-1}(\psi^f(\mathcal{L}(G_S)))$,  $\xi(x_0,t)=I(IS_{S}^{Z}(t))$ and $\gamma(\xi(x_0,t))\in C_{T'}(IS_{S}^{Y}(t))$.
To complete the definition of $S$, for all $t \in \Sigma_o^* \setminus f^{-1}(\psi^f(\mathcal{L}(G_S)))$,  define $\xi(x_0,t)=x_{spec}$ with $\gamma(x_{spec})=\Sigma_{uc}$.
\end{definition}

We denote by $\mathbb{S}(T')$ all the networked supervisors included in $T'$.
{
The following theorem and its corollary show that $\mathbb{S}(T')$ collects only networked safe supervisors.}
\begin{theorem}\label{Theo2}
Let $S \in \mathbb{S}(T')$ be a networked supervisor included in $T'$.
For all $t \in f^{-1}(\psi^f(\mathcal{L}(G_S)))$,  ${\mathcal{E}}_S(t)=\textup{FC}(I(IS_{S}^{Z}(t)))$. 
\end{theorem}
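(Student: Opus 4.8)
The plan is to connect the two state-estimation machineries: the online augmented state estimate $\tilde{\mathcal{E}}_S(t)$ defined in Definition~\ref{Def5}, and the $Z$-states of the AINC $T'$ traversed along the path induced by $t$. Since Theorem~\ref{Theo1} already establishes $\textup{FC}(\tilde{\mathcal{E}}_S(t))={\mathcal{E}}_S(t)$, it suffices to prove the intermediate identity
\begin{align}\label{Eq-plan}
\tilde{\mathcal{E}}_S(t)=I(IS_{S}^{Z}(t))\quad\text{for all }t\in f^{-1}(\psi^f(\mathcal{L}(G_S))),
\end{align}
from which the theorem follows by applying $\textup{FC}(\cdot)$ to both sides and invoking Theorem~\ref{Theo1}. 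So the real work is proving \eqref{Eq-plan}.

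First I would prove \eqref{Eq-plan} by induction on $|t|$, running in parallel with the alternating structure of the path $y_0 \xrightarrow{S(\varepsilon)} z_0 \xrightarrow{\sigma_1} y_1 \cdots \xrightarrow{\sigma_n} y_n \xrightarrow{S(\sigma_1\cdots\sigma_n)} z_n$. The key observation is that $\tilde{\mathcal{E}}_S$ and the transition functions $h^{T'}_{YZ},h^{T'}_{ZY}$ are built from exactly the same primitives: by the definition of the NBTS, $h^{T'}_{YZ}(y,\pi)=(\textup{DUR}(y,\pi),\pi)$ so that $I(h^{T'}_{YZ}(y,\pi))=\textup{DUR}(y,\pi)$, and $h^{T'}_{ZY}(z,\sigma)=\textup{DOR}(I(z),\sigma)$. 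Meanwhile Definition~\ref{Def5} builds $\tilde{\mathcal{E}}_S$ by alternating the very same $\textup{DUR}(\cdot)$ and $\textup{DOR}(\cdot)$ operators. For the base case $t=\varepsilon$: since $S\in\mathbb{S}(T')$, the path starts at $y_0=\emptyset$ and takes the control edge labeled $S(\varepsilon)$, giving $I(IS_S^Z(\varepsilon))=I(z_0)=I(h^{T'}_{YZ}(y_0,S(\varepsilon)))=\textup{DUR}(\emptyset,S(\varepsilon))=\tilde{\mathcal{E}}_S(\varepsilon)$, matching Definition~\ref{Def5} exactly. For the inductive step, assuming $\tilde{\mathcal{E}}_S(t^i)=I(z_i)$, I would read off $y_{i+1}=h^{T'}_{ZY}(z_i,\sigma_{i+1})=\textup{DOR}(I(z_i),\sigma_{i+1})=\textup{DOR}(\tilde{\mathcal{E}}_S(t^i),\sigma_{i+1})$, then $I(z_{i+1})=I(h^{T'}_{YZ}(y_{i+1},S(t^{i+1})))=\textup{DUR}(y_{i+1},S(t^{i+1}))$, which reproduces precisely the recursion $\tilde{\mathcal{E}}_S(t^{i+1})=\textup{DUR}(\textup{DOR}(\tilde{\mathcal{E}}_S(t^i),\sigma_{i+1}),S(t^{i+1}))$ of Definition~\ref{Def5}.

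The main obstacle is not the algebraic matching of operators but ensuring the path is actually \emph{well-defined} in $T'$ for every $t\in f^{-1}(\psi^f(\mathcal{L}(G_S)))$ — i.e.\ that the transitions $y_i \xrightarrow{S(t^i)} z_i$ and $z_i \xrightarrow{\sigma_{i+1}} y_{i+1}$ all exist in the pruned subgraph $T'$, so that $IS_S^Y(t)$ and $IS_S^Z(t)$ are legitimately defined rather than falling onto the dead state. This requires leaning on Definition~\ref{Def10}, which guarantees $\gamma(\xi(x_0,t))\in C_{T'}(IS_S^Y(t))$ (so the needed control edge survives pruning), together with the completeness property of the AINC from Definition~\ref{Def8}: the second completeness condition guarantees that whenever $\sigma_{i+1}$ is genuinely feasible at $z_i$ (which it is, since $t\in f^{-1}(\psi^f(\mathcal{L}(G_S)))$ means $t^{i+1}$ is an observable continuation realizable in $G_S$), the observable edge $h^{T'}_{ZY}(z_i,\sigma_{i+1})$ is present. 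I would therefore structure the induction to carry the auxiliary invariant ``the path up to step $i$ is defined in $T'$'' alongside \eqref{Eq-plan}, discharging the definedness at each step using completeness and Definition~\ref{Def10} before invoking the operator identities. A secondary subtlety worth a sentence is confirming that pruning to $T'$ does not alter the \emph{values} of $\textup{DUR}$ and $\textup{DOR}$ on surviving states — the transition functions are inherited unchanged from $T$, only their domains shrink — so the identities $I(h^{T'}_{YZ}(y,\pi))=\textup{DUR}(y,\pi)$ and $h^{T'}_{ZY}(z,\sigma)=\textup{DOR}(I(z),\sigma)$ hold verbatim on $T'$ wherever the transitions exist.
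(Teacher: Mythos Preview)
Your proposal is correct and follows essentially the same route as the paper: both prove the intermediate identity $\tilde{\mathcal{E}}_S(t)=I(IS_{S}^{Z}(t))$ by induction on $|t|$, matching the $\textup{DUR}/\textup{DOR}$ recursion of Definition~\ref{Def5} against the transition functions $h^{T'}_{YZ},h^{T'}_{ZY}$, and then apply Theorem~\ref{Theo1}. Your treatment is in fact more careful than the paper's about the well-definedness of the path in $T'$ (invoking completeness and Definition~\ref{Def10}) and about why pruning does not alter the operator values, points the paper handles more tersely.
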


Then, we have the following corollary of Theorem \ref{Theo2}.
\begin{corollary}\label{Coro1}
Let $S \in \mathbb{S}(T')$ be a networked supervisor included in $T'$. 
Then, $\mathcal{L}(S/G)$ is networked safe  w.r.t. $Q_H\subseteq Q$ and $G$.
\end{corollary}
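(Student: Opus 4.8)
The plan is to chain Proposition~\ref{Prop5}, Theorem~\ref{Theo2}, and the defining property of the AINC from Definition~\ref{Def9}. By Proposition~\ref{Prop5}, proving that $\mathcal{L}(S/G)$ is networked safe w.r.t.\ $Q_H$ and $G$ is equivalent to verifying that $\varphi_{safe}(\mathcal{E}_S(t))=1$ for every communicated string $t \in f^{-1}(\psi^f(\mathcal{L}(G_S)))$. I would therefore reduce the goal to establishing this state-estimate-level property and dispense entirely with reasoning directly about the strings in $\mathcal{L}(S/G)$.

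First I would fix an arbitrary $t \in f^{-1}(\psi^f(\mathcal{L}(G_S)))$. Since $S \in \mathbb{S}(T')$, by Definition~\ref{Def10} the execution of $t$ traces an alternating path of $Y$- and $Z$-states in $T'$, so the last $Z$-state $IS_{S}^{Z}(t)$ along this path is a well-defined element of $Q_Z^{T'}$. Invoking Theorem~\ref{Theo2}, the networked state estimate then satisfies $\mathcal{E}_S(t)=\textup{FC}(I(IS_{S}^{Z}(t)))$, which transfers the question of safety of $\mathcal{E}_S(t)$ into a question about the information state of a $Z$-state of the AINC.

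Next I would apply the defining property of the AINC. By clause (ii) of Definition~\ref{Def9}, $T'$ satisfies the SE-based property $\varphi_{safe}$, i.e.\ $\varphi_{safe}(\textup{FC}(I(z)))=1$ for every $z \in Q_Z^{T'}$. Specializing to $z=IS_{S}^{Z}(t)$ yields $\varphi_{safe}(\textup{FC}(I(IS_{S}^{Z}(t))))=1$, and combining this with the equality from Theorem~\ref{Theo2} gives $\varphi_{safe}(\mathcal{E}_S(t))=1$. Since $t$ was arbitrary, this holds for all communicated strings, so Proposition~\ref{Prop5} delivers the claim.

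Because the argument is essentially a direct composition of results already established, I do not anticipate any substantial obstacle. The only point warranting care is confirming that $IS_{S}^{Z}(t)$ genuinely belongs to $Q_Z^{T'}$ for every $t \in f^{-1}(\psi^f(\mathcal{L}(G_S)))$; this rests on $S$ being included in $T'$ in the precise sense of Definition~\ref{Def10}, together with the completeness of $T'$ (Definition~\ref{Def8}) ensuring that the alternating run associated with $t$ never terminates before reaching its final $Z$-state, so that $IS_{S}^{Z}(t)$ is always defined.
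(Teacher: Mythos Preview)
Your proposal is correct and follows essentially the same approach as the paper's proof, which uses Theorem~\ref{Theo2} together with the AINC property to obtain $\varphi_{safe}(\mathcal{E}_S(t))=1$ for all $t$, and then invokes Proposition~\ref{Prop5}. Your write-up is simply a more detailed unpacking of the same chain of implications, including the explicit check that $IS_S^Z(t)\in Q_Z^{T'}$, which the paper leaves implicit.
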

\begin{proof}
By Theorem \ref{Theo2} and the fact that $T'$ is an AINC, we have $\varphi_{safe}({\mathcal{E}}_S(t))=1$ for all $t \in f^{-1}(\psi^f(\mathcal{L}(G_S)))$.
By Proposition \ref{Prop5}, $\mathcal{L}(S/G)$ is networked safe. 
\end{proof}

\textcolor{blue}{
By Corollary \ref{Coro1}, $\mathbb{S}(T')$ collects only networked safe supervisors.
We can always select a maximal networked safe supervisor from $\mathbb{S}(T')$ in the following sense: Let $IS^Y_{S}(t) \in Q_Y^{T'}$ be the current $Y$-state such that AINC $T'$ is in (after observing $t$).
Since $\mathbb{S}(T')$ collects only networked safe supervisors, all the control actions in $C_{T'}(IS^Y_{S}(t))$ are safe control actions. 
We can simply pick a ``greedy maximal'' control action from $C_{T'}(IS^Y_{S}(t))$.
Note that there may be several incomparable ``greedy maximal'' control actions in $C_{T'}(IS^Y_{S}(t))$, but all of them are safe and maximal.
Since we focus on estimating states in this paper, the formal algorithm for synthesizing a maximal supervisor is beyond the scope of this paper.}

\section{Conclusion}

In supervisory control, communication delays and losses are unavoidable when
 communication between the plant and the supervisor for
observation, and between the supervisor and the actuator for
control are carried out over some shared networks. We assume, in this paper, that (i) the delays do not change the order of the observations and controls,
(ii) both the observation delays and control delays have upper bounds, and (iii) both the consecutive observation losses and the consecutive control losses also have upper bounds. 
A novel framework for supervisory control under communication delays and losses has been established.
Under this framework, an algorithm for online state estimation of a controlled system has been proposed.
The proposed algorithm can be used to solve the \textsl{supervisor synthesis problem} in networked DESs.
As an application, we show how to use the existing methods to synthesize  maximally permissible and safe networked supervisors.
\bibliographystyle{IEEEtran} 
\bibliography{articles-network}

\appendix

\subsection{Proof of Proposition \ref{Prop1}}
\begin{proof}
The proof is by induction on the length of strings in $\mathcal{L}(G_S)$.
Since $\tilde{\delta}(\tilde{q}_0,\varepsilon)=(q_0,\varepsilon,0,S(\varepsilon),\varepsilon,0,x_0)$, $\delta(q_0,\varepsilon)=q_0$, and  $\xi(x_0,\varepsilon)=x_0$, the base case is true.
The induction hypothesis is that for any $\mu \in \mathcal{L}(G_S)$ with $|\mu| \le n$, if  $\tilde{\delta}(\tilde{q}_0,\mu)=(q,x,n,\phi,y,m,p)$, then $q=\delta(q_0,\psi(\mu))$ and  $p=\xi(x_0,f^{-1}(\psi^f(\mu)))$.
We next prove the same is also true for $\mu e \in \mathcal{L}(G_S)$.
We write $\tilde{\delta}(\tilde{q}_0,\mu e)=(q',x',n',\phi',y',m',p')$.
By the definition of $\tilde{\Sigma}$,  $e \in \Sigma$, $e\in \Sigma^f$, or $e \in \tilde{\Sigma}\setminus (\Sigma \cup \Sigma^f)$.
We consider each of them separately as follows.

\emph{Case 1:} $e \in \Sigma$. 
By (\ref{Eq1}), $q'=\delta(q,e)$ and $p'=p$.
Since $e \in \Sigma$, by definitions, we have $\psi(\mu e)=\psi(\mu)e$ and $f^{-1}(\psi^f(\mu e))=f^{-1}(\psi^f(\mu))$.
Moreover, since $q=\delta(q_0,\psi(\mu))$ and  $p=\xi(x_0,f^{-1}(\psi^f(\mu)))$, we have $\delta(q_0,\psi(\mu e))=\delta(q,e)=q'$ and  $\xi(x_0,f^{-1}(\psi^f(\mu e)))=\xi(x_0,f^{-1}(\psi^f(\mu)))=p=p'$.

\emph{Case 2:} $e=f(\sigma) \in \Sigma^f$. 
By (\ref{Eq3}), $q'=q$ and $p'=\xi(p,\sigma)$.
Since $e \in \Sigma^f$, we have $\psi(\mu e)=\psi(\mu)$ and $f^{-1}(\psi^f(\mu e))=f^{-1}(\psi^f(\mu)) \sigma$.
Moreover, since $q=\delta(q_0,\psi(\mu))$ and  $p=\xi(x_0,f^{-1}(\psi^f(\mu)))$, we have $\delta(q_0,\psi(\mu e))=\delta(q_0,\psi(\mu))=q=q'$ and  $\xi(x_0,f^{-1}(\psi^f(\mu e)))=\xi(x_0,f^{-1}(\psi^f(\mu))\sigma)=\xi(p,\sigma)=p'$.

\emph{Case 3:} $e \in \tilde{\Sigma} \setminus (\Sigma \cup \Sigma^f)$. 
By (\ref{Eq2}), (\ref{Eq4}), and (\ref{Eq55}), $q'=q$ and $p'=p$.
Since $e \in \tilde{\Sigma} \setminus (\Sigma \cup \Sigma^f)$,  $\psi(\mu e)=\psi(\mu)$ and $f^{-1}(\psi^f(\mu e))=f^{-1}(\psi^f(\mu))$.
Moreover, since $q=\delta(q_0,\psi(\mu))$ and  $p=\xi(x_0,f^{-1}(\psi^f(\mu)))$,  $\delta(q_0,\psi(\mu e))=\delta(q_0,\psi(\mu))=q=q'$ and  $\xi(x_0,f^{-1}(\psi^f(\mu e)))=\xi(x_0,f^{-1}(\psi^f(\mu)))=p=p'$.
\end{proof}

\subsection{Proof of Proposition \ref{Prop2}}

\begin{proof}
For any $y=(\pi_1,l_1)\cdots (\pi_h,l_h) \in (\Pi \times [0,N_c])^{\le N+1}$ and any $y'=(\pi'_1,l'_1)\cdots (\pi'_h,l'_h) \in (\Pi \times [0,N_c])^{\le N+1}$, we say that $y$ is smaller than $y'$, denoted by $y \preceq y'$,  if $\pi_i \subseteq \pi'_i$ and $l_i=l_i'$ for all $i=1,\ldots,h$.
Note that $\varepsilon \preceq \varepsilon$ always holds.
For a $\mu_1 \in \mathcal{L}(G_{S_1})$, let $\tilde{\delta}_1(\tilde{q}_{0,1},\mu_1)=(q_1,x_1,n_1,\phi_1,y_1,m_1,p_1)$.
We next prove there always exists a string $\mu_2 \in \mathcal{L}(G_{S_2})$ such that $\psi(\mu_1)=\psi(\mu_2)$, $\psi^f(\mu_1)=\psi^f(\mu_2)$, and $\tilde{\delta}_2(\tilde{q}_{0,2},\mu_2)=(q_2,x_2,n_2,\phi_2,y_2,m_2,p_2)$ with $q_2=q_1$, $x_2=x_1$, $n_2=n_1$, $\phi_1 \subseteq \phi_2$, $y_1 \preceq y_2$, and $m_2=m_1$.
The proof is by induction on the length of strings in $\mathcal{L}(G_{S_1})$.

\textbf{Base case:}
We have  $\tilde{\delta}_1(\tilde{q}_{0,1},\varepsilon)=(q_0,\varepsilon,0,S_1(\varepsilon),\varepsilon,0,x_{0,1})$ and $\tilde{\delta}_2(\tilde{q}_{0,2},\varepsilon)=(q_0,\varepsilon,0,S_2(\varepsilon),\varepsilon,0,x_{0,2}).$
Since $\psi(\varepsilon)=\psi(\varepsilon)$, $\psi^f(\varepsilon)=\psi^f(\varepsilon)$, $q_0=q_0$, $\varepsilon=\varepsilon$, $0=0$, $S_1(\varepsilon)\subseteq S_2(\varepsilon)$, $\varepsilon \preceq \varepsilon$, and $0=0$, the base case is true.

\textbf{Induction hypothesis:} For all $\mu_1 \in \mathcal{L}(G_{S_1})$ with $|\mu_1|\le k$, if $\tilde{\delta}_1(\tilde{q}_{0,1},\mu_1)=(q_1,x_1,n_1,\phi_1,y_1,m_1,p_{1})$, then there exists a  $\mu_2 \in \mathcal{L}(G_{S_2})$ such that $\psi(\mu_1)=\psi(\mu_2)$, $\psi^f(\mu_1)=\psi^f(\mu_2)$, and $\tilde{\delta}_2(\tilde{q}_{0,2},\mu_2)=(q_2,x_2,n_2,\phi_2,y_2,m_2,p_2)$ with $q_2=q_1$, $x_2=x_1$, $n_2=n_1$, $\phi_1\subseteq \phi_2$, $y_1 \preceq y_2$, and $m_2=m_1$.
We next prove the same is also true for $\mu_1e \in \mathcal{L}(G_{S_1})$.
Write $\tilde{\delta}_1(\tilde{q}_{0,1},\mu_1 e)=(q'_1,x'_1,n'_1,\phi'_1,y'_1,m'_1,p'_{1})$.

\emph{Case 1:} $e \in \Sigma$. 
Since $\mu_1e \in \mathcal{L}(G_{S_1})$, by (\ref{Eq1}),  $\delta(q_1,e)!$, $e \in \phi_1$, $\textbf{NUM}(x_1^+) \le N_o$, and $\textbf{NUM}(y_1^+) \le N_c$.
Since $q_1=q_2$, $\phi_1 \subseteq \phi_2$, $x_1=x_2$, and $y_1 \preceq y_2$, we have $\delta(q_2,e)!$, $e \in \phi_2$, $\textbf{NUM}(x_2^+) \le N_o$, and $\textbf{NUM}(y_2^+) \le N_c$.
By (\ref{Eq1}), $\mu_2e \in \mathcal{L}(G_{S_2})$.
We write $\tilde{\delta}_2(\tilde{q}_{0,2},\mu_2 e)=(q'_2,x'_2,n'_2,\phi'_2,y'_2,m'_2,p'_2)$.
By (\ref{Eq1}), $q'_2=\delta(q_2,e)$, $x'_2=x_2^+(e,0)$ if $e \in \Sigma_o$ and $x'_2=x_2^+$ if $e \in \Sigma_{uo}$, $n'_2=n_2$, $\phi_2'=\phi_2$, $y'_2=y_2^+$, and $m_2'=m_2$.
Moreover, since  $\tilde{\delta}_1(\tilde{q}_{0,1},\mu_1 e)=(q'_1,x'_1,n'_1,\phi'_1,y'_1,m'_1,p'_{1})$,
by (\ref{Eq1}), $q'_1=\delta(q_1,e)$, $x'_1=x_1^+(e,0)$ if $e \in \Sigma_o$ and $x'_1=x_1^+$ if $e \in \Sigma_{uo}$, $n'_1=n_1$, $\phi_1'=\phi_1$, $y'_1=y_1^+$, and $m_1'=m_1$.
Since $q_2=q_1$, $q'_2=\delta(q_2,e)$, and $q'_1=\delta(q_1,e)$, we have $q'_2=q'_1$.
Since $x_2=x_1$, $x'_2=x_2^+(e,0)$ and $x'_1=x_1^+(e,0)$ if $e \in \Sigma_o$,   $x'_2=x_2^+$ and $x'_1=x_1^+$ if $e \in \Sigma_{uo}$, we have $x'_2=x'_1$.
Meanwhile, by the induction hypothesis, we have  $n_2=n_1$, $\phi_1\subseteq \phi_2$, $y_1 \preceq y_2$, and $m_2=m_1$.
Since $n'_1=n_1$, $\phi_1'=\phi_1$, $y'_1=y_1^+$, $m_1'=m_1$, $n'_2=n_2$, $\phi_2'=\phi_2$, $y'_2=y_2^+$, and $m_2'=m_2$, we have $n'_2=n'_1$, $\phi'_1 \subseteq \phi'_2$, $y'_1 \preceq y'_2$, and $m'_2=m'_1$.
Overall, we have $q'_1=q'_2$, $x'_1=x'_2$, $n'_1=n'_2$, $\phi_1'\subseteq \phi_2'$, $y'_1 \preceq y'_2$, and $m_1'=m_2'$.
Since  $\psi(\mu_1)=\psi(\mu_2)$ and $\psi^f(\mu_1)=\psi^f(\mu_2)$, we have $\psi(\mu_1 e)=\psi(\mu_1) e=\psi(\mu_2) e=\psi(\mu_2 e)$ and $\psi^f(\mu_1 e)=\psi^f(\mu_1)=\psi^f(\mu_2)=\psi^f(\mu_2 e)$.

\emph{Case 2:} $e=o(i) \in \Sigma^o$. 
Since $\mu_1o(i) \in \mathcal{L}(G_{S_1})$, by (\ref{Eq2}), $x_1=(\sigma_1,n_1)\cdots (\sigma_k,n_k) \neq\varepsilon$, $i\in [1,k]$, and $n_1+1\le N_{l,o}$.
Since $x_1=x_2$ and $n_1=n_2$ and  $\tilde{\delta}_2(\tilde{q}_{0,2},\mu_2)=(q_2,x_2,n_2,\phi_2,y_2,m_2,p_2)$, we have $\mu_2o(i) \in \mathcal{L}(G_{S_2})$.
Let us write $\tilde{\delta}_2(\tilde{q}_{0,2},\mu_2 e)=(q'_2,x'_2,n'_2,\phi'_2,y'_2,m'_2,p'_2)$.
By (\ref{Eq2}), $q'_2=q_2$, $$x'_2=(\sigma_1,n_1) \cdots (\sigma_{i-1},n_{i-1})(\sigma_{i+1},n_{i+1})\cdots (\sigma_k,n_k),$$  $n'_2=n_2+1$, $\phi_2'=\phi_2$, $y'_2=y_2$, and $m_2'=m_2$.
Moreover, since  $\tilde{\delta}_1(\tilde{q}_{0,1},\mu_1 e)=(q'_1,x'_1,n'_1,\phi'_1,y'_1,m'_1,p'_{1})$,
by (\ref{Eq2}), $q'_1=q_1$, $$x'_1=(\sigma_1,n_1)\cdots (\sigma_{i-1},n_{i-1})(\sigma_{i+1},n_{i+1})\cdots (\sigma_k,n_k),$$ $n'_1=n_1+1$, $\phi_1'=\phi_1$, $y'_1=y_1$, and $m_1'=m_1$.
Hence, we have $x'_1=x'_2$.
Since $n_1=n_2$, $n'_1=n_1+1$, and $n'_2=n_2+1$, we have $n'_1=n'_2$.
Moreover, by the induction hypothesis, $q_2=q_1$,  $\phi_1\subseteq \phi_2$, $y_1\preceq y_2$, and $m_2=m_1$.
Sicne $q'_1=q_1$, $\phi_1'=\phi_1$, $y'_1=y_1$, $m_1'=m_1$,  $q'_2=q_2$, $\phi_2'=\phi_2$, $y'_2=y_2$, and $m_2'=m_2$, we have $q'_2=q'_1$, $\phi'_1 \subseteq \phi'_2$, $y'_1 \preceq y'_2$, and  $m'_2=m'_1$.
Overall, we have $q'_1=q'_2$, $x'_1=x'_2$, $n'_1=n'_2$, $\phi_1'\subseteq \phi_2'$, $y'_1 \preceq y'_2$, and $m_1'=m_2'$.
Since  $\psi(\mu_1)=\psi(\mu_2)$ and $\psi^f(\mu_1)=\psi^f(\mu_2)$, by definitions, $\psi(\mu_1 e)=\psi(\mu_1)=\psi(\mu_2)=\psi(\mu_2 e)$ and $\psi^f(\mu_1 e)=\psi^f(\mu_1)=\psi^f(\mu_2)=\psi^f(\mu_2 e)$.

\emph{Case 3:} $e=f(\sigma) \in \Sigma^f$.  
Since $\mu_1f(\sigma) \in \mathcal{L}(G_{S_1})$, by (\ref{Eq3}),  $x_1=(\sigma_1,n_1)\cdots (\sigma_k,n_k) \neq \varepsilon$ and $\sigma=\sigma_1$.
Since $x_1=x_2$, by (\ref{Eq3}),  $\mu_2f(\sigma) \in \mathcal{L}(G_{S_2})$.
We write $$\tilde{\delta}_2(\tilde{q}_{0,2},\mu_2 e)=(q'_2,x'_2,n'_2,\phi'_2,y'_2,m'_2,p'_2).$$
By (\ref{Eq3}), $q'_2=q_2$, $x'_2=(\sigma_2,n_2) \cdots (\sigma_k,n_k)$, $n'_2=0$, $\phi_2'=\phi_2$, $y'_2=y_2(\gamma_2(\xi_2(p_2,\sigma)),0)$, and $m_2'=m_2$.
Let us write $t=f^{-1}(\psi^f(\mu_2))$.
By Proposition \ref{Prop1}, $\xi_2(x_{0,2},t)=p_2$. 
Thus, $\gamma_2(\xi_2(p_2,\sigma))=S_2(t\sigma)$ and  $y'_2=y_2(S_2(t\sigma),0)$.
Furthermore, since  $\tilde{\delta}_1(\tilde{q}_{0,1},\mu_1 e)=(q'_1,x'_1,n'_1,\phi'_1,y'_1,m'_1, p'_{1})$,
by (\ref{Eq3}), $q'_1=q_1$, $x'_1=(\sigma_2,n_2) \cdots (\sigma_k,n_k)$, $n'_1=0$, $\phi_1'=\phi_1$, $y'_1=y_1(\gamma_1(\xi_1(p_1,\sigma)),0)$, and $m_1'=m_1$.
Therefore,  $x'_1=x_2'$ and $n'_1=n_2'=0$.
Meanwhile, since $f^{-1}(\psi^f(\mu_1))=f^{-1}(\psi^f(\mu_2))$,  $f^{-1}(\psi^f(\mu_1))=t$.
By Proposition \ref{Prop1}, $p_1=\xi_1(x_{0,1},t)$.
Hence, $\gamma_1(\xi_1(p_1,\sigma))=S_1(t\sigma)$ and $y'_1=y_1(S_1(t\sigma),0)$.
Sicne $y'_1=y_1(S_1(t\sigma),0)$, $y'_2=y_2(S_2(t\sigma),0)$, $y_1 \preceq y_2$, and $S_1(t\sigma) \subseteq  S_2(t\sigma)$, we have $y_1'\preceq y_2'$.
Meanwhile, by the induction hypothesis, $q_2=q_1$, $\phi_1 \subseteq \phi_2$, and $m_2=m_1$.
Since $q'_1=q_1$, $\phi_1'=\phi_1$, $m_1'=m_1$, $q'_2=q_2$, $\phi_2'=\phi_2$, and $m_2'=m_2$, we have $q'_2=q'_1$, $\phi'_1 \subseteq \phi'_2$, and $m'_2=m'_1$.
Overall, we have $q'_2=q'_1$, $x'_2=x'_1$, $n'_2=n'_1$, $\phi'_1\subseteq \phi'_2$, $y'_1 \preceq y'_2$, and $m'_2=m'_1$.
Since  $\psi(\mu_1)=\psi(\mu_2)$ and $\psi^f(\mu_1)=\psi^f(\mu_2)$, by definitions, $\psi(\mu_1 e)=\psi(\mu_1)=\psi(\mu_2)=\psi(\mu_2 e)$ and $\psi^f(\mu_1 e)=\psi^f(\mu_1)e=\psi^f(\mu_2)e=\psi^f(\mu_2 e)$.

\emph{Case 4:}  $e=c(i) \in \Sigma^c$. 
Since $\mu_1c(i) \in \mathcal{L}(G_{S_1})$, by (\ref{Eq4}), $y_1=(\pi_1,l_1)\cdots (\pi_h,l_h) \neq\varepsilon$, $i \in [1,h]$, and $m_1+1\le N_{l,c}$.
Since $y_1 \preceq y_2$, by definition, $y_2=(\pi'_1,l'_1) \cdots (\pi'_h, l'_h)$ such that $\pi_i \subseteq \pi_i'$ and $l_i=l_i'$ for all $i \in [1,h]$.
Then, by (\ref{Eq4}), we have $\mu_2c(i) \in \mathcal{L}(G_{S_2})$.
We write $\tilde{\delta}_2(\tilde{q}_{0,2},\mu_2 e)=(q'_2,x'_2,n'_2,\phi'_2,y'_2,m'_2,p'_2)$.
By (\ref{Eq4}), $q'_2=q_2$, $x'_2=x_2$, $n'_2=n_2$, $\phi_2'=\phi_2$, $m_2'=m_2+1$, and $$y'_2=(\pi'_1,l'_1) \cdots (\pi'_{i-1},l'_{i-1})(\pi'_{i+1},l'_{i+1})\cdots (\pi'_h,l'_h).$$
Since  $\tilde{\delta}_1(\tilde{q}_{0,1},\mu_1 e)=(q'_1,x'_1,n'_1,\phi'_1,y'_1,m'_1,p'_{1})$,
by (\ref{Eq4}),  $q'_1=q_1$, $x'_1=x_1$, $n'_1=n_1$, $\phi_1'=\phi_1$,  $m_1'=m_1+1$, and $$y'_1=(\pi_1,l_1) \cdots (\pi_{i-1},l_{i-1})(\pi_{i+1},l_{i+1})\cdots (\pi_h,l_h).$$
Since $\pi_i \subseteq \pi_i'$ and $l_i=l_i'$ for all $i \in [1,h]$,  $y'_1 \preceq y'_2$.
Since $m_1=m_2$, $m'_1=m_1+1$, and $m'_2=m_2+1$,  $m'_1=m'_2$.
Moreover, by the induction hypothesis, $q_2=q_1$, $x_2=x_1$, $n_2=n_1$, and $\phi_1\subseteq \phi_2$.
Sicne $q'_1=q_1$, $x'_1=x_1$, $n_1'=n_1$, $\phi_1'=\phi_1$, $q'_2=q_2$, $x'_2=x_2$, $n_2'=n_2$, and $\phi_2'=\phi_2$, we have $q'_2=q'_1$, $x'_2=x'_1$, $n_2'=n_1'$, and $\phi'_1 \subseteq \phi'_2$.
Overall, we have $q'_2=q'_1$, $x'_2=x'_1$, $n'_2=n'_1$, $\phi'_1\subseteq \phi'_2$, $y'_1 \preceq y'_2$, and $m'_2=m'_1$.
Since  $\psi(\mu_1)=\psi(\mu_2)$ and $\psi^f(\mu_1)=\psi^f(\mu_2)$, by definitions, $\psi(\mu_1 e)=\psi(\mu_1)=\psi(\mu_2)=\psi(\mu_2 e)$ and $\psi^f(\mu_1 e)=\psi^f(\mu_1)=\psi^f(\mu_2)=\psi^f(\mu_2 e)$.

\emph{Case 5:} $e=g(\pi) \in \Sigma^g$. 
Since $\mu_1g(\pi) \in \mathcal{L}(G_{S_1})$, by (\ref{Eq55}), $y_1 \neq \varepsilon$.
Let us write $y_1=(\pi_1,l_1)\cdots (\pi_h,l_h)$ for $\pi_i\in \Pi$ and $l_i \le N_c$.
Then, $\pi=\pi_1$.
Since $y_1 \preceq y_2$,  $y_1=(\pi'_1,l'_1)\cdots (\pi'_h,l'_h)$ with $\pi_i \subseteq \pi_i'$ and $l_i=l_i'$.
By (\ref{Eq55}), $\mu_2e' \in \mathcal{L}(G_{S_2})$ with $e'=g(\pi_1')$.
We write $\tilde{\delta}_2(\tilde{q}_{0,2},\mu_2 e')=(q'_2,x'_2,n'_2,\phi'_2,y'_2,m'_2,p'_2)$.
By (\ref{Eq55}), $q'_2=q_2$, $x'_2=x_2$, $n'_2=n_2$, $\phi_2'=\pi_1'$, $y'_2=(\pi'_2,l'_2)\cdots (\pi'_h,l'_h)$, and $m_2'=0$.
Furthermore, since  $\tilde{\delta}_1(\tilde{q}_{0,1},\mu_1 e)=(q'_1,x'_1,n'_1,\phi'_1,y'_1, m'_1, p'_{1})$,
by (\ref{Eq55}), we have $q'_1=q_1$, $x'_1=x_1$, $n'_1=n_1$, $\phi_1'=\pi_1$, $y'_1=(\pi_2,l_2)\cdots (\pi_h, l_h)$, and $m_1'=0$.
Since $\pi_1\subseteq \pi_2$, we have $\phi_1'\subseteq \phi_2'$.
Since  $y'_1=(\pi_2,l_2)\cdots (\pi_h, l_h)$ and $y'_2=(\pi'_2,l'_2)\cdots (\pi'_h,l'_h)$, we have $y'_1 \preceq y'_2$.
Since $m_1'=0$ and $m_2'=0$, we know $m_1'=m_2'$.
Meanwhile, by the induction hypothesis, $q_2=q_1$, $x_2=x_1$, and  $n_2=n_1$.
Since $q'_1=q_1$, $x'_1=x_1$, $n'_1=n_1$, $q'_2=q_2$, $x'_2=x_2$, and $n'_2=n_2$, we have $q'_1=q'_2$, $x'_1=x'_2$, and $n'_1=n'_2$.
Overall, we have $q'_1=q'_2$, $x'_1=x'_2$, $n'_1=n'_2$, $\phi_1'\subseteq \phi_2'$, $y'_1 \preceq y'_2$, and $m_1'=m_2'$.
Since  $\psi(\mu_1)=\psi(\mu_2)$ and $\psi^f(\mu_1)=\psi^f(\mu_2)$, by definitions, $\psi(\mu_1 e)=\psi(\mu_1)=\psi(\mu_2)=\psi(\mu_2 e')$ and $\psi^f(\mu_1 e)=\psi^f(\mu_1)=\psi^f(\mu_2)=\psi^f(\mu_2 e')$.

By the above proof, for all $\mu_1 \in \mathcal{L}(G_{S_1})$, $\exists \mu_2 \in \mathcal{L}(G_{S_2})$ such that $\psi(\mu_1)=\psi(\mu_2)$.
Hence, $\psi(\mathcal{L}(G_{S_1}))\subseteq \psi(\mathcal{L}(G_{S_2}))$.
By Definition \ref{Def1}, $\mathcal{L}(S_1/G) \subseteq \mathcal{L}(S_2/G)$.
\end{proof}

\subsection{Proof of Proposition \ref{Prop3}}

\begin{proof}
Let us first define
\begin{align*}
& \mathcal{T}(t)=\{(q,\theta_o,\theta_c): (\exists \mu \in \mathcal{L}(G_S))f^{-1}(\psi^f(\mu))=t \wedge \tilde{\delta}(\tilde{q}_0,\mu)\\
& =(a,x,n,\phi,y,m,p)\wedge q=a \wedge \theta_o=(x,n) \wedge \theta_c=(\phi,y,m)\}.
\end{align*}
Next, we prove $\tilde{\mathcal{E}}_S(t)\subseteq \mathcal{T}(t)$ by induction on the length of  $t \in f^{-1}(\psi^f(\mathcal{L}(G_S)))$.

\textbf{Base case:} For any $z=(q,\theta_c,\theta_o) \in \tilde{\mathcal{E}}_S(\varepsilon)$, by the definition of $\textup{DUR}(\cdot)$,  there exists a sequence of augmented states $z_0z_1 \cdots z_k$ such that $z_0=(q_0,(\varepsilon,0),(S(\varepsilon),\varepsilon,0))$, $z_k=z$, and $z_i$ is the augmented state calculated by applying one of operations in (\ref{Eq8})$\sim$(\ref{Eq11}) on $z_{i-1}$, $i=1,\ldots,k$.
Since the initial state of $G_S$ is $z_0=(q_0, \varepsilon, 0, S(\varepsilon), \varepsilon, 0, x_0)$, by the definition of $\mathcal{T}(\cdot)$, $z_0 \in {\mathcal{T}}(\varepsilon)$.
Next, we introduce the following claim.
\begin{claim}\label{Cla1}
For any $t \in f^{-1}(\psi^f(\mathcal{L}(G_S)))$, if $z \in \mathcal{T}(t)$ and $z'$ is the augmented state obtained by applying one of operations in (\ref{Eq8})$\sim$(\ref{Eq11}) on $z$, then $z' \in \mathcal{T}(t)$.
\end{claim}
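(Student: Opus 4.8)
The plan is to prove the claim by exhibiting, for each of the four unobservable operations, an explicit witness string extending the one already guaranteed by $z\in\mathcal{T}(t)$. Unpacking the definition of $\mathcal{T}(t)$, there is a $\mu\in\mathcal{L}(G_S)$ with $f^{-1}(\psi^f(\mu))=t$ and $\tilde{\delta}(\tilde{q}_0,\mu)=(a,x,n,\phi,y,m,p)$ such that $z=(q,\theta_o,\theta_c)$ decodes this state, i.e. $q=a$, $\theta_o=(x,n)$, and $\theta_c=(\phi,y,m)$. The key observation is that each operation is the state-estimate shadow of exactly one transition rule of $G_S$ on an event $e\in\tilde{\Sigma}\setminus\Sigma^f$: (\ref{Eq8}) mirrors (\ref{Eq1}) on some $\sigma\in\Sigma$; (\ref{Eq9}) mirrors (\ref{Eq55}) on $g(\pi_1)$, where $\pi_1$ is the head of $[\theta_c]_2$; (\ref{Eq10}) mirrors (\ref{Eq2}) on $o(i)$; and (\ref{Eq11}) mirrors (\ref{Eq4}) on $c(i)$. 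In each case I would take the witness to be $\mu e$ for the matching $e$.

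For each of the four cases I would carry out three checks in order. First, \emph{definedness}: I would show $\mu e\in\mathcal{L}(G_S)$. This is immediate, because the precondition under which the operation was applied to $z$ (for instance $\delta(q,\sigma)!\wedge\sigma\in[\theta_c]_1\wedge\textbf{NUM}([\theta_o]_1^+)\le N_o\wedge\textbf{NUM}([\theta_c]_2^+)\le N_c$ for (\ref{Eq8})) becomes, after substituting $[\theta_o]_1=x$, $[\theta_c]_1=\phi$, and $[\theta_c]_2=y$, verbatim the guard of the corresponding $G_S$ rule evaluated at $(a,x,n,\phi,y,m,p)$. Second, \emph{observation invariance}: since $e\notin\Sigma^f$ and $\Sigma,\Sigma^o,\Sigma^c,\Sigma^g$ are each disjoint from $\Sigma^f$, we have $\psi^f(\mu e)=\psi^f(\mu)$, hence $f^{-1}(\psi^f(\mu e))=t$. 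Third, \emph{decoding}: I would compare the seven-tuple update of the $G_S$ rule with the triplet update of the operation and confirm they agree component by component; e.g. for (\ref{Eq8}) the rule (\ref{Eq1}) yields $q'=\delta(a,\sigma)$, $x'=x^+(\sigma,0)$ or $x^+$, $y'=y^+$, with $n,\phi,m$ unchanged, which decodes to exactly $(\delta(q,\sigma),\textbf{IN}^{obs}(\theta_o,\sigma),\textbf{PLUS}(\theta_c))=z'$. The three remaining cases collapse identically against (\ref{Eq55}), (\ref{Eq2}), and (\ref{Eq4}) to the operators $\textbf{OUT}^{ctr}$, $\textbf{LOSS}^{obs}$, and $\textbf{LOSS}^{ctr}$, respectively. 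Thus $\mu e$ witnesses $z'\in\mathcal{T}(t)$.

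I do not expect a genuine obstacle; the difficulty is purely bookkeeping. The one point demanding care is the definedness direction of each guard: I must verify that the operation's applicability precondition is not merely sufficient but \emph{literally identical}, after the $\theta_o,\theta_c$ decoding, to the guard of the matching $G_S$ rule, so that applicability of the operation \emph{guarantees} that the $G_S$ transition exists. A secondary subtlety is that in case (\ref{Eq9}) the operation deterministically fires the head of the control queue, so the matching event must be chosen as $g(\pi_1)$ with $\pi_1$ the first control command in $[\theta_c]_2$; by (\ref{Eq55}) this is precisely the $\pi$ for which the transition is defined, and the execution incurs no loss so no $N_{l,c}$ bound is involved.
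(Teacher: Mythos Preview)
Your proposal is correct and follows essentially the same approach as the paper's proof: both start from a witness $\mu\in\mathcal{L}(G_S)$ for $z\in\mathcal{T}(t)$, split into four cases according to which of operations (\ref{Eq8})--(\ref{Eq11}) was applied, and in each case extend $\mu$ by the corresponding $G_S$-event $e\in\{\sigma,g(\pi_1),o(i),c(i)\}$, checking that the guard of the $G_S$ rule is met, that $f^{-1}(\psi^f(\mu e))=t$, and that the resulting seven-tuple decodes to $z'$. Your identification of the $g(\pi_1)$ subtlety for (\ref{Eq9}) is exactly how the paper handles it as well.
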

We prove Claim \ref{Cla1} in Appendix D.
Since $z_0 \in {\mathcal{T}}(\varepsilon)$, by recursively applying Claim \ref{Cla1},  $z=z_k \in {\mathcal{T}}(\varepsilon)$.
Therefore, the base case holds, i.e., $\tilde{\mathcal{E}}_S(\varepsilon) \subseteq \mathcal{T}(\varepsilon)$.

\textbf{Induction hypothesis:} For all $t \in f^{-1}(\psi^f(\mathcal{L}(G_S)))$ with $|t|\le n$, we have $\tilde{\mathcal{E}}_S(t)\subseteq \mathcal{T}(t)$.
We next prove the same is also true for $t\sigma \in f^{-1}(\psi^f(\mathcal{L}(G_S)))$.

If $\tilde{\mathcal{E}}_S(t\sigma)=\emptyset$, then $\tilde{\mathcal{E}}_S(t\sigma)\subseteq \mathcal{T}(t\sigma)$ is trivially true.
Otherwise, for any $z \in \tilde{\mathcal{E}}_S(t\sigma)$, by the definitions of $\textup{DOR}(\cdot)$ and $\textup{DUR}(\cdot)$, there exists a sequence of augmented states $z_0z_1 \cdots z_k$ such that (i) $z_0$ is the augmented state calculated by applying successively (\ref{Eq12}) and (\ref{Eq7}) on a $(q,\theta_o,\theta_c) \in \tilde{\mathcal{E}}_S(t)$, i.e., $z_0=(q, \textbf{OUT}^{obs}(\theta_o),\textbf{IN}^{ctr}(\theta_c,S(t\sigma)))$, (ii) $z_k=z$, and (iii) $z_i$ is the augmented state calculated by applying one of operations in (\ref{Eq8})$\sim$(\ref{Eq11}) on $z_{i-1}$, $i=1,\ldots,k$.

Next, we prove $z_0 \in {\mathcal{T}}(t\sigma)$.
Since $(q,\theta_o,\theta_c) \in \tilde{\mathcal{E}}_S(t) \subseteq \mathcal{T}(t)$, there exists $\mu \in \mathcal{L}(G_S)$ with $f^{-1}(\psi^f(\mu))=t$ and $\tilde{\delta}(\tilde{q}_0,\mu)=(a,x,n,\phi,y,m,p)$ such that $q=a$, $\theta_o=(x,n)$, and $\theta_c=(\phi,y,m)$.
Since $\textbf{OUT}^{obs}(\theta_o)$ is defined, we have $x \neq \varepsilon$.
We write $x=(\sigma_1,n_1)\cdots (\sigma_k,n_k)$ for $\sigma_i\in \Sigma_o$ and $n_i \in [0,N_o]$. 
Then, by (\ref{Eq3}), $\tilde{\delta}(\tilde{q}_0,\mu f(\sigma))=(a',x',n',\phi',y',m',p')$, where $a'=a$, $x'=(\sigma_2,n_2)\cdots (\sigma_k,n_k)$,  $n'=0$, $\phi'=\phi$, $y'=y(\gamma(\xi(p,\sigma)),0)$, $m'=m$, and $p'=\xi(p,\sigma)$.
Since $f^{-1}(\psi^f(\mu f(\sigma)))=t\sigma$, by the definition of $\mathcal{T}(\cdot)$,  $(a',(x',n'),(\phi',y',m'))\in \mathcal{T}(t\sigma)$.
Since $f^{-1}(\psi^f(\mu))=t$, by Proposition \ref{Prop1},  $p=\xi(x_0,t)$.
Hence, $\xi(p,\sigma)=\xi(x_0,t\sigma)$.
By definition, $\gamma(\xi(p,\sigma))=S(t\sigma)$.
Therefore, $y'=y(\gamma(\xi(p,\sigma)),0)=y(S(t\sigma),0)$. 
Moreover, since $a'=q$, $\phi'=\phi$, $n'=0$, and $m'=m$, we have $$(a',(x',n'),(\phi',y',m'))=(q,(x',0),(\phi,y(S(t\sigma),0),m))\in \mathcal{T}(t\sigma).$$

Since $\theta_o=(x,n)$, by the definition of $\textbf{OUT}^{obs}(\cdot)$,  $\theta'_o=(x',0)$.
Since $\theta_c=(\phi,y,m)$, by the definition of $\textbf{IN}^{ctr}(\cdot)$, $\textbf{IN}^{ctr}(\theta_c,S(t\sigma))=(\phi,y(S(t\sigma),0),m)$.
Therefore, 
\begin{align*}
z_0&=(q, \textbf{OUT}^{obs}(\theta_o),\textbf{IN}^{ctr}(\theta_c,S(t\sigma)))\\
&=(q,(x',0),(\phi,y(S(t\sigma),0),m))\in \mathcal{T}(t\sigma).
\end{align*}
Since $z_0 \in {\mathcal{T}}(t\sigma)$, by recursively applying  Claim \ref{Cla1},  $z=z_k \in {\mathcal{T}}(t\sigma)$.
Therefore, $\tilde{\mathcal{E}}_S(t\sigma) \subseteq \mathcal{T}(t\sigma)$.
That completes the proof.
\end{proof}

\subsection{Proof of Claim \ref{Cla1}}

\begin{proof}

We write $z=(q,\theta_{o},\theta_{c})$ and $z'=(q',\theta'_{o},\theta'_{c})$.
Since $z \in \mathcal{T}(t)$, there exists a $\mu \in \mathcal{L}(G_S)$ such that $f^{-1}(\psi^f(\mu))=t$ and $\tilde{\delta}(\tilde{q}_0,\mu)=\tilde{q}=(a,x,n,\phi,y,m,p)$ with $q=a$, $\theta_{o}=(x,n)$, and $\theta_{c}=(\phi,y,m)$.
Since $z'$ is the augmented state obtained by applying one of operations in (\ref{Eq8})$\sim$(\ref{Eq11}) on $z$, one of the following four cases must be true.

\emph{{Case 1:}} 
$z'=(\delta(q,\sigma), \textbf{IN}^{obs}(\theta_{o},\sigma),\textbf{PLUS}(\theta_{c}))$.
By (\ref{Eq8}), there exists a $\sigma \in \Sigma$ with $\delta(q,\sigma)!$, $\sigma \in [\theta_{c}]_1$, $\textbf{NUM}([\theta_{o}]_1^+) \le N_o$, $\textbf{NUM}([\theta_{c}]_2^+) \le N_c$.  
Since $q=a$, $[\theta_{c}]_1=\phi$, $[\theta_{o}]_1=x$, and $[\theta_{c}]_2=y$, we have $\delta(a,\sigma)!$,  $\sigma \in \phi$, $\textbf{NUM}(x^+)\le N_o$, and $\textbf{NUM}(y^+) \le N_c$.
By (\ref{Eq1}), $\sigma$ is defined at $\tilde{q}$ in $G_S$.
We write $\tilde{\delta}(\tilde{q},\sigma)=\tilde{\delta}(\tilde{q}_0,\mu\sigma)=(a',x',n',\phi',y',m',p')$.
By (\ref{Eq1}), we have  $a'=\delta(a,\sigma)$, $x'=x^+(\sigma,0)$ if $\sigma \in \Sigma_o$, $x'=x^+$ if $\sigma \in \Sigma_{uo}$,  $n'=n$, $\phi'=\phi$,  $y'=y^+$, $m'=m$, and $p'=p$.
Since $\sigma \in \Sigma$, we have $f^{-1}(\psi^f(\mu\sigma))=f^{-1}(\psi^f(\mu))=t$.
By the definition of $\mathcal{T}(\cdot)$, we know if $\sigma \in \Sigma_o$, $(a',(x', n'),(\pi', y',m'))=(\delta(a,\sigma),(x^+(\sigma,0),n),(\phi, y^+,m))\in \mathcal{T}(t)$, and if $\sigma \in \Sigma_{uo}$, $(a',(x', n'),(\pi', y',m'))=(\delta(a,\sigma),(x^+,n),(\phi, y^+,m))\in \mathcal{T}(t)$.
By the definition of $\textbf{IN}^{obs}(\cdot)$,  if $\sigma \in \Sigma_o$, $\textbf{IN}^{obs}(\theta_{o},\sigma)=(x^+(\sigma,0),n)$, and if $\sigma \in \Sigma_{uo}$, $\textbf{IN}^{obs}(\theta_{o},\sigma)=(x^+,n)$.
By the definition of $\textbf{PLUS}(\cdot)$, $\textbf{PLUS}(\theta_{c})=(\phi, y^+,m)$.
Moreover, since $q=a$,  $z'=(\delta(q,\sigma), \textbf{IN}^{obs}(\theta_{o},\sigma),\textbf{PLUS}(\theta_{c}))\in \mathcal{T}(t)$.

\emph{{Case 2:}}  
$z'=(q, \theta_{o},\textbf{OUT}^{ctr}(\theta_{c}))$.
By (\ref{Eq9}), $[\theta_{c}]_2=y \neq \varepsilon$.
We write $y=(\pi_1,m_1)\cdots (\pi_h,m_h)$ for $\pi_i \in \Pi$ and $m_i \le N_c$.
By (\ref{Eq55}), $g(\pi_1)$ is defined at $\tilde{q}$ in $G_S$.
We write $\tilde{\delta}(\tilde{q},g(\pi_1))=\tilde{\delta}(\tilde{q}_0,\mu\sigma)=(a',x',n',\phi',y',m',p')$.
By (\ref{Eq55}), $a'=a$, $x'=x$, $n'=n$, $\phi'=\pi_1$, $y'=(\pi_2,m_2)\cdots (\pi_h,m_h)$, $m'=0$, and $p'=p$.
Since $g(\pi_1) \in \Sigma_g$, $f^{-1}(\psi^f(\mu g(\pi_1)))=f^{-1}(\psi^f(\mu))=t$.
By the definition of $\mathcal{T}(\cdot)$, we have $(a',(x', n'),(\pi', y',m'))=(a,(x, n),(\pi_1, y',0))\in \mathcal{T}(t)$.
Sicne $\theta_c=(\phi,y,m)$, by the definition of $\textbf{OUT}^{ctr}(\cdot)$,  $\textbf{OUT}^{ctr}(\theta_{c})=(\pi_1,y',0)$.
Moreover, since $q=a$ and $\theta_o=(x,n)$, we have $z'=(q, \theta_{o},\textbf{OUT}^{ctr}(\theta_{c}))=(q,(x, n),(\pi_1, y',0)) \in \mathcal{T}(t)$.

\emph{{Case 3:}} 
$z'=(q, \textbf{LOSS}^{obs}(\theta_{o},i),\theta_{c})$.
By (\ref{Eq10}), $[\theta_{o}]_1=x \neq \varepsilon$, $[\theta_{o}]_2 +1=n+1 \le N_{l,o}$, and $i \in [1, |x|]$. 
We write $x=(\sigma_1,n_1)\cdots (\sigma_k,n_k)$ for $\sigma_i\in \Sigma_o$ and $n_i \le N_o$.
Since $i \in [1, |x|]=[1,k]$, by (\ref{Eq2}), $o(i)$ is defined at $\tilde{q}$ in $G_S$.
Let us write $\tilde{\delta}(\tilde{q},o(i))=\tilde{\delta}(\tilde{q}_0,\mu o(i))=(a',x',n',\phi',y',m',p')$.
By (\ref{Eq2}), we have $a'=a$, $x'=(\sigma_1,n_1)\cdots (\sigma_{i-1},n_{i-1}) (\sigma_{i+1},n_{i+1})\cdots (\sigma_k,n_k),$  $n'=n+1$, $\phi'=\phi$, $y'=y$, $m'=m$, and $p'=p$. 
Since $o(i) \in \Sigma^o$, we have  $f^{-1}(\psi^f(\mu o(i)))=f^{-1}(\psi^f(\mu))=t$.
By the definition of $\mathcal{T}(\cdot)$, $(a',(x', n'), (\phi', y', m'))=(a,(x', n+1), (\phi, y, m))\in \mathcal{T}(t)$.
Since $\theta_{o}=(x,n)$, by the definition of $\textbf{LOSS}^{obs}(\cdot)$, $\textbf{LOSS}^{obs}(\theta_{o},i)=(x',n+1)$.
Moreover, since $q=a$ and $\theta_c=(\phi, y, m)$, we have $z'=(q, \textbf{LOSS}^{obs}(\theta_{o},i),\theta_{c})=(a,(x', n+1), (\phi, y, m)) \in \mathcal{T}(t)$.

\emph{{Case 4:}}
$z'=(q, \theta_{o}, \textbf{LOSS}^{ctr}(\theta_{c},i))$.
By (\ref{Eq11}),  $[\theta_{c}]_2=y  \neq \varepsilon$, $[\theta_{c}]_3+1=m+1 \le N_{l,c}$, and $i \in [1, |y|]$. 
We write $y=(\pi_1,m_1)\cdots (\pi_h,m_h)$ for $\pi_i \in \Pi$ and $m_i \le N_c$.
Since $i \in [1,|y|]=[1,h]$, by (\ref{Eq4}),  $c(i)$ is defined at $\tilde{q}$ in $G_S$.
We write $\tilde{\delta}(\tilde{q},c(i))=\tilde{\delta}(\tilde{q}_0,\mu c(i))=(a',x',n',\phi',y',m',p')$.
By (\ref{Eq4}), we have $a'=a$, $x'=x$, $n'=n$, $\phi'=\phi$, $m'=m+1$, $p'=p$, and $y'=(\pi_1,m_1)\cdots (\pi_{i-1},m_{i-1}) (\pi_{i+1},m_{i+1})\cdots (\pi_h,m_h)$. 
Since $c(i) \in \Sigma^c$,   $f^{-1}(\psi^f(\mu c(i)))=f^{-1}(\psi^f(\mu))=t$.
By the definition of $\mathcal{T}(\cdot)$,  $(a',(x', n'),(\phi', y', m'))=(a,(x,n),(\phi,y',m+1))\in \mathcal{T}(t)$.
By the definition of $\textbf{LOSS}^{ctr}(\cdot)$, $\textbf{LOSS}^{ctr}(\theta_{c},i)=(\phi,y',m+1)$.
Hence,  $\textbf{LOSS}^{ctr}(\theta_{o},i)=(\phi,y',m+1)$.
Moreover, since $q=a$ and $\theta_o=(x,n)$, it has $z'=(q, \theta_{o}, \textbf{LOSS}^{ctr}(\theta_{c},i))=(a,(x,n),(\phi,y',m+1)) \in \mathcal{T}(t)$.
\end{proof}

\subsection{Proof of Proposition \ref{Prop4}}
\begin{proof}
The proof is by induction on the length of $\mu \in \mathcal{L}(G_S)$.

\textbf{Base case:} By definition, $\tilde{\delta}(\tilde{q}_0,\varepsilon)=(q_0,\varepsilon, 0, S(\varepsilon),\varepsilon, 0, x_0)$.
Since $\tilde{\mathcal{E}}_S(\varepsilon)=\text{DUR}(\emptyset,S(\varepsilon))$, by (\ref{Eq7-1}), $(q_0,(\varepsilon,0),(S(\varepsilon),\varepsilon,0)) \in \tilde{\mathcal{E}}_S(\varepsilon)$.
The base case is true.

\textbf{Induction hypothesis:} For all $\mu \in \mathcal{L}(G_S)$ with $|\mu| \le k$, we write  $\tilde{\delta}(\tilde{q}_0,\mu)=\tilde{q}=(a,x,n,\phi,y,m,p)$.
We also write $f^{-1}(\psi^f(\mu))=t$.
Then, $(q,\theta_c,\theta_o) \in \tilde{\mathcal{E}}_S(f^{-1}(\psi^f(\mu)))=\tilde{\mathcal{E}}_S(t)$, where $q=a$, $\theta_o=(x,n)$, and $\theta_c=(\phi,y,m)$.
We next prove the same is also true for $\mu e \in \mathcal{L}(G_S)$.

For any $\mu e \in \mathcal{L}(G_S)$ with $|\mu| = k$, let $\tilde{\delta}(\tilde{q}_0,\mu e)=\tilde{q}'=(a',x',n',\phi',y',m',p')$.
Since $\tilde{\delta}(\tilde{q}_0,\mu)=\tilde{q}$, we have $\tilde{\delta}(\tilde{q},e)=\tilde{q}'$.
Let $q'=a'$, $\theta'_o=(x',n')$, and $\theta'_c=(\phi',y',m')$.
We next prove $(q',\theta'_o,\theta'_c) \in \tilde{\mathcal{E}}_S(f^{-1}(\psi^f(\mu e)))$.
Since $e\in \tilde{\Sigma}$, one of the following cases must be true: (i) $e\in \Sigma$, (ii) $e \in \Sigma^o$, (iii) $e \in \Sigma^c$, (iv) $e \in \Sigma^g$, and (v) $e \in \Sigma^f$.
We consider each of them separately as follows.

\emph{Case 1:} $e \in \Sigma$. Since $e \in \Sigma$, $f^{-1}(\psi^f(\mu))=f^{-1}(\psi^f(\mu e))=t$.
Thus, to prove $(q',\theta'_o,\theta'_c)\in \tilde{\mathcal{E}}_S(f^{-1}(\psi^f(\mu e)))$, we only need to prove $(q',\theta'_o,\theta'_c)\in \tilde{\mathcal{E}}_S(t)$ as follows.

Since $\tilde{\delta}(\tilde{q},e)=\tilde{q}'$,  by (\ref{Eq1}),  we have the following two results: (i) $\delta(a,e)!$, $e \in \phi$, $\textbf{NUM}(x^+)\le N_o$, and $\textbf{NUM}(y^+) \le N_c$; and (ii) $a'=\delta(a,e)$,  $x'=x^+$ if $e \in \Sigma_{uo}$ and $x'=x^+(e,0)$ if $e \in \Sigma_{o}$, $n'=n$, $\phi'=\phi$, $y'=y^+$, and $m'=m$.
Since $\theta_o=(x,n)$ and $\theta_c=(\phi,y,m)$, we have $\delta(a,e)!$, $e \in [\theta_c]_1$, $\textbf{NUM}([\theta_o]_1^+) \le N_o$, and $\textbf{NUM}([\theta_c]_2^+) \le N_c$. 
Moreover, since $(q=a,\theta_o,\theta_c) \in \tilde{\mathcal{E}}_S(t)$, by (\ref{Eq8}), $(\delta(a,e), \textbf{IN}^{obs}(\theta_o,e),  \textbf{PLUS}(\theta_c)) \in \tilde{\mathcal{E}}_S(t)$.
We next prove $(q',\theta'_o,\theta'_c)=(\delta(a,e), \textbf{IN}^{obs}(\theta_o,e),  \textbf{PLUS}(\theta_c))$ by considering the cases of $e \in \Sigma_{uo}$ and $e \in \Sigma_o$ separately as follows.
If $e \in \Sigma_{uo}$, $(q',\theta'_o,\theta'_c)=(\delta(a,e), (x^+,n),(\phi, y^+, m))$, and otherwise if $e \in \Sigma_{o}$, $(q',\theta'_o,\theta'_c)=(\delta(a,e), (x^+(e,0),n),(\phi, y^+, m))$.
Since $\theta_o=(x,n)$ and $\theta_c=(\phi,y,m)$, if  $e \in \Sigma_{uo}$, $\textbf{IN}^{obs}(\theta_o,e)=(x^+,n)$ and $\textbf{PLUS}(\theta_c)=(\phi,y^+,m)$, and if $e \in \Sigma_{o}$, $\textbf{IN}^{obs}(\theta_o,e)=(x^+(e,0),n)$ and $\textbf{PLUS}(\theta_c)=(\phi,y^+,m)$.
Thus, $(q',\theta'_o,\theta'_c)=(\delta(a,e), \textbf{IN}^{obs}(\theta_o,e),  \textbf{PLUS}(\theta_c))\in \tilde{\mathcal{E}}_S(t)$.


\emph{Case 2:} $e=o(i) \in \Sigma^o$. 
Since $\tilde{\delta}(\tilde{q},o(i))=\tilde{q}'$, by (\ref{Eq2}),  $x \neq\varepsilon$, $i\in [1,|x|]$, and $n+1 \le N_{l,o}$.
Write $x=(\sigma_1,n_1)\cdots (\sigma_k,n_k)$ for $\sigma_i\in \Sigma_o$ and $n_i \in [0,N_o]$.
Since $\tilde{q}'=(a',x',n',\phi',y',m',p')$, by (\ref{Eq2}), we have $a'=a$, $x'=(\sigma_1,n_1)\cdots (\sigma_{i-1},n_{i-1})(\sigma_{i+1},n_{i+1})\cdots (\sigma_k,n_k),$ $n'=n+1$, $\phi'=\phi$, $y'=y$, and $m'=m$.
Thus, we have $(q',\theta'_o,\theta'_c)=(a',(x',n'),(\phi', y', m'))=(a, (x',n+1),(\phi, y, m))$.
Since $\theta_o=(x,n)$, $x \neq \varepsilon$, $i \in [1,|x|]$, and $n+1\le N_{l,o}$, we have $\textbf{LOSS}^{obs}(\theta_o,i)=(x',n+1)$. 
Moreover, since $\theta_c=(\phi,y,m)$, $(q',\theta'_o,\theta'_c)=(a, (x',n+1),\theta_c)=(a,\textbf{LOSS}^{obs}(\theta_o,i), \theta_c)$.
Since $(q=a,\theta_o,\theta_c) \in \tilde{\mathcal{E}}_S(t)$, by (\ref{Eq10}),  $(a,\textbf{LOSS}^{obs}(\theta_o,i),  \theta_c) \in \tilde{\mathcal{E}}_S(t)$.
Thus, $(q',\theta'_o,\theta'_c) \in \tilde{\mathcal{E}}_S(t)$.
Since $e \in \Sigma^o$, $f^{-1}(\psi^f(\mu e))=f^{-1}(\psi^f(\mu))=t$. 
Thus,  $(q',\theta'_o,\theta'_c) \in \tilde{\mathcal{E}}_S(f^{-1}(\psi^f(\mu e)))$.

\emph{{Case 3:}} $e=f(\sigma) \in \Sigma^f$. 
Since $\tilde{\delta}(\tilde{q},f(\sigma))=\tilde{q}'$, by (\ref{Eq3}),  $x \neq \varepsilon$.
We write $x=(\sigma_1,n_1)\cdots (\sigma_k,n_k)$ for $\sigma_i\in \Sigma_o$ and $n_i \in [0,N_o]$.
Since $\tilde{q}'=(a',x',n',\phi',y',m',p')$, by (\ref{Eq3}),  $a'=a$, $x'=(\sigma_2,n_2) \cdots(\sigma_k,n_k)$, $n'=0$, $\phi'=\phi$, $y'=y(\gamma(\xi(p,\sigma)),0)$, and $m'=m$.
Thus, $(q',\theta'_o,\theta'_c)=(a',(x',n'),(\phi',y',m'))=(a, (x',0),(\phi, y', m))$.
Since $\tilde{\delta}(\tilde{q}_0,\mu)=\tilde{q}=(a,x,n,\phi,y,m,p)$, by Proposition \ref{Prop1}, $p=\xi(x_0,f^{-1}(\psi^f(\mu)))=\xi(x_0,t)$.
Since $p'=\xi(p,\sigma)$, we have $p'=\xi(x_0, t\sigma)$.
By the definition of $S$, we have $\gamma(p')=S(t\sigma)$, which implies $y'=y(S(t\sigma),0)$.
Since $\theta_o=(x,n)$, $\textbf{OUT}^{obs}(\theta_o)=(x', 0)$.
Meanwhile, since $\theta_c=(\phi,y,m)$, by the definition of $\textbf{IN}^{ctr}(\cdot)$, $\textbf{IN}^{ctr}(\theta_c, S(t\sigma))=(\phi, y(S(t\sigma),0), m)$.
Therefore, $(q',\theta'_o,\theta'_c)=(a, \textbf{OUT}^{obs}(\theta_o),\textbf{IN}^{ctr}(\theta_c,S(t\sigma)))$.
Since $(q=a,\theta_o,\theta_c) \in \tilde{\mathcal{E}}_S(t)$, by (\ref{Eq7}) and  (\ref{Eq12}), $(a, \textbf{OUT}^{obs}(\theta_o),\textbf{IN}^{ctr}(\theta_c,S(t\sigma)))\in \textup{DUR}(\textup{DOR}(\tilde{\mathcal{E}}_S(t),\sigma),S(t\sigma))$.
By Definition \ref{Def5}, we have $(a, \textbf{OUT}^{obs}(\theta_o),\textbf{IN}^{ctr}(\theta_c,S(t\sigma))) \in \tilde{\mathcal{E}}_S(t\sigma)$.
Thus, $(q',\theta'_o,\theta'_c) \in \tilde{\mathcal{E}}_S(t\sigma)$.
Since $e=f(\sigma) \in \Sigma^f$, we have  $f^{-1}(\psi^f(\mu e))=t\sigma$. 
Therefore, $(q',\theta'_o,\theta'_c) \in \tilde{\mathcal{E}}_S(f^{-1}(\psi^f(\mu e)))$.

\emph{{Case 4:}} $e=c(i) \in \Sigma^c$. 
Since $\tilde{\delta}(\tilde{q},c(i))=\tilde{q}'$, by (\ref{Eq4}), $y \neq\varepsilon$, $i \in [1,|y|]$, and $m+1 \le N_{l,c}$.
We write $y=(\pi_1,m_1)\cdots (\pi_h,m_h)$ for $\pi_i\in \Pi$ and $m_i \in [0,N_c]$.
Since $\tilde{q}'=(a',x',n',\phi',y',m',p')$, by (\ref{Eq4}),  $a'=a$, $x'=x$, $n'=n$, $\phi'=\phi$, $y'=(\pi_1,m_1)\cdots (\pi_{i-1},m_{i-1})(\pi_{i+1},m_{i+1}) \cdots (\pi_h,m_h),$ and $m'=m+1$.
Therefore, $(q',\theta'_o,\theta'_c)=(a',(x',n'),(\phi', y', m'))=(q, (x,n),(\phi, y', m+1))$.
Since $a'=a$ and $\theta_o=(x,n)$, we have $(q',\theta'_o,\theta'_c)=(a, \theta_o,(\phi, y', m+1))$.
Since $y \neq \varepsilon$, $i\in [1,|y|]$, and $m+1 \le N_{l,c}$, we have $\textbf{LOSS}^{ctr}(\theta_c,i)=(\phi, y', m+1)$.
Thus, $(q',\theta'_o,\theta'_c)=(a, \theta_o,(\phi, y', m+1))=(a, \theta_o, \textbf{LOSS}^{ctr}(\theta_{c},i))$.
Since $(q=a,\theta_o,\theta_c) \in \tilde{\mathcal{E}}_S(t)$, by (\ref{Eq11}),  $(a, \theta_o, \textbf{LOSS}^{ctr}(\theta_{c},i)) \in \tilde{\mathcal{E}}_S(t)$.
Thus,  $(q',\theta'_o,\theta'_c) \in \tilde{\mathcal{E}}_S(t)$.
Moreover, since $e \in \Sigma^c$, $f^{-1}(\psi^f(\mu e))=f^{-1}(\psi^f(\mu))=t$. 
Thus, $(q',\theta'_o,\theta'_c) \in \tilde{\mathcal{E}}_S(f^{-1}(\psi^f(\mu e)))$.

\emph{{Case 5:}} $e=g(\pi) \in \Sigma^g$. 
Since $\tilde{\delta}(\tilde{q},g(\pi))=\tilde{q}'$, by (\ref{Eq55}),  $y \neq\varepsilon$. 
We write $y=(\pi_1,m_1)\cdots (\pi_h,m_h)$ for $\pi_i\in \Pi$ and $m_i \in [0,N_c]$.
Since $\tilde{q}'=(a',x',n',\phi',y',m',p')$, by (\ref{Eq55}), $a'=a$, $x'=x$, $n'=n$, $\phi'=\pi_1$, $y'=(\pi_2,m_2)\cdots (\pi_h,m_h),$ and $m'=m$.
Thus, $$(q',\theta'_o,\theta'_c)=(a', (x',n'),(\pi', y', m'))=(a, (x,n),(\pi_1, y', m)).$$
Since $\theta_o=(x,n)$,  $(q',\theta'_o,\theta'_c)=(a, \theta_o,(\phi, y', m))$.
Meanwhile, since $y\neq\varepsilon$, $\textbf{OUT}^{ctr}(\theta_c)=(\pi_1, y', m)$.
Therefore, $(q',\theta'_o,\theta'_c)=(a, \theta_o, \textbf{OUT}^{ctr}(\theta_{c}))$.
Since $(q=a,\theta_o,\theta_c) \in \tilde{\mathcal{E}}_S(t)$, by (\ref{Eq9}), $(a, \theta_o, \textbf{OUT}^{ctr}(\theta_{c})) \in \tilde{\mathcal{E}}_S(t)$.
Thus, $(q',\theta'_o,\theta'_c) \in \tilde{\mathcal{E}}_S(t)$.
Moreover, since $e \in \Sigma^g$,  $f^{-1}(\psi^f(\mu e))=f^{-1}(\psi^f(\mu))=t$. 
Therefore, $(q',\theta'_o,\theta'_c) \in \tilde{\mathcal{E}}_S(f^{-1}(\psi^f(\mu e)))$.
\end{proof}

\subsection{Proof of Proposition \ref{Prop6}}

\begin{proof}
Let $T$ be the NBTS obtained by executing Lines 1 and 2.
Then, $T$ satisfies $\varphi_{safe}$ because all states that violate $\varphi_{safe}$ have been removed
by Line 2. 
Additionally, by the repeat-until loop on Line 3,  $T'$ is complete and satisfies $\varphi_{safe}$ .
Next, we show that $T'$ is the largest NBTS with the desired properties. 
The proof is by contradiction.
Assume that $T''$ is another NBTS satisfying $\varphi_{safe}$ and completeness that is strictly larger than $T'$. 
Since $T''$ satisfies $\varphi_{safe}$, $T''$ is a subgraph of $T$.
For any $Y$- and $Z$-states in $T''$ satisfying conditions 1) and 2) in Definition \ref{Def8} should also
satisfy these conditions in $T$.
In other words, all the $Y$- and $Z$-states in $T''$ will not be removed from $T$ after executing the repeat-until loop on Line 3.
Therefore, Algorithm 1 will converge to an NBTS that is strictly lager than $T'$ (at least as
large as $T''$). 
This contradicts the fact that Algorithm 1 converges to $T'$.
\end{proof}

\subsection{Proof of Theorem \ref{Theo2}}

\begin{proof}
We first prove that $\tilde{\mathcal{E}}_S(t)=I(IS_{S}^{Z}(t))$ for all $t \in f^{-1}(\psi^f(\mathcal{L}(G_S)))$.
The proof is by induction on the finite length of sequence $t \in f^{-1}(\psi^f(\mathcal{L}(G_S)))$.
By Definition \ref{Def5}, $\tilde{\mathcal{E}}_S(\varepsilon)=\textup{DUR}(\emptyset,S(\varepsilon))$.
Since $y_0=\emptyset$ and $S\in \mathbb{S}(T')$, we have $S(\varepsilon) \in C_{T'}(y_0)$.
By definition, $IS_{S}^{Z}(\varepsilon)=h^{T'}_{YZ}(y_0,S(\varepsilon))=(\textup{DUR}(y_0,S(\varepsilon)),S(\varepsilon))$.
Thus,  $\tilde{\mathcal{E}}_S(\varepsilon)=I(IS_{S}^{Z}(\varepsilon))$. 
The base case is true.
The induction hypothesis is that $\forall t \in f^{-1}(\psi^f(\mathcal{L}(G_S)))$ with $|t| \le n$,  $\tilde{\mathcal{E}}_S(t)=I(IS_{S}^{Z}(t))$. 
We now prove that the same is also true for $t\sigma \in f^{-1}(\psi^f(\mathcal{L}(G_S)))$ with $|t|=n$.
By Definition \ref{Def5}, $\tilde{\mathcal{E}}_S(t\sigma)=\textup{DUR}(\textup{DOR}(\tilde{\mathcal{E}}_S(t),\sigma),S(t\sigma))$.
Since $\textup{DOR}(\tilde{\mathcal{E}}_S(t),\sigma)\neq \emptyset$, there exists $(q,\theta_o,\theta_c) \in \tilde{\mathcal{E}}_S(t)$ such that $[\theta_o]_1=(\sigma_1,n_1)\cdots (\sigma_k,n_k)\neq \varepsilon$ and $\sigma=\sigma_1$.
By the induction hypothesis, $\tilde{\mathcal{E}}_S(t)=I(IS_{S}^{Z}(t))$. 
Moreover, since $T'$ is complete,  $h^{T'}_{ZY}(IS_{S}^{Z}(t),\sigma)=\textup{DOR}(I(IS_{S}^{Z}(t)),\sigma)=\textup{DOR}(\tilde{\mathcal{E}}_S(t),\sigma)=IS_{S}^{Y}(t\sigma)$.
By Definition \ref{Def10}, $S(t\sigma) \in C_{T'}(IS_{S}^{Y}(t\sigma))$.
Thus,
\begin{align*}
IS_{S}^{Z}(t\sigma)&=h^{T'}_{YZ}(IS_{S}^{Y}(t\sigma),S(t\sigma))\\
&=(\textup{DUR}(\textup{DOR}(\tilde{\mathcal{E}}_S(t),\sigma),S(t\sigma)),S(t\sigma)).
\end{align*}
Thus, $\tilde{\mathcal{E}}_S(t\sigma)=I(IS_{S}^{Z}(t\sigma))$.

Therefore, for all $t \in f^{-1}(\psi^f(\mathcal{L}(G_S)))$, $\tilde{\mathcal{E}}_S(t)=I(IS_{S}^{Z}(t))$.
By Theorem \ref{Theo1}, ${\mathcal{E}}_S(t)=\textup{FC}(\tilde{\mathcal{E}}_S(t))=\textup{FC}(I(IS_{S}^{Z}(t)))$.
\end{proof}

\end{document}